\numberwithin{algorithm}{section}
\definecolor{Darkblue}{rgb}{0,0,0.4}
\definecolor{Brown}{cmyk}{0,0.81,1.,0.60}
\definecolor{Purple}{cmyk}{0.45,0.86,0,0}
\newcommand{\mydriver}{hypertex}
 \renewcommand{\mydriver}{pdftex}
\newcommand{\lref}[2][]{\hyperref[#2]{#1~\ref*{#2}}}
\newcommand{\manuallabel}[2]{\def\@currentlabel{#2}\label{#1}}
\newtheorem{theorem}{Theorem}[section]
\newtheorem{definition}[theorem]{Definition}
\newtheorem{proposition}[theorem]{Proposition}
\newtheorem{conjecture}[theorem]{Conjecture}
\newtheorem{lemma}[theorem]{Lemma}
\newtheorem{fact}[theorem]{Fact}
\newtheorem{claim}[theorem]{Claim}
\newtheorem{corollary}[theorem]{Corollary}
\theoremstyle{definition}
\newcommand{\junk}[1]{}
\newcommand{\ignore}[1]{}
\newcommand{\R}[0]{{\ensuremath{\mathbb{R}}}}
\newcommand{\Z}[0]{{\ensuremath{\mathbb{Z}}}}
\newcommand{\poly}{\operatorname{poly}}
\newcommand{\eps}{\varepsilon}
\newcounter{note}[section]
\newcommand{\initOneLiners}{%
    \setlength{\itemsep}{0pt}
    \setlength{\parsep }{0pt}
    \setlength{\topsep }{0pt}
}
\newcommand{\squishlist}{
 \begin{list}{$\bullet$}
  { \setlength{\itemsep}{0pt}
     \setlength{\parsep}{3pt}
     \setlength{\topsep}{3pt}
     \setlength{\partopsep}{0pt}
     \setlength{\leftmargin}{1.5em}
     \setlength{\labelwidth}{1em}
     \setlength{\labelsep}{0.5em} } }
\newcommand{\squishend}{
  \end{list}  }
\newcommand{\E}{\mathbb{E}}
\newcommand{\calI}{\mathcal{I}}
\newcommand{\calC}{\mathcal{C}}
\newcommand{\CSP}{\mathrm{CSP}}
\newcommand{\SDPOpt}{\mathrm{SDPOpt}}
\newcommand{\Lovasz}{Lov\'{a}sz\xspace}
\newenvironment{customlem}[1]
  {\innercustomlem}
  {\endinnercustomlem}
\newenvironment{customfact}[1]
  {\innercustomfact}
  {\endinnercustomfact}
\newenvironment{customtheorem}[1]
  {\innercustomtheorem}
  {\endinnercustomtheorem}
\newcommand{\SAplus}{$\mathrm{SA}_+$\xspace}
\newcommand{\SAplusop}{\mathrm{SA}_+}
\newcommand{\SA}{$\mathrm{SA}$\xspace}
\newcommand{\SAop}{\mathrm{SA}}
\newcommand{\LSop}{\mathrm{LS}}
\newcommand{\LSplus}{$\mathrm{LS}_+$\xspace}
\newcommand{\SAT}{\mathrm{SAT}}
\newcommand{\wt}[1]{\widetilde{#1}}
\newcommand{\cl}{\mathsf{Cl}}
\newcommand{\suchthat}{\mid}
\newcommand{\indic}[1]{1_{#1}}
\begin{document}

\title{Lower bounds for CSP refutation by SDP hierarchies}
\author{Ryuhei Mori\thanks{Department of Mathematical and Computing Sciences, Tokyo Institute of Technology. \texttt{mori@is.titech.ac.jp}. Supported by MEXT KAKENHI Grant Number 24106008.}
 \and David Witmer\thanks{Computer Science Department, Carnegie Mellon University. \texttt{dwitmer@cs.cmu.edu}. Supported by the National Science Foundation Graduate Research Fellowship Program under grant DGE-1252522, by NSF grants CCF-0747250 and CCF-1116594, and by a CMU Presidential Fellowship.}}
\maketitle

\begin{abstract}
For a $k$-ary predicate $P$, a random instance of CSP$(P)$ with $n$ variables and $m$ constraints is unsatisfiable with high probability when $m \gg n$.  The natural algorithmic task in this regime is \emph{refutation}: finding a proof that a given random instance is unsatisfiable.  Recent work of Allen et al. suggests that the difficulty of refuting CSP$(P)$ using an SDP is determined by a parameter $\mathrm{cmplx}(P)$, the smallest $t$ for which there does not exist a $t$-wise uniform distribution over satisfying assignments to $P$.  In particular they show that random instances of CSP$(P)$ with $m \gg n^{\mathrm{cmplx(P)}/2}$ can be refuted efficiently using an SDP.

In this work, we give evidence that $n^{\mathrm{cmplx}(P)/2}$ constraints are also \emph{necessary} for refutation using SDPs.  Specifically, we show that if $P$ supports a $(t-1)$-wise uniform distribution over satisfying assignments, then the Sherali-Adams$_+$ and Lov\'{a}sz-Schrijver$_+$ SDP hierarchies cannot refute a random instance of CSP$(P)$ in polynomial time for any $m \leq n^{t/2-\eps}$.
 
\end{abstract}

\clearpage

\section{Introduction}

The average-case complexity of constraint satisfaction problems (CSPs) has been studied extensively in computer science, mathematics, and statistical physics.  Despite the vast amount of research that has been done, the hardness of natural algorithmic tasks for random CSPs remains poorly understood.  Given a $k$-ary predicate $P:\{0,1\}^k \to \{0,1\}$, we consider random instances of CSP($P$) with $n$ variables and $m$ constraints.  Each constraint is chosen independently and consists of $P$ applied to $k$ literals (variable or their negations) chosen independently and uniformly at random.  Whether or not a random CSP is satisfiable depends on its clause density $\frac{m}{n}$.  It is conjectured that for any nontrivial CSP there is a \emph{satisfiability threshold} $\alpha(P)$ depending on the choice of predicate $P$: For $m < \alpha(P) \cdot n$, an instance is satisfiable with high probability, and $m > \alpha(P) \cdot n$, an instance is unsatisfiable with high probability.  This conjecture has been proven in the case of $k$-$\SAT$ for large enough $k$ \cite{DSS15}, but, to our knowledge, remains open for all other predicates.  Even so, it is easy to show that when $m \gg n$, an instance is unsatisfiable with high probability.
In the low density, satisfiable regime, the major research goal is to develop algorithms that find satisfying assignments.  In the high density, unsatisfiable regime, the goal is to \emph{refute} an instance, i.e., find a short certificate that there is no solution.

In this paper, we study refutation.   A refutation algorithm takes a random instance $\calI$ of $\CSP(P)$ and returns either ``unsatisfiable" or "don't know".  It must satisfy two conditions: (1) it is never wrong, i.e., if $\calI$ is satisfiable, it must return ``don't know" and (2) it returns ``unsatisfiable" with high probability over the choice of the instance.  As $m$ increases, refutation becomes easier.  The objective, then, is to refute instances with $m$ as small as possible.  This problem has been studied extensively and is related to hardness of approximation \cite{Fei02}, proof complexity \cite{BB02}, statistical physics \cite{CLP02}, cryptography \cite{ABW10}, and learning theory \cite{DLS14}.  Much research has focused on finding algorithms for refutation, especially in the special case of SAT; see \cite{AOW15} for references.

Most known refutation algorithms are based on semidefinite programming (SDP).   For now, we think of an SDP relaxation of an instance $\calI$ of $\CSP(P)$ as a black box that returns a number $\SDPOpt \in [0,1]$ that upper bounds the maximum fraction of constraints that can be simultaneously satisfied.  An SDP-based refutation algorithm takes a random instance $\calI$ of $\CSP(P)$, solves some SDP relaxation of $\calI$, and return ``Unsatisfiable" if and only if $\SDPOpt < 1$.  Many known polynomial-time algorithms for refutation fit into this framework (e.g., \cite{AOW15,BM15,FGK05,COGL04,FO04}).  It is then natural to ask the following question.
\begin{quote}
What is the minimum number of constraints needed to refute random instances of CSP$(P)$ using an efficient SDP-based refutation algorithm?
\end{quote}
Allen et al. give an upper bound on the number of constraints required to refute an instance of $\CSP(P)$ in terms of a parameter $\mathrm{cmplx}(P)$, defined to be the minimum $t$ such that there is no $t$-wise uniform distribution over satisfying assignments to $P$ \cite{AOW15}.
\begin{theorem}[\textup{\cite{AOW15}}]
There is an efficient SDP-based algorithm that refutes a random instance $\calI$ of $\CSP(P)$ with high probability when $m \gg n^{\mathrm{cmplx}(P)/2}$.
\end{theorem}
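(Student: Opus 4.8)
\emph{Plan and the dual certificate.} The plan is to build an explicit refutation algorithm out of a fixed-size ``dual certificate'' for the failure of $\mathrm{cmplx}(P)$-wise uniformity, reducing refutation of $\CSP(P)$ to refuting a constant number of random XOR instances, the hardest of which has arity $C:=\mathrm{cmplx}(P)$. Concretely: by definition $P^{-1}(1)$ supports no $C$-wise uniform distribution, and this non-existence is the infeasibility of a fixed-size linear program --- variables $\mu(y)$ for $y\in P^{-1}(1)$, with $\mu\ge 0$, $\sum_y\mu(y)=1$, and $\sum_y\mu(y)\chi_\alpha(y)=0$ for all $1\le|\alpha|\le C$, where $\chi_\alpha(y):=\prod_{i\in\alpha}(-1)^{y_i}$ --- so LP duality furnishes a multilinear polynomial $R:\{0,1\}^k\to\R$ of degree $\le C$ with $R\ge 0$ on $P^{-1}(1)$ and $\widehat R(\emptyset)=\E_{\mathrm{unif}}[R]<0$; compute one such $R$ once, in time depending only on $k$, and write $R=\widehat R(\emptyset)+\sum_{1\le|\alpha|\le C}\widehat R(\alpha)\,\chi_\alpha$. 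For an instance $\calI$ whose $j$-th constraint is $P(z^j_1,\dots,z^j_k)$, set $\sigma_j(x):=\bigl(z^j_1(x),\dots,z^j_k(x)\bigr)\in\{0,1\}^k$. If $x\in\{0,1\}^n$ satisfies every constraint then each $\sigma_j(x)\in P^{-1}(1)$, so $0\le\tfrac1m\sum_j R(\sigma_j(x))=\widehat R(\emptyset)+\sum_{\ell=1}^{C}Q_\ell(x)$ with $Q_\ell(x):=\tfrac1m\sum_{j=1}^m\sum_{|\alpha|=\ell}\widehat R(\alpha)\,\chi_\alpha(\sigma_j(x))$; as $\widehat R(\emptyset)<0$, it then suffices to certify $\max_{x\in\{0,1\}^n}|Q_\ell(x)|<|\widehat R(\emptyset)|/C$ for each $\ell=1,\dots,C$.

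\emph{Each $Q_\ell$ is a random $\ell$-XOR.} Expanding $\chi_\alpha(\sigma_j(x))=\pm\chi_{T^j_\alpha}(x)$ with $T^j_\alpha\subseteq[n]$ of size $\ell$ (the variables of constraint $j$ picked out by $\alpha$; distinct for all but a negligible fraction of $j$ since $\calI$ is random), $Q_\ell$ takes the shape $\tfrac1m\sum_T c_T\,\chi_T(x)$ over a multiset of $\Theta(m)$ size-$\ell$ sets $T$ that behave like uniform random $\ell$-subsets of $[n]$, with $|c_T|=O(1)$; bounding $\max_x|\sum_T c_T\chi_T(x)|$ is exactly refuting a random $\ell$-XOR with $\Theta(m)$ clauses. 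I would certify this by the standard spectral/SDP route: when $\ell$ is even, fold each $\ell$-set into an unordered pair of $\ell/2$-sets to write $\sum_T c_T\chi_T(x)=z^\top Az$ with $z\in\{\pm1\}^{\binom{[n]}{\ell/2}}$, $z_S=\chi_S(x)$, and $A$ a $\binom{n}{\ell/2}\times\binom{n}{\ell/2}$ matrix with $\Theta(m)$ nonzero entries of magnitude $O(1)$ in essentially random positions, whence $\max_x|\sum_T c_T\chi_T(x)|\le\|A\|_{\mathrm{op}}\cdot\binom{n}{\ell/2}$, which I can evaluate; when $\ell$ is odd, first apply the Feige--Ofek trick --- XOR pairs of clauses sharing a variable to obtain a random $(2\ell-2)$-XOR with $\Theta(m^2/n)$ clauses whose max-satisfiability upper bounds that of the original via a second-moment argument --- and then run the even case (the degenerate case $\ell=1$ is merely a literal-sign consistency check).

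\emph{The threshold, and assembling the algorithm.} A sparse random matrix $A$ as above satisfies $\|A\|_{\mathrm{op}}=\tilde O(\sqrt{m/n^{\ell/2}})$ with high probability once $m\gg n^{\ell/2}\polylog n$, so $\max_x|Q_\ell(x)|\le\tfrac1m\|A\|_{\mathrm{op}}\binom{n}{\ell/2}=\tilde O(\sqrt{n^{\ell/2}/m})<|\widehat R(\emptyset)|/C$ whenever $m\gg n^{\ell/2}$; in the odd case the analogous conclusion needs $m^2/n\gg n^{\ell-1}$, i.e.\ again $m\gg n^{\ell/2}$. The binding requirement is $\ell=C$, so for $m\gg n^{\mathrm{cmplx}(P)/2}\polylog n$ all $C$ certifications succeed simultaneously with high probability over $\calI$, and the algorithm ``compute $R$; run the $C$ certifications; output \emph{unsatisfiable} iff all of them succeed'' returns \emph{unsatisfiable} w.h.p. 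It is never wrong: if all $C$ bounds $\max_x|Q_\ell(x)|<|\widehat R(\emptyset)|/C$ held yet $\calI$ had a satisfying assignment $x$, then $0\le\widehat R(\emptyset)+\sum_{\ell=1}^C Q_\ell(x)<\widehat R(\emptyset)+|\widehat R(\emptyset)|=0$, a contradiction; so on a satisfiable instance some certification fails and the algorithm answers \emph{don't know}.

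\emph{Main obstacle.} The conceptual step is the reduction, via the constant-size polynomial $R$, from $\CSP(P)$ to a bounded sum of random XOR instances. The quantitative content then rests on (i) the high-probability spectral-norm bound $\|A\|_{\mathrm{op}}=\tilde O(\sqrt{m/n^{\ell/2}})$ for the relevant sparse random matrices --- this is the technical heart and the source of the exponent $\ell/2$, hence of $\mathrm{cmplx}(P)/2$ --- and (ii) the Feige--Ofek reduction needed to preserve the $n^{C/2}$ threshold when $\mathrm{cmplx}(P)$ is odd. I would also need to absorb the facts that the $T^j_\alpha$ are not perfectly i.i.d.\ uniform $\ell$-subsets (a constraint may repeat a variable, or constraints/terms may coincide), but such deviations touch only a negligible fraction of terms and do not move the thresholds.
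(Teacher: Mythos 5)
This theorem is quoted from \cite{AOW15} and the paper supplies no proof of its own; your proposal accurately reconstructs the argument of that cited work --- the LP-duality polynomial $R$ of degree $\mathrm{cmplx}(P)$ that is nonnegative on $P^{-1}(1)$ with negative mean, the resulting decomposition into random $\ell$-XOR instances for $\ell\le\mathrm{cmplx}(P)$, and their spectral/SDP refutation at $m\gg n^{\ell/2}$ (with the clause-pairing reduction for odd $\ell$). The approach, the identification of the spectral-norm bound as the technical core, and the resulting threshold all match the source, so the proposal is correct and takes essentially the same route.
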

Clearly, $1 \leq \mathrm{cmplx}(P) \leq k$ for nontrivial predicates.  Also, $\mathrm{cmplx}(P) = k$ when $P$ is $k$-XOR or $k$-SAT.

For special classes of predicates, we know that  $n^{\mathrm{cmplx}(P)/2}$ constraints are also necessary for refutation by SDP-based algorithms.  Schoenebeck considered arity-$k$ predicates $P$ whose satisfying assignments are a superset of $k$-XOR's; these include $k$-SAT and $k$-XOR.  For such predicates, he showed that polynomial-size sum of squares (SOS) SDP relaxations cannot refute random instances with $m \leq n^{k/2-\eps}$ \cite{Sch08} using a proof previously discovered by Grigoriev \cite{Gri01}.  Lee, Raghavendra, and Steurer showed that the SOS relaxation of $\CSP(P)$ is at least as powerful as an arbitrary SDP relaxation of comparable size \cite{LRS15}.  With Schoenebeck's result, this implies that no polynomial-size SDP can be used to refute random instances of $k$-XOR or $k$-SAT when $m \leq n^{k/2-\eps}$.  This leads us to make the following conjecture.
\begin{conjecture} \label{conj:lb}
Let $\eps$ be a constant greater than $0$.  Given a random instance $\calI$ of $\CSP(P)$ with $m \leq n^{\mathrm{cmplx}(P)/2-\eps}$, with high probability any polynomial-size SDP relaxation of $\calI$ has optimal value $1$ and can therefore not be used to refute $\calI$.
\end{conjecture}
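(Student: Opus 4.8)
The plan is to reduce the conjecture to a degree lower bound for the Sum-of-Squares (Lasserre) hierarchy and then to construct an explicit pseudoexpectation. Write $t = \mathrm{cmplx}(P)$, so that $P$ supports a $(t-1)$-wise uniform distribution $\mu$ over its satisfying assignments but no $t$-wise uniform one; proving the conjecture for this $P$ is exactly the statement that $m \le n^{t/2-\eps}$ constraints do not suffice for refutation. By the size-versus-degree theorem of Lee--Raghavendra--Steurer \cite{LRS15}, any SDP relaxation of $\CSP(P)$ of polynomial size is no more powerful, as a refutation proof system, than the degree-$d$ SOS relaxation for a suitable $d = d(n)$ that grows with $n$. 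Hence it is enough to show: with high probability over a random instance $\calI$ of $\CSP(P)$ with $m \le n^{t/2-\eps}$, the degree-$d$ SOS relaxation of $\calI$ has optimal value $1$ for $d = n^{\delta}$ with some small $\delta = \delta(\eps) > 0$ (a much weaker $d$ already suffices); equivalently, to exhibit a degree-$d$ pseudoexpectation $\pE$ on polynomials in $x \in \{0,1\}^n$ that enforces $\indic{C_j(x)=1}$ as a polynomial identity for every constraint $C_j$ of $\calI$.

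First I would define $\pE$ from the local distributions coming from $\mu$. For a constraint $C_j = P(\ell_{j,1},\dots,\ell_{j,k})$ with scope $S_j$, let $\mu_j$ be the image of $\mu$ under the literal pattern of $C_j$; this is a $(t-1)$-wise uniform distribution on $\{0,1\}^{S_j}$ supported on satisfying assignments of $C_j$. Given a set $S$ of at most $d$ variables, form its closure $\cl(S)$ by repeatedly adding any constraint that already shares ``many'' variables with the current set, choose a peeling order $C_{j_1},\dots,C_{j_r}$ of the constraints inside $\cl(S)$ in which each new scope overlaps the union of the earlier ones in at most $t-1$ variables, and let $D_S$ be the distribution obtained by sampling $\mu_{j_1}$, then $\mu_{j_2}$ conditioned on the (uniform, hence always consistent) values of its overlap coordinates, and so on, finally setting any still-free coordinate uniformly. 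Define $\pE\bigl[\prod_{i \in S} x_i\bigr] := \E_{D_S}\bigl[\prod_{i \in S} x_i\bigr]$, extended linearly. The structural input from the random model is that, with high probability, every set of at most $2d$ constraints of $\calI$ admits such a peeling with all overlaps of size at most $t-1$ and spans enough variables that this rule is well-defined and internally consistent; this is precisely where the bound $m \le n^{t/2-\eps}$ and the $(t-1)$-wise uniformity of $P$ are used, and it is the SOS analogue of the expansion estimates underlying the \SAplus and \LSplus bounds proved in this paper.

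The crux, and the step I expect to be the main obstacle, is showing that the degree-$d$ moment matrix of $\pE$ is positive semidefinite. For \SAplus and \LSplus it suffices to verify local consistency of the $D_S$ together with a single PSD constraint at each node, and the latter follows by planting one satisfying assignment to a small sub-instance; for the full SOS relaxation one must certify PSD-ness of the entire $\binom{n}{\le d} \times \binom{n}{\le d}$ moment matrix, which demands that the local distributions glue not merely consistently but ``independently'' enough. The approach I would take is to rewrite $\pE$ not constraint-by-constraint but as an expectation over genuine local assignments: on the sub-instance queried, sample a random completion of its closure, plant an assignment drawn from the product of the relevant $\mu_j$'s along a peeling order, and set the remaining variables uniformly; then the moment matrix becomes an average of matrices each of which is a tensor product of one small PSD block per connected component of constraints (recording a genuine distribution, hence PSD) with identity blocks on the free coordinates, and is therefore PSD. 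Carrying this out up to degree $d = n^{\delta}$ hinges on a ``tensoring/peeling'' property of the random hypergraph --- every set of at most $d$ constraints, together with its closure, decomposes into pieces each contributing a free coordinate --- and making the expansion parameters line up right at the threshold $m \approx n^{t/2-\eps}$ is delicate precisely because $P$ may be as hard as $(t-1)$-wise uniformity permits; this is exactly the gap between the results established in the present paper and the full conjecture.
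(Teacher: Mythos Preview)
The statement you are attempting to prove is Conjecture~\ref{conj:lb}, which the paper explicitly does \emph{not} prove: it is stated as an open problem, and the paper's main theorems (Theorems~\ref{thm:sa+-intro}--\ref{thm:static-ls+-intro}) are offered only as partial evidence, for the weaker \SAplus, \LSplus, and static \LSplus hierarchies rather than for SOS or arbitrary polynomial-size SDPs. So there is no ``paper's own proof'' to compare against, and you should not be presenting a proof of this statement.

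That said, your outline is a reasonable sketch of how one would \emph{attack} the conjecture, and it lines up with the paper's framework: the reduction to SOS via \cite{LRS15} is correct, the local distributions you build from the $(t-1)$-wise uniform $\mu$ via closures and peeling are exactly the $\{\Pi_S\}$ of Section~\ref{sec:cons-local-dists} (originating in \cite{BGMT12,OW14}), and you correctly isolate the obstruction as PSDness of the full degree-$d$ SOS moment matrix rather than merely the degree-$2$ covariance matrix handled here.

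The genuine gap is in your PSDness argument. You propose to rewrite $\pE$ ``as an expectation over genuine local assignments'' so that the moment matrix is an average of PSD matrices, one per sampled planting. But the $\{D_S\}$ are defined set by set, each with its own closure and peeling order; there is no single random object $\omega$ and global distribution $D_\omega$ on $\{0,1\}^n$ with $\pE[p] = \E_\omega \E_{x\sim D_\omega}[p(x)]$ for all degree-$d$ polynomials simultaneously. If there were, $\pE$ would be an honest expectation and PSDness would be automatic---but the instance is unsatisfiable with high probability, so no global distribution supported on satisfying assignments exists. The paper's \SAplus argument circumvents this at degree~$2$ by proving the covariance matrix is block-diagonal with blocks small enough to lie inside a single local distribution (Lemma~\ref{lem:small-conn-comps-implies-psd} and Theorem~\ref{thm:small-conn-comps-new}); lifting that block structure to the $\binom{n}{\le d/2}\times\binom{n}{\le d/2}$ SOS moment matrix is precisely the open step, and your ``tensoring/peeling'' sketch does not supply it. You acknowledge as much in your last sentence, so what you have written is a plausible research plan, not a proof.
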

It suffices to prove this conjecture for SOS SDP relaxations \cite{LRS15}.  This would essentially complete our understanding of the power of SDP-based refutation algorithms.  Prior to this work, this SOS version Conjecture~\ref{conj:lb} appeared in \cite{AOW15}; we know of no other mention of this conjecture in the literature.\footnote{Barak, Kindler, and Steurer \cite{BKS13} made a related but different conjecture that the basic SDP relaxation is optimal for random CSPs.} 

Note that in the special case of $k$-XOR, SDP algorithms are \emph{not} optimal.  A $k$-XOR constraint is a linear equation mod $2$, so Gaussian elimination can be used to refute \emph{any} unsatisfiable $k$-XOR instance.  A random $k$-XOR instance with $m \gg n$ can therefore be refuted with high probability.  SDP-based algorithms, on the other hand, require $m \geq n^{k/2}$ \cite{Gri01,Sch08,LRS15}.  More generally, if $P$ can be written as a degree-$d$ polynomial over $\mathbb{F}_2$, then Gaussian elimination can be used to refute random instances with $m = O(n^d)$.  See \cite{OW14} for more details.

Some partial progress has been made toward proving Conjecture~\ref{conj:lb}.  Building on results of Benabbas et al. \cite{BGMT12} and Tulsiani and Worah \cite{TW13}, O'Donnell and Witmer proved lower bounds for the Sherali-Adams (\SA) linear programming (LP) hierarchy and the Sherali-Adams$_+$ (\SAplus) and \Lovasz-Schrijver$_+$ (\LSplus) SDP hierarchies.  All three of these hierarchies are weaker than SOS.  The \SAplus hierarchy gives an optimal approximation for any CSP in the worst case assuming the Unique Games Conjecture \cite{Rag08}.  They showed that polynomial-size Sherali-Adams linear programming (LP) relaxations cannot refute random instances with $m \leq n^{\mathrm{cmplx}(P)/2-\eps}$ \cite{OW14}; this implies that no polynomial-size LP can refute random instances with with $m \leq n^{\mathrm{cmplx}(P)/2-\eps}$ by work of Chan et al. \cite{CLRS13}  They also showed that \SAplus cannot refute random instances with $m \leq n^{\mathrm{cmplx}(P)/2-\eps}$ when a set of $o(m)$ constraints has been removed \cite{OW14}.  Also, they proved that \SAplus and \LSplus cannot refute fully random instances with $m \leq n^{\mathrm{cmplx}(P)/2-1/3-\eps}$.  Much less is known for SOS.  For predicates $P$ that support a pairwise uniform distribution over satisfying assignments, Barak, Chan, and Kothari showed that polynomial-size SOS relaxations cannot refute random instances with $m = \Omega(n)$ in which $o(m)$ constraints have been removed \cite{BCK15}.

In addition, there is a long history of related work on lower bounds for refutation in proof complexity (e.g., \cite{Gri01, KI06, BOG+06, TW13}).  Specifically, \SA, \SAplus, \LSplus, and SOS have corresponding static semialgebraic proof systems and proving integrality gaps for these LP and SDP relaxations in equivalent to proving rank or degree lower bounds for refutations in these proof systems.

\paragraph{Results} Our contribution is two-fold: First, we remove the assumption that a small number of constraints are deleted to show that fully-random CSP instances have integrality gaps in \SAplus for $m \leq n^{t/2-\eps}$.  As in \cite{BGMT12} and \cite{OW14}, we prove this result for predicates over possibly larger alphabets $[q]$.
\begin{theorem} \label{thm:sa+-intro}
Let $P:[q]^k \to \{0,1\}$ be $(t-1)$-wise uniform-supporting and let $\mathcal{I}$ be a random instance of $\CSP(P)$ with $n$ variables and $m \leq n^{t/2 - \eps}$ constraints.  Then with high probability the \SAplus relaxation for $\mathcal{I}$ has value $1$, even after $\Omega(n^{\frac{\eps}{t-2}})$ rounds.
\end{theorem}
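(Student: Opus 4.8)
\emph{Proof idea.} The plan is to construct, with high probability over $\calI$, an explicit feasible solution to the \SAplus relaxation at round $r:=\Theta(n^{\eps/(t-2)})$ whose objective value is $1$. Such a solution is exactly a pseudo-expectation $\pE$ of ``degree'' $r$: a family of local distributions $\{\mu_S\}_{|S|\le r}$ on $[q]^S$ that (i) agree on overlaps, (ii) are each supported on assignments satisfying every constraint of $\calI$ whose scope lies inside $S$, and (iii) satisfy the ``$+$'' requirement that for every $U$ with $|U|\le r-2$ and every $\rho\in[q]^U$ the conditional second-moment matrix with entries $\pE\left[\indic{x_U=\rho}\,\indic{x_i=a}\,\indic{x_j=b}\right]$ is positive semidefinite. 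Given (i)--(iii), value $1$ is automatic from (ii), since every constraint scope has $k\le r$ variables for $n$ large. (For $t=2$ the claimed round bound is vacuous and the statement reduces to the sub-critical regime $m\le n^{1-\eps}$; so assume $t\ge 3$.)

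\emph{Step 1: an expansion lemma.} First I would show that with high probability every set $C'$ of at most $r$ constraints spans more than $(k-\tfrac t2)|C'|$ variables, and no constraint repeats a variable. The probability that some $C'$ of size $s$ violates this is, after bounding the binomials by Stirling and using $m\le n^{t/2-\eps}$, at most $\bigl(O(1)\cdot n^{-\eps}\cdot s^{(t-2)/2}\bigr)^{s}$; summing over $s$ this is $o(1)$ exactly when $r=\Theta(n^{\eps/(t-2)})$, which is where the exponent in the theorem comes from. A short counting argument then gives the usable form: any such expanding sub-instance has a \emph{peeling order} $C_1,\dots,C_\ell$ in which each $C_j$ shares at most $t-1$ variables with $C_1\cup\cdots\cup C_{j-1}$; and since $k-\tfrac t2\ge 1$, any set of at most $r$ variables spans at most $r$ constraints, so a peeling order is available for the constraints inside every admissible $S$.

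\emph{Step 2: the $\SAop$ part.} Fix $S$ with $|S|\le r$, take a peeling order $C_1,\dots,C_\ell$ of the constraints inside $S$, and build $\mu_S$ sequentially: put i.i.d.\ uniform values on the variables of $S$ lying in no $C_j$; then, for $j=1,\dots,\ell$, resample the scope of $C_j$ according to $\nu$ --- the $(t-1)$-wise uniform distribution on $P^{-1}(1)$, transported through the literals of $C_j$ --- conditioned on the values of the at most $t-1$ variables of $C_j$ already assigned. The invariant making each resampling legitimate is that after any prefix every $\le t-1$ of the assigned variables are jointly uniform; this is maintained precisely because $\nu$ is $(t-1)$-wise uniform, so conditioning $\nu$ on $\le t-1$ of its (marginally uniform) coordinates is well defined and the result again has all $\le t-1$ coordinates jointly uniform. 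One then checks that $\mu_S$ is independent of the chosen peeling order and that its marginal on any $T\subseteq S$ equals $\mu_T$ (giving (i)); and since each $C_j$ is sampled inside $P^{-1}(1)$, $\mu_S$ is supported on assignments satisfying the constraints inside $S$, giving (ii) and hence value $1$.

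\emph{Step 3: the ``$+$'' part, and the main obstacle.} Fix $U$ with $|U|\le r-2$ and $\rho$ with $\mu_U(\rho)>0$; I must show the conditional second-moment matrix is PSD. The intended mechanism is decoupling plus local realizability: variables ``far'' from $U$ are, under the conditioning, uniform, mutually independent, and independent of everything near $U$, so after an appropriate centering of the vectors they contribute a manifestly PSD diagonal block; while the variables in a bounded region $\cl(U)$ near $U$ carry local marginals $\{\mu_T:T\subseteq\cl(U)\}$ that, because expansion forces the sub-instance on $\cl(U)$ to be tree-like, are the marginals of a genuine distribution on $[q]^{\cl(U)}$, so the corresponding block is an honest conditional second-moment matrix and hence PSD --- or, alternatively, one writes down explicit vectors on $\cl(U)$ from $\nu$ and the local linear-algebraic structure of the constraints and verifies the Gram identities directly. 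The real obstacle is that $|U|$ may be as large as $r$, so the region one must control is \emph{not} a hypergraph ball around $U$: that ball already has more than $r$ variables once $t\ge 4$, since the mean variable degree $\approx km/n$ is large. Instead one must argue --- again using $(t-1)$-wise uniformity of $\nu$ --- that conditioning on $x_U=\rho$ propagates influence only along sub-instances that are themselves expanding, so that the affected region has size $O(r)$ and the Markov/locality property needed for the decoupling holds; making this work for conditioning sets of size up to $r$ is essentially what previous arguments could not do without losing an $n^{1/3}$ factor, and is the heart of the proof. On the high-probability event of Step 1, Steps 2 and 3 together produce a value-$1$ feasible \SAplus solution at round $r=\Omega(n^{\eps/(t-2)})$, which is the theorem.
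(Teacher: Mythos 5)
There is a genuine gap, and it sits exactly where you place ``the heart of the proof'': your Step~3 names the main difficulty but does not resolve it. Two separate issues. First, you are proving more than the statement asks: the \SAplus relaxation only requires the \emph{single} degree-2 moment matrix $M$ (equivalently the unconditional covariance matrix $\Sigma$) to be PSD; the condition you impose --- PSDness of the conditional second-moment matrices $M_{U,\rho}$ for every $U$ with $|U|\le r-2$ --- is the static \LSplus condition, a strictly stronger hierarchy (the paper proves that too, but as a separate theorem). Second, and more importantly, the mechanism you propose for PSDness does not work even in the unconditional case $U=\emptyset$: you assert that variables far from $U$ are ``uniform, mutually independent, and independent of everything near $U$,'' which for $U=\emptyset$ would mean all variables are pairwise independent. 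That is false in the regime $m\gg n$ --- nontrivial pairwise correlations genuinely appear --- and handling them is precisely the new content of the theorem. The diagonal-covariance argument you are implicitly invoking is the one from the $m=O(n)$ literature and does not extend.

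The missing idea is a locality argument for the correlations themselves. One shows that $\Sigma_{(u,a),(v,b)}\ne 0$ forces the existence of a ``bad structure'' for $u,v$: a connected set of constraints $W$ with $u,v\in\Gamma(W)$ in which every constraint has at most $k-t$ $W$-boundary variables other than $u,v$ (this follows by repeatedly peeling constraints with $\ge k-t+1$ boundary variables, using $(t-1)$-wise uniformity to show each peel leaves the joint marginal of $u,v$ unchanged up to a constant). Each bad structure satisfies $|\Gamma(W)|\le(k-\tfrac t2)|W|+1$, and a union of bad structures forming a connected component of the resulting ``correlation graph'' satisfies the same inequality; your Step~1 expansion (every set of $\le r$ constraints spans more than $(k-\tfrac t2)|C'|$ variables) then caps every connected component at constant size $O(k/\eps)$. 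Ordering rows by these components makes $\Sigma$ block diagonal, and each block is the covariance matrix of the \emph{honest} local distribution $\mu_{V_i}$ on that constant-size component (which exists by your Step~2), hence PSD. Your Steps~1 and~2 are fine and match the standard construction, but without this correlation-graph decomposition the proof of the PSD condition is not there.
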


Second, we use this result to show that fully random instances have \LSplus integrality gaps for $m \leq n^{t/2-\eps}$.  Recall that \LSplus gives relaxations of $0$/$1$-valued integer programs, so we restrict our attention here to Boolean CSPs with $P:\{0,1\}^k \to \{0,1\}$.
\begin{theorem} \label{thm:ls+-intro}
Let $P:\{0,1\}^k \to \{0,1\}$ be $(t-1)$-wise uniform-supporting and let $\mathcal{I}$ be a random instance of $\CSP(P)$ with $n$ variables and $m \leq n^{t/2 - \eps}$ constraints.  Then with high probability the \LSplus relaxation for $\mathcal{I}$ has value $1$, even after $\Omega(n^{\frac{\eps}{t-2}})$ rounds.
\end{theorem}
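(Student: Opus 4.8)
The plan is to read off an \LSplus certificate of value $1$ directly from the \SAplus solution supplied by Theorem~\ref{thm:sa+-intro} in the case $q=2$, in the spirit of \cite{OW14}. Recall that the $r$-round \LSplus relaxation of a $0$/$1$ program is obtained by $r$ applications of the operator $N_+$, and that a point lies in $N_+^{r}$ of the base cone $\hat K$ once one exhibits a depth-$r$ ``protection tree'': for every conditioning string $\sigma=\big((v_1,b_1),\dots,(v_\ell,b_\ell)\big)$ with $\ell\le r$ — where the program variables $v_j$ range over the CSP variables $x_i$ together with the auxiliary indicators $\lambda_{C,\alpha}$ appearing in the canonical LP relaxation of $\CSP(P)$ — there is a positive semidefinite matrix $Y^{\sigma}$ whose $0$-th column $Y^{\sigma}e_0$ is the conditional point, which satisfies the diagonal identities $Y^{\sigma}_{0v}=Y^{\sigma}_{vv}$, whose ``$v_{\ell+1}=1$'' and ``$v_{\ell+1}=0$'' slices $Y^{\sigma}e_{v_{\ell+1}}$ and $Y^{\sigma}(e_0-e_{v_{\ell+1}})$ are nonnegative multiples of $Y^{\sigma'}e_0$ for the respective children $\sigma'$, and whose $0$-th column lies in $\hat K$ at the leaves.

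The \SAplus solution of Theorem~\ref{thm:sa+-intro} is a consistent family of local distributions $\{\mu_S\}_{|S|\le R}$ over Boolean assignments, supported on satisfying assignments for every constraint scope and with positive semidefinite conditional first-moment matrices, for some $R=\Omega(n^{\eps/(t-2)})$. Fix an absolute constant $c$ and set $r=\lfloor R/(ck)\rfloor$, which is still $\Omega(n^{\eps/(t-2)})$ since $k$ is a constant; I will certify $r$ rounds of \LSplus. Given a conditioning string $\sigma$ of length $\le r$, each $x$-conditioning fixes one CSP variable and each $\lambda_{C,\alpha}$-conditioning fixes or forbids a pattern on the $k$-set underlying $C$, so the induced event $E_\sigma$ constrains $O(kr)$ CSP variables; choosing $c$ large enough, $\mu$ conditioned on $E_\sigma$ remains a consistent family of local distributions defined on enough $x$-variables to fill in every entry of the matrix $Y^{\sigma}$ described next. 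I let $Y^{\sigma}$ be the first-moment matrix of $\mu\mid E_\sigma$ over the combined variable set $(x,\lambda)$, its $\lambda$-entries being the corresponding $\le 2k$-wise marginals. The diagonal identities hold because all entries are probabilities; the parent/child relation is Bayes' rule, $Y^{\sigma}e_{v_{\ell+1}}=\Pr[E_{\sigma\cdot(v_{\ell+1},1)}\mid E_\sigma]\cdot Y^{\sigma\cdot(v_{\ell+1},1)}e_0$, with the degenerate case $\Pr[\,\cdot\,]=0$ producing the zero vector, which lies in the cone; and since every $\mu_S$ whose index contains a constraint scope is supported on satisfying assignments, the $\lambda$ variables of each constraint sum to $1$ at every node, so the objective is $1$ throughout the tree and the leaf points lie in the value-$1$ base cone. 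Combined with Theorem~\ref{thm:sa+-intro} applied with $q=2$, this shows that with high probability the \LSplus relaxation for $\mathcal I$ has value $1$ after $\Omega(n^{\eps/(t-2)})$ rounds whenever $m\le n^{t/2-\eps}$.

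The step I expect to be the main obstacle is the positive semidefiniteness of the $Y^{\sigma}$. The \SAplus guarantee of Theorem~\ref{thm:sa+-intro} concerns PSD conditional moment matrices over degree-$\le 1$ monomials in the $x$-variables, whereas the \LSplus matrix must be PSD after enlargement by the $\lambda$-variables that make the $k$-ary predicate visible to \LSplus — equivalently, PSD over the span of the $\le 2k$-juntas of the $x$-variables. Rather than use Theorem~\ref{thm:sa+-intro} as a black box, I would revisit its proof and check that the local distributions it produces are ``realizable'' strongly enough that $\E_{\mu\mid E_\sigma}[\,vv^{\top}\,]\succeq 0$ for the vector $v$ of these low-degree functions — which is what actually emerges from that construction. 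Secondary points requiring care are the precise list of linear inequalities of the canonical $\CSP(P)$ relaxation that must be reverified at the leaves, and the round bookkeeping once conditioning strings are allowed to include the $\lambda$-variables.
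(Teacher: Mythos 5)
Your overall architecture---protection matrices given by the conditional moment matrices of the local distributions, with Bayes' rule supplying the parent/child relation and the zero vector handling degenerate conditionings---is the standard primal route to \LSplus integrality gaps, and it differs from the route taken here: the paper instead proves the static \LSplus lower bound (Theorem~\ref{thm:static-ls+-intro}) and then deduces the \LSplus rank bound from the fact that a rank-$r$ \LSplus refutation can be simulated by a degree-$r$ static \LSplus refutation. Both routes, however, bottleneck at exactly the point you flag and then defer: positive semidefiniteness of the \emph{conditional} moment matrices $M_{X,\alpha}$. This is not something that ``actually emerges'' from the construction behind Theorem~\ref{thm:sa+-intro}; that theorem only controls the unconditional covariance matrix, and conditioning on $X=\alpha$ changes the correlation structure of the $\Pi$ distributions. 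Establishing it is the main content of Section~\ref{sec:ls}: Lemma~\ref{lem:bad_struct_cond} redoes the bad-structure argument for the conditional distributions $\Pi_{S|X=\alpha}$, showing their correlations are induced by bad structures in $H-X$; this yields PSDness only when $H-X$ expands (Lemma~\ref{lem:psd-protection-high-exp}), which fails for arbitrary conditioning sets $X$, so one must additionally write $M_{X,\alpha}$ as a nonnegative combination of the matrices $M_{\cl(X),\beta}$ (Claim~\ref{cl:psd-sum}), whose index sets do have the required expansion by the closure lemma. Without these steps your proof has a genuine gap at its central claim.

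A secondary problem is your choice of formulation. The relaxation in Theorem~\ref{thm:ls+-intro} is $N_+^r(L_{\calI})$, whose variables are just the $n$ Boolean CSP variables; there are no auxiliary indicators $\lambda_{C,\alpha}$. Introducing them would force you to prove PSDness of conditional moment matrices over the span of $2k$-juntas of the $x$-variables---a strictly stronger statement than anything established in this paper, much closer to a degree-$2k$ SOS lower bound---and it is unnecessary: the leaves of your protection tree only need to satisfy the linear clause inequalities of $L_{\calI}$, which they do because the conditional local distributions are supported on satisfying assignments. Restricting to the $x$-variables reduces the required PSDness exactly to that of the matrices $M_{X,\alpha}$, i.e., to Lemma~\ref{lem:psd-protection}.
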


In their strongest form, our results hold for a static variant of the \LSplus SDP hierarchy that is at least as strong as both \SAplus and \LSplus and is dual to the static \LSplus proof system studied in previous work (e.g., \cite{GHP02, KI06}).  We define this static \LSplus hierarchy in Section~\ref{sec:prelims}.
\begin{theorem} \label{thm:static-ls+-intro}
Let $P:[q]^k \to \{0,1\}$ be $(t-1)$-wise uniform-supporting and let $\mathcal{I}$ be a random instance of $\CSP(P)$ with $n$ variables and $m \leq n^{t/2 - \eps}$ constraints.  Then with high probability the static \LSplus relaxation for $\mathcal{I}$ has value $1$, even after $\Omega(n^{\frac{\eps}{t-2}})$ rounds.
\end{theorem}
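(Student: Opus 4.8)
The plan is to exhibit, with high probability over $\calI$, an explicit feasible solution of value $1$ to the static \LSplus SDP for $\calI$ that survives $r = \Omega(n^{\eps/(t-2)})$ rounds; this immediately gives Theorems~\ref{thm:sa+-intro} and~\ref{thm:ls+-intro} as well, since the static \LSplus hierarchy defined in Section~\ref{sec:prelims} is at least as strong as both \SAplus and \LSplus. Concretely, a value-$1$ static \LSplus solution of degree $O(r)$ is a pseudo-expectation $\pE$ on monomials in the indicator variables $\{x_{i,a}\}_{i \in [n],\, a \in [q]}$ with $\pE[1] = 1$, satisfying the Booleanity axioms $x_{i,a}^2 = x_{i,a}$ and $\sum_a x_{i,a} = 1$, satisfying the CSP axioms forcing every constraint of $\calI$ to hold with weight $1$, and satisfying the positive-semidefiniteness conditions of the hierarchy (roughly: the moment matrix of $\pE$ together with all of its single-variable conditionings are PSD). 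The construction and the easy parts of the verification follow the template of \cite{BGMT12} and \cite{OW14}; the new ingredient is making the PSD conditions hold for a \emph{fully} random instance rather than one from which $o(m)$ constraints have been deleted.

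First I would establish the combinatorial backbone: a first-moment computation shows that, for $m \le n^{t/2 - \eps}$ and with high probability, every set $D$ of at most $s := n^{\Omega(\eps/(t-2))}$ constraints of $\calI$ spans more than $(k - \tfrac{t}{2} + \Omega(\eps))\,|D|$ variables; this is the only place the bound $m \le n^{t/2 - \eps}$ is used. A peeling argument then shows that any such $D$ and any small set $S$ of variables admit an ordering of $D$ in which each constraint, when added, shares only boundedly many variables with $S$ together with the constraints placed before it. Using this, for a set $S$ of at most $O(r)$ variables I would define a closure $\cl(S)$ by greedily adding constraints almost all of whose variables are already present; expansion keeps $|\cl(S)| \le s$ and makes its constraint-subhypergraph ``tree-like'' in the above peeling sense. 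The local distribution $\mu_S$ over $[q]^S$ is then obtained by sampling the variables untouched by $\cl(S)$ independently and uniformly, processing the constraints of $\cl(S)$ in the good order and sampling the as-yet-unset variables of each from the fixed $(t-1)$-wise uniform distribution $\mu$ over $P^{-1}(1)$ conditioned on the values chosen so far, and restricting to $S$. One checks that the $\mu_S$ are mutually consistent, that each is supported on assignments satisfying every constraint of $\calI$ that lies inside $S$, and --- because every gluing happens along a small overlap on which both sides are already uniform --- that the $\mu_S$ retain the low-order uniformity of $\mu$. Defining $\pE\big[\prod_{(i,a) \in T} x_{i,a}\big] := \Pr_{\mu_{V(T)}}[\, x_i = a \text{ for all } (i,a) \in T \,]$ for $|V(T)| \le O(r)$ then makes $\pE[1] = 1$ and the Booleanity and CSP axioms immediate.

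The crux, and what I expect to be the main obstacle, is the positive-semidefiniteness. The strategy is to produce local PSD certificates and patch them globally: restricted to monomials supported inside the closure of any single small set, the moment matrix $M$ coincides with the second-moment matrix of a genuine probability distribution --- uniform on the free variables times $\mu$ on each constraint of the closure --- which is visibly PSD. To glue these certificates one passes to the Fourier basis and uses that $\mu$, being $(t-1)$-wise uniform, annihilates every character of Hamming weight between $1$ and $t-1$; this makes $M$ congruent to a block-structured PSD matrix unless the constraint hypergraph carries an ``obstructing'' short cycle at scale $\le r$. Earlier work removed such obstructions by deleting $o(m)$ constraints; the delicate point here is to show, directly from the expansion of the first step, that below scale $s$ the fully random instance already contains no obstruction, so that every PSD condition of the static \LSplus hierarchy --- including the single-variable conditionings --- holds with no constraints removed. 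Carrying this through for all $q$ and up to $m = n^{t/2-\eps}$ is where the bulk of the technical work lies.
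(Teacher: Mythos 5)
There is a genuine gap at the heart of your PSDness argument. You propose to show that ``below scale $s$ the fully random instance already contains no obstruction,'' i.e.\ that the random hypergraph at density $m = n^{t/2-\eps}$ simply has no short dense substructure that could spoil the block/Fourier decomposition of the moment matrix. This is false once $m \gg n$. For instance, two constraints sharing $t-1$ variables already create a nonzero correlation between a variable of the first and a variable of the second (this is exactly the kind of ``bad structure'' that obstructs diagonality of the covariance matrix), and the expected number of such pairs is on the order of $m^2 n^{-(t-1)} = n^{1-2\eps}$, which tends to infinity for $\eps < 1/2$. This is precisely why earlier work either restricted to $m = O(n)$ or deleted $o(m)$ constraints: obstructions genuinely appear in the fully random instance, and no expansion computation will make them go away. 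The correct repair is not to rule out obstructions but to show that the correlations they induce remain \emph{local}: every nonzero off-diagonal entry of the covariance matrix is witnessed by a small, dense, connected set of constraints, and expansion forces the connected components of the resulting ``correlation graph'' to have constant size. The covariance matrix is then block diagonal with one block per component, and each block is the covariance matrix of a genuine local distribution on a constant-size set (which exists by local consistency), hence PSD. Your proposal never takes this step, and without it the argument collapses exactly at the point you yourself flag as ``where the bulk of the technical work lies.''

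A secondary issue: you describe the static \LSplus PSD conditions as the moment matrix ``together with all of its single-variable conditionings'' being PSD. The hierarchy actually requires the conditional moment matrix $M_{X,\alpha}$ to be PSD for \emph{every} subset $X$ with $|X| \le r-2$ and every assignment $\alpha$, not just singletons. Handling general $X$ requires an additional argument (one must work with the hypergraph with $X$ removed, which may have degraded expansion, and then express $M_{X,\alpha}$ as a nonnegative combination of conditional moment matrices over the closure of $X$, where expansion is restored). Your construction of the local distributions via closures and expansion, and the reduction of the theorem to PSDness of these moment matrices, do match the standard template and are fine; the missing content is entirely in how the PSD conditions are certified in the presence of the correlations that a fully random instance necessarily contains.
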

Tulsiani and Worah proved this theorem in the special case of pairwise independence and $O(n)$ constraints \cite[Theorem 3.27]{TW13}.\footnote{Actually, \cite{TW13} prove a rank lower bound for the dual static \LSplus proof system, but this is equivalent to a rank lower bound for the static \LSplus SDP hierarchy we consider here.} 
These results provide further evidence for Conjecture~\ref{conj:lb} and, in particular, give the first examples of SDP hierarchies that are unable to refute CSPs with $(t-1)$-wise uniform-supporting predicates when $m \leq n^{t/2 - \eps}$.

From a dual point of view, we can think of \SAplus, \LSplus, and static \LSplus as semialgebraic proof systems and our results can be equivalently stated as rank or degree lower bounds for these proof systems.
\begin{theorem} \label{thm:proof-systems-intro}
Let $P:\{0,1\}^k \to \{0,1\}$ be $(t-1)$-wise uniform-supporting and let $\mathcal{I}$ be a random instance of $\CSP(P)$ with $n$ variables and $m \leq n^{t/2 - \eps}$ constraints.  Then with high probability any \SAplus or static \LSplus refutation of $\mathcal{I}$ requires degree $\Omega(n^{\frac{\eps}{t-2}})$ and any \LSplus refutation of $\mathcal{I}$ requires rank $\Omega(n^{\frac{\eps}{t-2}})$.
\end{theorem}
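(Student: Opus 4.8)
This theorem is the dual, proof-complexity reformulation of Theorems~\ref{thm:sa+-intro}, \ref{thm:ls+-intro}, and \ref{thm:static-ls+-intro}, so the plan is to invoke the (standard) equivalence between hierarchy integrality gaps and rank/degree lower bounds for the associated static semialgebraic proof systems --- the equivalence already asserted in the introduction for \SA, \SAplus, \LSplus and SOS --- and to apply it to the feasible solutions those theorems construct. Concretely, for each of \SAplus, static \LSplus, and \LSplus, a feasible point of the level-$r$ relaxation of $\calI$ with objective value $1$ is precisely a dual obstruction that rules out a refutation of $\calI$ of degree (respectively rank) at most $r$ in the matching proof system. Since Theorems~\ref{thm:sa+-intro}--\ref{thm:static-ls+-intro} produce, with high probability, such feasible points that survive to level $r = \Omega(n^{\eps/(t-2)})$, the stated degree and rank lower bounds follow.

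\textbf{The duality step.} Consider the static \LSplus case, which subsumes the others. Fix $\calI$ and, using Theorem~\ref{thm:static-ls+-intro}, a feasible solution of the static \LSplus relaxation at level $r$ with objective $1$; write $\pE$ for the associated pseudo-expectation functional. By construction $\pE$ is a linear functional on polynomials of the relevant degree with $\pE[1]=1$, it is nonnegative on the products of squares and constraint-indicator polynomials that static \LSplus permits at level $r$, it annihilates the ideal generated by the Boolean axioms and the constraint polynomials up to degree $r$, and --- because the objective equals $1$ --- it assigns full weight to every one of the $m$ constraints being satisfied. A static \LSplus refutation of $\calI$ of degree at most $r$ is an identity expressing a negative constant $-c$ as a combination of exactly such ``nonnegative-under-$\pE$'' terms; applying $\pE$ to both sides gives $-c \ge 0$ with $c>0$, a contradiction. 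Hence $\calI$ admits no degree-$r$ static \LSplus refutation, and since $r = \Omega(n^{\eps/(t-2)})$ with high probability this is the claimed bound. Running the identical argument on Theorem~\ref{thm:sa+-intro} yields the \SAplus degree bound and on Theorem~\ref{thm:ls+-intro} the \LSplus rank bound; alternatively, since the static \LSplus relaxation at level $r$ is at least as tight as the \SAplus and \LSplus relaxations at level $r$, the static \LSplus lower bound implies the other two directly.

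\textbf{Where the effort actually lies.} There is essentially no new combinatorial or probabilistic content in this theorem: the high-probability construction of the level-$r$ feasible solutions --- roughly, planting a $(t-1)$-wise uniform distribution over $P^{-1}(1)$ on the locally well-spread constraint hypergraph of $\calI$ --- is precisely what Theorems~\ref{thm:sa+-intro}--\ref{thm:static-ls+-intro} establish, and is where the hypotheses ``$(t-1)$-wise uniform-supporting'' and ``$m \le n^{t/2-\eps}$'' are used. The only point here that needs genuine care is making the duality precise, and the one mildly non-routine case is the static \LSplus proof system defined in Section~\ref{sec:prelims}: we must check that it is dual to our static \LSplus SDP hierarchy in the direction we use. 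Fortunately only the weak-duality direction is needed --- a purported low-degree (or low-rank) refutation is itself a dual certificate whose existence is incompatible with a value-$1$ primal point --- so no absence-of-gap hypothesis is required, and once the proof system is written out the translation is mechanical; the analogous statements for \SAplus and \LSplus are already standard (e.g.\ \cite{GHP02, KI06, TW13}).
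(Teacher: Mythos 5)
Your proposal is correct and follows essentially the same route as the paper: the paper derives this theorem from Theorems~\ref{thm:sa+-intro}--\ref{thm:static-ls+-intro} via the refutation/relaxation duality stated in Section~\ref{sec:proof-systems} (proved for static \LSplus in Appendix~\ref{sec:static-ls}, where only the ``refutation $\Rightarrow$ infeasibility'' direction you use is needed), together with the fact that a rank-$r$ \LSplus refutation yields a degree-$r$ static \LSplus refutation. Your observation that weak duality suffices, and that the \SAplus and \LSplus bounds follow from the static \LSplus bound, matches the paper's argument.
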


In another line of work, Feldman, Perkins, and Vempala \cite{FPV15} showed that if a predicate $P$ is $(t-1)$-wise uniform supporting, then any statistical algorithm based on an oracle taking $L$ values requires $m = \wt{\Omega}(n^t/L)$ to refute.  They further show that the dimension of any convex program refuting such a CSP must be at least $\wt{\Omega}(n^{t/2})$.  These lower bounds are incomparable to the integrality gap results stated above: While statistical algorithms and arbitrary convex relaxations are more general computational models, standard SDP hierarchy relaxations for $k$-CSPs, including the \SAplus and \LSplus relaxations we study, have dimension $n^{O(k)}$ and are therefore not ruled out by this work.

\paragraph{Techniques}
To solve $\CSP(P)$ exactly, it suffices to optimize the expected fraction of satisfied constraints over distributions on assignments $[q]^n$.  This, of course, is hard, so relaxations like \SA, \SAplus, \LSplus, and SOS instead optimize over local distributions on assignments to smaller sets of variables .  As the number of rounds of the relaxation increases, we look at distributions over assignments to larger and larger sets of variables; the $r$-round relaxation considers distributions over assignments to sets of size at most $r$ and has size $n^{O(r)}$.  The $r$-round \SAplus relaxation requires that (1) local distributions on assignments to sets of at most $r$ variables satisfy consistency conditions and (2) the covariance matrix corresponding to these local distributions is positive semidefinite (PSD); see Section~\ref{sec:prelims} for precise definitions.  We know that when $m \leq n^{\mathrm{cmplx}(P)/2-\eps}$, there exist local distributions satisfying (1) \cite{BGMT12, OW14}.

When the number of constraints is $O(n)$, previous work showed that the covariance matrix corresponding to the \cite{BGMT12} local distributions is PSD by proving that it is diagonal and has nonnegative entries.  If the covariance matrix is diagonal, then there is no correlation between assignments to pairs of variables under the corresponding local distributions.  This condition holds for instances with number of constraints small enough, but correlations between variables arise as the number of constraints increases.

We prove PSDness in the presence of these correlations by showing that they must remain local.  Our argument extends a technique of Tulsiani and Worah \cite{TW13}.  We prove that the graph induced by correlations between variables must have small connected components, each of which corresponds to a small block of nonzero entries in the covariance matrix.  Since these blocks are small, condition (1) guarantees that for each block there exists an actual distribution on assignments to the variables of that block.  This means that each of these blocks is the covariance matrix of an actual distribution and must therefore be PSD.  The entire covariance matrix must then be PSD: It can be written as a block diagonal matrix in which each block is PSD.

\section{Preliminaries} \label{sec:prelims}

\subsection{Constraint satisfaction problems}
\begin{definition}
Given a predicate $P:[q]^k \to \{0,1\}$, an instance $\calI$ of the CSP$(P)$ problem with variables $x_1,\ldots,x_n$ is a multiset of $P$-constraints.  Each $P$-constraint is a tuple $(c,S)$, where $c \in [q]^k$ is the negation pattern and $S \in [n]^k$ is the scope.  The corresponding constraint is $P(x_S + c) = 1$, where $x_S = (x_i)_{i \in S}$ and $+$ is component-wise addition mod $q$.

In the decision version of CSP$(P)$, we want to determine whether or not there exists an assignment satisfying all constraints of a given instance $\calI$.  In the optimization version, the objective is to maximize the fraction of simultaneously satisfied constraints.  That is, we define $\mathrm{Val}_{\calI}(x) = \frac{1}{m}\sum_{(c,S) \in \calI} P(c + x_S)$ and want to find $x \in [q]^n$ maximizing  $\mathrm{Val}_{\calI}(x)$.  We will write $\max_{x} \mathrm{Val}_{\calI}(x)$ as $\mathrm{Opt(\calI)}$.
\end{definition}

We will show that SOS cannot solve the decision version for random instances $\calI$ with small enough number of constraints even though such instances are far from satisfiable.  This implies that SOS cannot show that $\mathrm{Opt(\calI)} < 1$ for such instances.

Next, we define our random model.  We consider instances in which $m$ constraints are drawn independently and uniformly at random from among all $q^k n^k$ possible constraints with replacement.  We distinguish between different orderings of the scope, as $P$ may not be symmetric.  The specific details of this definition are not important; our results hold for any similar model.  For example, see \cite[Appendix~D]{AOW15}.  A random instance is likely to be highly unsatisfiable: It is easy to show that $\mathrm{Opt(\calI)} = \frac{|P^{-1}(1)|}{q^k} + o(1)$ for $m \geq n \log n$ with high probability.

Given an instance $\calI$, the associated $k$-uniform hypergraph $H_{\calI}$ on $V = [n]$ has a hyperedge $S$ if and only if $S$ is the scope of some constraint of $\calI$.  Here, we disregard the orderings of the constraint scopes.  Given a hypergraph $H$ and a subset of vertices $T$, we let $H[T]$ be the subhypergraph induced by $T$.

We consider predicates for which there exist distributions over satisfying assignments that look uniform on every small enough set of bits.  Formally, we study the following condition.
\begin{definition}
A predicate $P:[q]^k \to \{0,1\}$ is $t$-wise uniform supporting if there exists a distribution $\mu$ over $[q]^k$ supported on $P$'s satisfying assignments such that $\Pr_{z \sim \mu}[z_T = \alpha] = q^{-|T|}$ for all $\alpha \in [q]^{|T|}$ and for all $T \subseteq [k]$ with $1 \leq |T| \leq t$.
\end{definition}

\subsection{LP and SDP hierarchies}

\subsubsection{Representing $\CSP(P)$ with polynomial inequalities}
An LP or SDP hierarchy is a procedure for constructing increasingly tight relaxations of a set of polynomial inequalities.  In our case, these polynomial inequalities represent the constraints of a CSP.  In this section, we describe two standard ways of writing these polynomial formulations.  Both encode the decision version of CSP$(P)$.  The relaxations based on these encodings are at least as strong as corresponding relaxations of the maximization version of CSP$(P)$, so our lower bounds for the decision problem imply lower bounds for the maximization problem.  We describe the case of binary alphabets first and then mention how CSPs with larger alphabets can be encoded using binary variables.

For \SA, \SAplus, and static \LSplus relaxations, we represent each constraint as a degree-$k$ polynomial equality.  Let $P'(x)$ be the unique multilinear degree-$k$ polynomial such that $P'(z) = P(z)$ for all $z \in \{0,1\}^k$.  Also, given $a \in [0,1]^k$ and $b \in \{0,1\}$, use $a^{(b)}$ to denote $a$ if $b = 0$ and $1-a$ if $b=1$.  For $z \in [0,1]^k$ and $c \in \{0,1\}^k$, let $z^{(c)} \in [0,1]^k$ be such that $(z^{(c)})_i = z_i^{(c_i)}$.  The degree-$k$ formulation of $\calI$ is defined as follows.
\begin{equation} \label{eq:deg-k-relax}
R_{\calI} = \{P'(x_S^{(c)}) = 1 \suchthat (c,S) \in \calI\}.
\end{equation}

For \LSplus, on the other hand, we have to start with a set of \emph{linear} inequalities and only consider the binary alphabet case.  Recall that any nontrivial arity-$k$ predicate $P$ can be represented as a conjunction of at most $2^k - 1$ disjunctions of arity $k$.  In particular, letting $F = \{z \in \{0,1\}^k \suchthat P(z)=0\}$, we see that
\begin{equation} \label{eq:cnf-form}
P(z) = \bigwedge_{f \in F} \bigvee_{i = 1}^k f_i \oplus z_i.
\end{equation}
Using \eqref{eq:cnf-form}, we can represent $\calI$ as a $k$-SAT instance with at most $(2^k - 1) \cdot m$ constraints and then consider the standard linear relaxation of this $k$-$\SAT$ instance.  For each clause $\bigvee_{i = 1}^k c_i \oplus z_i$, we add the inequality $\sum_{i=1}^k z_i^{(c_i)} \geq 1$ and obtain the following linear relaxation for $\calI$.
\begin{equation} \label{eq:lin-relax}
L_{\calI} = \left\{\sum_{i = 1}^k x_{S_i}^{(c_i \oplus f_i)} \geq 1~\middle|~ (c,S) \in \calI, f \in F\right\}.
\end{equation}

It is more natural to apply \SA, \SAplus, and static \LSplus to \eqref{eq:deg-k-relax}, but applying \SA, \SAplus, and static \LSplus to \eqref{eq:lin-relax} yields relaxations that are approximately equivalent.
\begin{lemma} \label{lem:diff-encodings}
Let $r \geq k$ and let $\calI$ be an instance of CSP$(P)$ with binary alphabet.  Then the following statements hold.
\begin{enumerate}
\item $\SAop^r(R_{\calI}) \subseteq \SAop^{r+k+1}(L_{\calI})$ and $\SAop^r(L_{\calI}) \subseteq \SAop^{r+k+1}(R_{\calI})$.
\item $\SAplusop^r(R_{\calI}) \subseteq \SAplusop^{r+k+1}(L_{\calI})$ and $\SAplusop^r(L_{\calI}) \subseteq \SAplusop^{r+k+1}(R_{\calI})$.
\item $\mathrm{StaticLS}_+^r(R_{\calI}) \subseteq \mathrm{StaticLS}_+^{r+k+1}(L_{\calI})$ and $\mathrm{StaticLS}_+^r(L_{\calI}) \subseteq \mathrm{StaticLS}_+^{r+k+1}(R_{\calI})$.
\end{enumerate}
\end{lemma}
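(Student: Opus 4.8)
The plan is to reduce all three parts to a single statement about the combinatorial objects witnessing feasibility of the level-$r$ relaxations, and then to translate such a witness for $R_{\calI}$ into one for $L_{\calI}$ and back, each translation costing at most $k+1$ levels. From the definitions of the three hierarchies, a feasible solution of the level-$r$ relaxation of a system of multilinear polynomial constraints is, in each case, a pseudo-expectation $\pE$ on multilinear polynomials of degree at most $r$ which (a) restricts to an honest probability distribution on the variables of every set of size at most $r$, consistently under marginalization; (b) carries the extra positivity data of the hierarchy --- for \SAplus, PSD-ness of the degree-$\le r$ moment matrix, and for static \LSplus, the associated family of PSD matrices attached to the permitted derivations; and (c) satisfies $\pE[m\cdot g]\ge 0$, or $\pE[m\cdot g]=0$ when $g$ is an equality, for every defining polynomial $g$ of the system and every monomial $m$ with $\deg m+\deg g\le r$. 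Conditions (a) and (b) do not mention the CSP; only (c) does. The crucial point is that for $R_{\calI}$, when $r\ge k$, condition (c) is equivalent to requiring that for every constraint $(c,S)\in\calI$ the local distribution of (a) on the scope $S$ is supported on the satisfying assignments of that constraint: $|S|=k\le r$ makes that distribution genuine, and $\pE[P'(x_S^{(c)})]=1$ then forces support on $\{z:P(z^{(c)})=1\}$ because $P'=P\le 1$ on the cube, while multiplying by a further monomial adds no new constraint, by consistency.

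For the easy direction I would observe that a witness for the level-$r$ relaxation of $R_{\calI}$ is, after the level shift, a witness for that of $L_{\calI}$. By \eqref{eq:cnf-form}, each clause polynomial $\sum_{i=1}^k x_{S_i}^{(c_i\oplus f_i)}-1$ takes a nonnegative value at every satisfying assignment of $(c,S)$, hence at every point of the support of the local distribution on the scope $S$; multiplying by an arbitrary monomial $m\ge 0$ and taking the local expectation on $S\cup\supp(m)$ gives $\pE[m\cdot(\text{clause})]\ge 0$, which is condition (c) for $L_{\calI}$. The positivity data (b) is literally unchanged, since no moments are altered. The only cost is that this expectation is read off a set of size as large as $k+\deg m$, and accounting for that against the level budget is exactly what produces the $k+1$-level loss.

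For the other direction I would recover the degree-$k$ equalities of $R_{\calI}$ from the clauses of $L_{\calI}$. Write $V_f=\prod_{i=1}^k x_{S_i}^{((c_i\oplus f_i)\oplus 1)}$ for the degree-$\le k$ monomial (a product of the terms $x_j$ and $1-x_j$) that indicates ``clause $f$ is violated.'' Over the Boolean cube one has $V_f\cdot\sum_i x_{S_i}^{(c_i\oplus f_i)}\equiv 0$ and $P'(x_S^{(c)})=\prod_{f\in F}(1-V_f)$. Multiplying the clause inequality for $f$ by the nonnegative monomial $V_f$ and using the first identity forces $\pE[V_f]\le 0$; since also $\pE[V_f]\ge 0$ by condition (a), we get $\pE[V_f]=0$, and then, again by (a), $\pE[V_f\cdot w]=0$ for every nonnegative monomial $w$ of bounded degree. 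Expanding $\prod_{f\in F}(1-V_f)$ by inclusion--exclusion and applying this to each term yields $\pE[m\cdot(P'(x_S^{(c)})-1)]=0$ for every $m$ of degree up to roughly $r-k-1$, which is condition (c) for $R_{\calI}$; the degree spent on the factor $V_f$ and on the clause polynomial is what produces the $k+1$-level loss. As before, (b) is untouched.

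It remains to lift these translations from \SA to \SAplus and static \LSplus. Because both translations merely reinterpret the \emph{same} pseudo-expectation --- no new variables and no new moments are created, only already-present moments of slightly larger degree are inspected --- the \SAplus moment matrix, and the static \LSplus family of PSD matrices, witnessing feasibility for one encoding serve, after the level shift, as witnesses for the other; one only has to check that each PSD or derivation condition the target relaxation imposes at level $r+k+1$ refers solely to moments guaranteed at level $r$ of the source, which is a degree count. I expect the main obstacle to be precisely this last step for static \LSplus, whose witnesses carry a layered structure of PSD matrices indexed by partial derivations: one must verify that the clause-versus-predicate translation commutes with this structure with a loss independent of derivation depth. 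The \SA and \SAplus cases, and the easy direction for all three hierarchies, should come down to the bookkeeping already sketched.
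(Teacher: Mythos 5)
Your proposal is correct and follows essentially the same route as the paper's proof: the identity $P'(x_S^{(c)})-1=-\sum_{f\in F}\indic{\{x_S^{(c)}=f\}}(x)$ (your $V_f$'s, with the cross-terms of the inclusion--exclusion vanishing) is exactly the paper's key identity, the two directions are argued by the same support/sign considerations, and the extension to \SAplus and static \LSplus is, as you suspect, immediate because in both hierarchies the PSD conditions depend only on the local distributions and not on the constraint encoding. The static \LSplus worry you flag at the end does not materialize, since its witnesses are just the matrices $M_{X,\alpha}$ built from the same moments.
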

We include the proof in Appendix~\ref{sec:diff-encodings}.  We define the $\SAop^r$, $\SAplusop^r$, and $\mathrm{StaticLS}_+^r$ operators in the next sections.

\paragraph{Larger alphabets}
We can also consider CSPs with variables taking values in $[q]$.  We use $\{0,1\}$-valued variables $x_{i,a}$ such that $x_{i,a} = 1$ if and only if variable $i$ is assigned value $a$.  Given predicate $P:[q]^k \to \{0,1\}$, let $P_{01}:\{0,1\}^{qk} \to \{0,1\}$ be a polynomial in variables $z_{i,a}$ such that
\[
P_{01}(z) = \sum_{\alpha \in \Z_q^k} P(\alpha) \prod_{i \in [k]} z_{i,\alpha_i}.
\]
Observe that $P(\alpha) = P_{01}(z)$ if $z_{i,\alpha_i} = 1$ and $z_{i,b} = 0$ for all $i \in [k]$ and $b \ne \alpha_i$.
Given $z \in \R^{qk}$ and $c \in [q]^k$, define $z^{(c)} \in \R^{qk}$ so that $z^{(c)}_{i,a} = z_{i,(a+c_i)\,\mathrm{mod}\,q}$.
The constraints in this formulation have degree $k$.

We can encode the decision problem as the following system of polynomial inequalities.
\begin{equation*} \label{eq:deg-k-relax-q}
R_{\calI} = \left\{P_{01}(x_{S \times [q]}^{(c)}) = 1\,\middle|\, (c,S) \in \calI\right\}.
\end{equation*}

\subsubsection{Sherali-Adams} The Sherali-Adams (\SA) linear programming hierarchy gives a family of locally consistent distributions on assignments to sets of variables.  As the size of these sets increases, the relaxation becomes tighter.
\begin{definition}
Let $\{D_S\}$ be a family of distributions $D_S$ over $[q]^{S}$ for all $S \subseteq [n]$ with $|S| \leq r$.  We say that $\{D_S\}$ is $r$-locally consistent if for all $T \subseteq S \subseteq [n]$ with $|S| \leq r$, the marginal of $D_S$ on $T$ is equal to $D_T$.  In symbols, we write this condition as $D_{T}(T = \alpha) = D_S(T = \alpha)$, where
\[
D_{S}(T = \alpha) = \sum_{\substack{\beta \in [q]^S \\ \beta_T = \alpha}} D_S(\beta).
\]
\end{definition}

We can now define the \SA relaxation of a set of constraints.  We will extend the distributions $\{D_S\}$ to distributions over assignments to $x$ in $[q]^n$ by choosing assignments to $x_{[n] \setminus S}$ uniformly at random.  Given a family of $r$-locally consistent distributions $\{D_S\}$ and a monomial $x^T = \prod_{i \in T} x_i$ with $|T| \leq r$, we define $\E_D[x^T] = \E_{D_T}[x^T]$.  We extend this definition to degree-$r$ polynomials by linearity.  For $T \subseteq [n]$ and $\alpha \in [q]^{|T|}$, let $\indic{\{x_T = \alpha\}}(x)$ be the indicator polynomial for the event $x_T = \alpha$.  Let $\deg(\cdot)$ denote the multilinear degree of a polynomial, i.e., the degree after replacing all appearances of $x_i^2$ with $x_i$.
\begin{definition}
Let $A = \{g_1(x) \geq 0,  g_2(x) \geq 0, \ldots, g_m(x) \geq 0\}$ be a set of polynomial constraints such that for all $g \in A$, $\deg(g) \leq r$.  We define the $r$-round \SA relaxation for $A$ to be the set of all families of distributions $\{D_S\}_{S\subseteq [n],\, |S|\le r}$ such that $D_S$ is a distribution over $[q]^{S}$ satisfying the following two properties.
\begin{enumerate}
\item $\{D_S\}_{S\subseteq [n],\, |S|\le r}$ is $r$-locally consistent.
\item $\E_{x \sim D}[\indic{\{x_T = \alpha\}}(x) \cdot g(x)] \geq 0$ for all $g \in A$, $T \subseteq [n]$, and $\alpha \in [q]^{|T|}$ such that $\deg(\indic{\{x_T = \alpha\}}(x) \cdot f(x)) \leq r$. \label{enum:sa-sat}
\end{enumerate}
We denote this set of families of distributions as $\SAop^r(A)$.  $\SAop^r(A)$ is a polytope of size $n^{O(r)}$ and we can therefore check feasibility in time $n^{O(r)}$.  We point out that $\SAop^{r+1}(A) \subseteq \SAop^r(A)$ and $\SAop^n(A)$ exactly captures feasibility of $A$ over $[q]^n$.
\end{definition}

Specialized to an instance $\calI$ of CSP$(P)$, we consider $\SAop^r(R_{\calI})$ and write Condition~\ref{enum:sa-sat} as
\[
\E_{x \sim D}[\indic{\{x_T = \alpha\}}(x) \cdot (P(x_S + c) - 1)] = 0
\]
for all $(c,S) \in \calI$, all $T \subseteq [n]$ such that $\deg(\indic{\{x_T = \alpha\}}(x) \cdot (P(x_S + c) - 1)) \leq r$, and all $\alpha \in [q]^{|T|}$.

We will only consider the stronger feasibility formulations of relaxations (rather than optimization versions) because lower bounds for these feasibility formulations immediately imply lower bounds for the corresponding optimization versions.

For larger alphabets, implementing this definition as a linear program requires requires writing the constraints as polynomials in Boolean variables.  We therefore identify $\SAop^r(A)$ with $\SAop^r(A_{01})$, where the $A_{01}$ is the inequalities of $A$ written as polynomials in the Boolean variables $x_{i,a}$ as described above.  We make this same identification for \SAplus and static \LSplus below.

In previous work, O'Donnell and Witmer \cite{OW14} extended a theorem of Benabbas et al. \cite{BGMT12} to obtain a lower bound for \SA relaxations of CSP($P$) with $m = n^{t/2-\eps}$ when $P$ is $(t-1)$-wise uniform supporting.
\begin{theorem}[\textup{\cite{BGMT12, OW14}}]
Let $P:[q]^k \to \{0,1\}$ be $(t-1)$-wise uniform-supporting and let $\mathcal{I}$ be a random instance of $\CSP(P)$ with $n$ variables and $m \leq n^{t/2 - \eps}$ constraints.  Then with high probability $\SAop^r(R_{\calI})$ is nonempty for $r=\Omega(n^{\frac{\eps}{t-2}})$ rounds.
\end{theorem}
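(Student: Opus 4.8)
\medskip
\noindent\textbf{Proof plan.}
The plan is to exhibit, with high probability over the draw of $\calI$, an explicit point of $\SAop^r(R_{\calI})$ for $r = \Omega(n^{\eps/(t-2)})$: an $r$-locally consistent family $\{D_S\}_{|S|\le r}$ of distributions obeying Condition~\ref{enum:sa-sat}. I would follow the construction of Benabbas et al.\ \cite{BGMT12}, as extended to $m = n^{t/2-\eps}$ by O'Donnell and Witmer \cite{OW14}, which builds the $D_S$ from two ingredients: a local-sparsity property of the constraint hypergraph $H_{\calI}$, and the $(t-1)$-wise uniform distribution $\mu$ over satisfying assignments of $P$.

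\emph{Step 1: structure of $H_{\calI}$.} First I would show that with high probability every vertex set $T$ with $|T|\le r$ induces a subhypergraph $H_{\calI}[T]$ that is \emph{peelable}: its hyperedges can be ordered $e_1,\dots,e_\ell$ so that each $e_j$ meets $e_1\cup\cdots\cup e_{j-1}$ in at most $t-1$ vertices; equivalently, every nonempty subhypergraph of $H_{\calI}[T]$ has an edge with at least $k-t+1$ vertices of degree one. This is a first-moment computation: a minimal non-peelable configuration on $s$ vertices has every edge with at most $k-t$ private vertices, and a double count of incidences then forces at least $s/(k-\tfrac t2)$ hyperedges, so the expected number of such configurations inside some $r$-set is at most $\sum_{s\le r}\binom ns\binom{m}{\ell_s}(s/n)^{k\ell_s}$ with $\ell_s=\lceil s/(k-\tfrac t2)\rceil$. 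Substituting $m\le n^{t/2-\eps}$ and accounting carefully for the number of excess hyperedges shows this sum is $o(1)$ for $r=\Omega(n^{\eps/(t-2)})$. (Separately, a Chernoff bound shows $\calI$ is far from satisfiable, which is what makes the resulting integrality gap meaningful; this is not needed for feasibility.)

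\emph{Step 2: the local distributions.} For $|S|\le r$, define $D_S$ by processing the edges of $H_{\calI}[S]$ in a peeling order $e_1,\dots,e_\ell$: when processing $e_j=(c,S')$, sample the values of the still-unassigned variables of $S'$ from the copy of $\mu$ shifted by the negation pattern $c$, conditioned on the values already fixed for those variables of $S'$ that appeared in $e_1,\dots,e_{j-1}$. The conditioning is legitimate because that overlap has size at most $t-1$, so the relevant marginal of (shifted) $\mu$ is uniform and matches the distribution of the previously fixed values; a short induction then shows that after all edges are processed the restriction of the sample to each edge scope is distributed exactly as the corresponding shifted $\mu$, hence every constraint with scope inside $S$ is satisfied. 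Finally assign all variables of $S$ touched by no edge of $H_{\calI}[S]$ independently and uniformly.

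\emph{Step 3: verification and the crux.} Condition~\ref{enum:sa-sat} is then immediate: whenever $\deg\bigl(\indic{\{x_T=\alpha\}}\cdot(P(x_{S'}+c)-1)\bigr)\le r$, the expectation is evaluated under $D_{T\cup S'}$ with $|T\cup S'|\le r$, and since $D_{T\cup S'}$ is supported on assignments satisfying the constraint with scope $S'\subseteq T\cup S'$, the product vanishes pointwise. The delicate point, and the main obstacle, is $r$-local consistency: for $T\subseteq S$ with $|S|\le r$ one must check that the marginal of $D_S$ on $T$ equals $D_T$, i.e.\ that the peeling construction is independent of the chosen order and of the ambient set $S$. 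The cleanest route is to recast the construction as a single pseudo-expectation functional $\pE$ on monomials of degree at most $r$, given by a closed formula that factors over the $\mu$-pieces attached along the peeling and so depends only on the induced constraint structure among the variables involved rather than on $S$, and then to read off $D_S$ as the genuine distribution on $[q]^S$ whose moments agree with $\pE$ — such a distribution existing precisely by the peelability of $H_{\calI}[S]$ from Step~1. Establishing this order-independence cleanly, together with pinning down the exponent $\eps/(t-2)$ in the union bound of Step~1, is where essentially all the work lies; the remainder is bookkeeping.
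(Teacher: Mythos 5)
Your Step~1 and Step~2 are in the right spirit: the paper's Lemma~\ref{lem:exp} is exactly the first-moment union bound you describe (phrased as expansion of arbitrary small sets of constraints rather than peelability of induced subhypergraphs), and your sequential peeling sampler on a fixed set $S$ produces precisely the paper's product distribution $\Pi'_S(\alpha)\propto\prod_{C\in\calC(S)}\mu_C(\alpha)$, since each conditioning step divides by a uniform marginal and so only changes the normalization. The gap is in Step~3, and it is not merely "where the work lies" --- the route you propose cannot work. You want a pseudo-expectation that ``depends only on the induced constraint structure among the variables involved,'' i.e.\ $D_T$ determined by $H_{\calI}[T]$ alone. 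In the regime $m\le n^{t/2-\eps}$ with $t\ge 3$ this is provably inconsistent: with high probability there exist two constraints $C_1,C_2$ whose scopes share $t-1$ variables $w_1,\dots,w_{t-1}$ (a birthday computation gives roughly $m^2 n^{-(t-1)}\approx n^{1-2\eps}$ such pairs), and taking $u\in\sigma(C_1)\setminus\sigma(C_2)$, $v\in\sigma(C_2)\setminus\sigma(C_1)$ (e.g.\ for $k$-XOR with $t=k$), the marginal of $D_S$ on $T=\{u,v\}$ for $S=\sigma(C_1)\cup\sigma(C_2)$ is genuinely correlated, while your $D_T$ built from $\calC(T)=\emptyset$ is uniform. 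Constraints straddling $T$ and its complement really do affect the marginal on $T$; indeed these unavoidable correlations are the entire subject of the paper's Sections~4--6.

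The missing idea is the \emph{closure}: one defines $\Pi_T(T=\alpha)=\Pi'_{\cl(T)}(T=\alpha)$, where $\cl(T)\supseteq T$ is an enlargement of size $O(|T|)$ chosen (Lemma~\ref{lem:closure}) so that $H_{\calI}-\cl(T)$ retains high boundary expansion. Consistency of $\Pi_T$ with $\Pi_S$ for $T\subseteq S$ is then proved by comparing both to $\Pi'_{U}$ for $U=\cl(T)\cup\cl(S)$: the constraints in $\calC(U)\setminus\calC(\cl(T))$ can be peeled one at a time, each contributing at least $k-t+1$ private variables \emph{outside} $\cl(T)$ (this is where boundary expansion of $H_{\calI}-\cl(T)$, not peelability of $H_{\calI}[U]$, is what is needed --- the degree-one vertices of your peeling could all lie inside $T$), so that $(t-1)$-wise uniformity makes each peeled sum a constant. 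Without the closure, your family $\{D_S\}$ is not a point of $\SAop^r(R_{\calI})$, so the proof does not go through as proposed.
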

As mentioned above, feasibility of the \SA relaxation of the decision version of CSP$(P)$ immediately implies that the optimization version of the \SA relaxation has value $1$, i.e., \SA thinks all constraints can be satisfied.  The same holds for the other relaxations we consider; we only look at feasibility for the rest of the paper.

\subsubsection{Sherali-Adams$_+$} The Sherali-Adams$_+$ (\SAplus) SDP hierarchy additionally requires the second moment matrix of these distributions to be PSD.  Given a family of local distributions $\{D_S\}$, define $M = M(D) \in \R^{(nq+1) \times (nq+1)}$ to be the symmetric matrix indexed by $(0, [n] \times [q])$ such that
\begin{align*}
M(0,0) &= 1\\
M(0,(i,a)) &= D_{\{i\}}(x_i = a) \\
M((i,a),(j,b)) &= D_{\{i,j\}}(x_i = a \wedge x_j = b).
\end{align*}
Note that the $((i,a), (i,b))$-element of $M$ is $D_{\{i\}}(x_i = a)$ if $a=b$ and is 0 if $a\ne b$.
\begin{definition}
Given a set of constraints $A$ as above, we define $\SAplusop^r(A)$ to be the set of all families of distributions $\{D_S\}_{S\subseteq [n],\, |S|\le r}$ over $[q]^{S}$ in $\SAop^r(A)$ satisfying the following additional condition.
\begin{enumerate}
\item[3.] $M$ is PSD.
\end{enumerate}
\end{definition}

We can equivalently define \SAplus by requiring the covariance matrix of the locally consistent $\{D_S\}$ distributions to be positive semidefinite.
\begin{definition}
Given $r$-locally consistent distributions $\{D_S\}$ with $r \ge 2$, the covariance matrix $\Sigma = \Sigma(D)$ is defined to be
\[
\Sigma((i,a),(j,b)) = D_{\{i,j\}}(x_i = a \wedge x_j = b) - D_{\{i\}}(x_i = a) \cdot D_{\{j\}}(x_j = b).
\]
\end{definition}

These two formulations are equivalent \cite{WJ08}.
\begin{lemma} \label{lem:sa+-covariance}
M is PSD if and only if $\Sigma$ is PSD.
\end{lemma}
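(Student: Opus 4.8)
The plan is to exhibit $M$ and $\Sigma$ as the two diagonal blocks produced by a single step of block Gaussian elimination on $M$, and then invoke the fact that congruence preserves positive semidefiniteness. First I would write $M$ in block form with respect to the partition of its index set into $\{0\}$ and $[n]\times[q]$:
\[
M = \begin{pmatrix} 1 & \mu^{\mathsf{T}} \\ \mu & M' \end{pmatrix},
\]
where $\mu \in \R^{nq}$ has entries $\mu_{(i,a)} = D_{\{i\}}(x_i = a)$ and $M' \in \R^{nq \times nq}$ is the principal submatrix of $M$ indexed by $[n]\times[q]$, i.e. $M'_{(i,a),(j,b)} = D_{\{i,j\}}(x_i = a \wedge x_j = b)$, with the convention (forced by the definition of $M$) that for $i=j$ this equals $D_{\{i\}}(x_i=a)$ when $a=b$ and $0$ when $a \ne b$. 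Comparing with the definition of $\Sigma$, one reads off immediately that $\Sigma = M' - \mu\mu^{\mathsf{T}}$, entrywise; in particular the diagonal-block bookkeeping for the $((i,a),(i,b))$ entries matches on both sides.

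Next I would take the unit lower-triangular (hence invertible) matrix $L = \begin{pmatrix} 1 & 0 \\ -\mu & I \end{pmatrix}$ and compute
\[
L M L^{\mathsf{T}} = \begin{pmatrix} 1 & 0 \\ 0 & M' - \mu\mu^{\mathsf{T}} \end{pmatrix} = \begin{pmatrix} 1 & 0 \\ 0 & \Sigma \end{pmatrix}.
\]
Since $L$ is invertible, $M$ and $L M L^{\mathsf{T}}$ are congruent, so $M \succeq 0$ iff $L M L^{\mathsf{T}} \succeq 0$. As $L M L^{\mathsf{T}}$ is block diagonal with one block equal to the positive scalar $1$ and the other equal to $\Sigma$, it is PSD iff $\Sigma$ is PSD. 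Chaining these equivalences gives the lemma. (Equivalently, one may phrase the same argument via the Schur complement: because the $(0,0)$ pivot equals $1 > 0$, $M \succeq 0$ iff its Schur complement $M' - \mu\mu^{\mathsf{T}} = \Sigma$ is PSD.)

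There is essentially no genuine obstacle here; the proof is a three-line computation. The only points to be careful about are bookkeeping: one should check that the definitions of $M$ and $\Sigma$ genuinely agree on the "degenerate" entries where two row/column indices share the same coordinate $i$, so that $M' - \mu\mu^{\mathsf{T}}$ is literally $\Sigma$ and not merely $\Sigma$ up to those entries; and that the hypothesis $r \ge 2$ is what makes the pairwise marginals $D_{\{i,j\}}$ appearing in both definitions available. Both are immediate from the stated definitions.
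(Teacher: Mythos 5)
Your proof is correct and is essentially the paper's argument: both reduce the claim to the Schur complement fact that $M \succeq 0$ iff $M' - \mu\mu^{\mathsf{T}} = \Sigma \succeq 0$ (the paper proves that fact by completing the square in the quadratic form, which is the same computation as your congruence $LML^{\mathsf{T}}$ written in quadratic-form language). Your extra care about the $((i,a),(i,b))$ entries is a nice touch but, as you note, immediate from the definitions.
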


We include the proof in Appendix~\ref{apdx:Schur}.  The covariance matrix condition will be more convenient for us to work with.  For an instance $\calI$ of CSP($P$), we will consider feasibility of $\SAplusop^r(R_{\calI})$.  We prove the following theorem.
\begin{customtheorem}{\ref{thm:sa+-intro}}[restated]
Let $P:[q]^k \to \{0,1\}$ be $(t-1)$-wise uniform-supporting and let $\mathcal{I}$ be a random instance of $\CSP(P)$ with $n$ variables and $m \leq n^{t/2 - \eps}$ constraints.  Then with high probability $\SAplusop^r(R_{\calI})$ is nonempty for $r=\Omega(n^{\frac{\eps}{t-2}})$ rounds.
\end{customtheorem}

\subsubsection{\Lovasz-Schrijver$_+$} We now define the \Lovasz-Schrijver$_+$ (\LSplus) SDP relaxation for problems whose variables are $0/1$-valued.  Given an initial polytope $K \in \R^n$, we would like to generate a sequence of progressively tighter relaxations.  To define one \LSplus lift-and-project step, we will use the cone
\[
\wt{K} = \{(\lambda, \lambda x_1, \ldots, \lambda x_n)~|~\lambda > 0, x_1,\ldots,x_n \in K\}.
\]
$K$ can be recovered by taking the intersection with $x_0 = 1$.
\begin{definition} \label{def:ls+}
Let $\wt{K}$ be a convex cone in $\R^{n+1}$.  Then the lifted \LSplus cone $N_+(\wt{K})$ is the cone of all $y \in \R^{n+1}$ for which there exists an $(n+1) \times (n+1)$ matrix $Y$ satisfying the following:
\begin{enumerate}
\item $Y$ is symmetric and positive semidefinite.
\item For all $i$, $Y_{ii} = Y_{i0} = y_i$.
\item For all $i$, $Y_i \in \wt{K}$ and $Y_0-Y_i \in \wt{K}$
\end{enumerate}
where $Y_i$ is the $i$th column of $Y$.  Then we define $N_+(K)$ to be $N_+(\wt{K}) \cap \{x_0 = 1\}$.  The $r$-round \LSplus relaxation of a polytope $K$ results from applying the $N_+$ operator $r$ times.  That is, we define $N_+^r(K) = N_+(N_+^{r-1}(K))$.  $Y$ is called a protection matrix for $y$.  A solution to the $r$-round \LSplus relaxation for a polytope $K \in \R^n$ defined by $\poly(n)$ linear constraints can be computed in time $n^{O(r)}$ using an SDP.
\end{definition}

For an instance $\calI$ of CSP($P$), we will consider feasibility of $N_+^r(L_{\calI})$.
\begin{customtheorem}{\ref{thm:ls+-intro}}[restated]
Let $P:[q]^k \to \{0,1\}$ be $(t-1)$-wise uniform-supporting and let $\mathcal{I}$ be a random instance of $\CSP(P)$ with $n$ variables and $m \leq n^{t/2 - \eps}$ constraints.  Then with high probability $N_+^r(L_{\calI})$ is nonempty for $r=\Omega(n^{\frac{\eps}{t-2}})$ rounds.
\end{customtheorem}

\subsubsection{Static \LSplus} The static \LSplus relaxation strengthens both \SAplus and \LSplus.  As in the case of \SAplus, we start with a family of $r$-locally consistent distributions and then further require that they satisfy certain positive semidefiniteness constraints.  In particular, for all $X \subseteq [n]$ with $|X| \leq r-2$ and all $\alpha \in [q]^{X}$, define the matrices $M_{X,\alpha} = M_{X,\alpha}(D) \in \R^{(nq+1) \times (nq+1)}$ as follows.
\begin{align*}
M_{X,\alpha}(0,0) &= D_{X}(X = \alpha)\\
M_{X,\alpha}(0,(i,a)) &= D_{\{i\} \cup X}(x_i = a \wedge X = \alpha) \\
M_{X,\alpha}((i,a),(j,b)) &= D_{\{i,j\} \cup X}(x_i = a \wedge x_j = b \wedge X = \alpha).
\end{align*}
In addition to the \SA constraints, the $r$-round static \LSplus relaxation $\mathrm{StaticLS}_+^r(F)$ satisfies the following constraint.

\begin{definition}
Given a set of constraints $A$ as above, we define $\mathrm{StaticLS}_+^r(A)$ to be the set of all families of distributions $\{D_S\}_{S\subseteq [n],\, |S|\le r}$ over $[q]^{S}$ in $\SAop^r(A)$ satisfying the following additional condition.
\begin{enumerate}
\item[3$'$.] $M_{X,\alpha}$ is PSD for all $X \subseteq [n]$ with $|X| \leq r-2$ and all $\alpha \in [q]^{X}$.
\end{enumerate}
\end{definition}

Observe that these positive semidefiniteness constraints can be formulated as a positive semidefiniteness constraint for a single matrix.  In particular, let $\mathcal{M}$ be the block diagonal matrix with the $M_{X,\alpha}$'s on the diagonal.  Then $\mathcal{M}$ has size at most $(qn)^{O(r)}$ and is PSD if and only if all of the $M_{X,\alpha}$'s are PSD.  Unlike \LSplus, this hierarchy easily generalizes to non-binary alphabets.

Alternatively, we can think of this hierarchy as requiring covariance matrices of conditional distributions to be positive semidefinite.  Given a set of local distributions $\{D_S\}$, a set of variables $X \subseteq [n]$, and an assignment $\alpha \in \{0,1\}^{X}$ such that $D_X(X=\alpha) > 0$, define a set of conditional local distributions $\{D_{S|x=\alpha}\}$ by $D_{S|X=\alpha}(\beta) = \frac{D_{S \cup X}(S = \beta \wedge X = \alpha)}{D_X(X = \alpha)}$.  Tulsiani and Worah showed that if the initial family of local distributions is $r$-locally consistent, then the corresponding family of conditional distributions will be $(r-|X|)$-locally consistent \cite{TW13}.
\begin{lemma}[\textup{\cite[Lemma~3.13]{TW13}}] \label{lem:cond-local-cons}
Let $X \subseteq [n]$ and let $\{D_S\}$ be a family of $r$-locally consistent distributions for sets $S \subseteq [n]$ such that $S \cap X = \emptyset$ and $|S \cup X| \leq r$.   Then the family of conditional distributions $\{D_{S| X = \alpha}\}$ is $(r-|X|)$-locally consistent for any $\alpha \in \{0,1\}^X$ such that $D_X(X=\alpha) > 0$.
\end{lemma}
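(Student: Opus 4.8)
The plan is to verify the definition of $(r-|X|)$-local consistency directly, feeding the hypothesis that $\{D_S\}$ is $r$-locally consistent into the definition of the conditional distributions $D_{S|X=\alpha}(\beta) = D_{S\cup X}(S=\beta \wedge X=\alpha)/D_X(X=\alpha)$. Fix $\alpha \in [q]^X$ with $D_X(X=\alpha) > 0$; this quantity is well defined because $|X| \le r$ and $\{D_S\}$ is $r$-locally consistent, so $D_X$ is unambiguous. First I would check that each $D_{S|X=\alpha}$ is a genuine probability distribution over $[q]^S$: nonnegativity is immediate, and summing the numerator over all $\beta \in [q]^S$ gives $\sum_{\beta} D_{S\cup X}(S=\beta \wedge X=\alpha) = D_{S\cup X}(X=\alpha) = D_X(X=\alpha)$, where the last step uses $r$-local consistency applied to $X \subseteq S\cup X$ and $|S\cup X| \le r$; hence the total mass is $1$.

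For the consistency property itself, take any $T \subseteq S$ with $S \cap X = \emptyset$ and $|S| \le r - |X|$, so that $|S \cup X| = |S| + |X| \le r$ and both $D_{S|X=\alpha}$ and $D_{T|X=\alpha}$ are in range. For $\gamma \in [q]^T$, the marginal of $D_{S|X=\alpha}$ on $T$ at $\gamma$ is $\frac{1}{D_X(X=\alpha)}\sum_{\beta \in [q]^S,\ \beta_T = \gamma} D_{S\cup X}(S=\beta \wedge X=\alpha)$, and the inner sum is exactly $D_{S\cup X}(T=\gamma \wedge X=\alpha)$ by definition of a marginal probability. Since $T \cup X \subseteq S \cup X$ and $|S\cup X|\le r$, $r$-local consistency of $\{D_S\}$ yields $D_{S\cup X}(T=\gamma \wedge X=\alpha) = D_{T\cup X}(T=\gamma \wedge X=\alpha)$. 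Dividing by $D_X(X=\alpha)$ gives precisely $D_{T|X=\alpha}(\gamma)$, which is the required equality of marginals.

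I do not expect a genuine obstacle: the statement is essentially the chain rule for marginals combined with the hypothesis. The only thing to be careful about is the size bookkeeping — conditioning on $X$ "costs" $|X|$ rounds, so the conditional family is only consistent up to sets of size $r-|X|$, and one must check that every distribution invoked in the computation (namely $D_{S\cup X}$ and $D_{T\cup X}$) is indexed by a set of size at most $r$, which is exactly what the constraint $|S| \le r-|X|$ together with $S\cap X=\emptyset$ guarantees. The other caveat worth stating explicitly is the requirement $D_X(X=\alpha) > 0$, needed so that all conditional probabilities are defined; consistency of $\{D_S\}$ then ensures the denominator is the same regardless of which $D_{S\cup X}$ it is read off from.
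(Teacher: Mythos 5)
Your proposal is correct and follows essentially the same argument as the paper's proof: both reduce the claim to the identity $D_{S\cup X}(T=\gamma \wedge X=\alpha) = D_{T\cup X}(T=\gamma \wedge X=\alpha)$ (and the analogous identity for the denominator), which follows directly from $r$-local consistency of the original family given the size bound $|S \cup X| \le r$. The extra check that each $D_{S|X=\alpha}$ normalizes to $1$ is a harmless addition not spelled out in the paper.
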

We include the proof of this lemma in Appendix~\ref{sec:cond-local-cons-pf}.

Given such a family of conditional local consistent distributions, the conditional covariance matrix $\Sigma_{X,\alpha}$ is defined as follows.
\begin{definition}
Given $X \subseteq [n]$, $\alpha \in \{0,1\}^X$, and $r$-locally consistent conditional distributions $\{D_{S | X=\alpha}\}$ with $r \geq 2$, the conditional covariance matrix $\Sigma_{X,\alpha} = \Sigma_{X,\alpha}(D)$ is defined to be
\[
\Sigma_{X,\alpha}((i,a),(j,b)) = \begin{cases} D_{\{i,j\}|X=\alpha}(x_i = a \wedge x_j = b) - D_{\{i\}|X=\alpha}(x_i = a) \cdot D_{\{j\}|X=\alpha}(x_j = b) & \text{if $D_X(\alpha) > 0$} \\
0 & \text{otherwise.}
\end{cases}
\]
\end{definition}
Lemma~\ref{lem:sa+-covariance} generalizes to these conditional covariance matrices.
\begin{lemma} \label{lem:ls+-covariance}
$M_{X,\alpha}$ is PSD if and only if $\Sigma_{X,\alpha}$ is PSD.
\end{lemma}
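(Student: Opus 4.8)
The plan is to deduce this from the unconditional statement, Lemma~\ref{lem:sa+-covariance}, applied to the family of conditional local distributions $\{D_{S|X=\alpha}\}$. The key observation is that, up to a nonnegative scalar, $M_{X,\alpha}$ is exactly the second-moment matrix $M$ of these conditional distributions, while $\Sigma_{X,\alpha}$ is exactly their covariance matrix $\Sigma$; the lemma then follows by chaining known equivalences.

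First I would dispose of the degenerate case $D_X(X=\alpha)=0$. Here $\Sigma_{X,\alpha}=0$ by definition, and $r$-local consistency of $\{D_S\}$ (valid since $|X|+1\le r$ and $|X|+2\le r$) forces $D_{\{i\}\cup X}(x_i=a\wedge X=\alpha)=0$ and $D_{\{i,j\}\cup X}(x_i=a\wedge x_j=b\wedge X=\alpha)=0$ as well, so $M_{X,\alpha}=0$; both matrices are trivially PSD.

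Next, assume $\lambda:=D_X(X=\alpha)>0$. Unwinding $D_{S|X=\alpha}(\beta)=D_{S\cup X}(S=\beta\wedge X=\alpha)/\lambda$, one checks entrywise that $M_{X,\alpha}(0,0)=\lambda=\lambda\cdot 1$, $M_{X,\alpha}(0,(i,a))=\lambda\, D_{\{i\}|X=\alpha}(x_i=a)$, and $M_{X,\alpha}((i,a),(j,b))=\lambda\, D_{\{i,j\}|X=\alpha}(x_i=a\wedge x_j=b)$, so $M_{X,\alpha}=\lambda\, M(\{D_{S|X=\alpha}\})$. Since $\lambda>0$, $M_{X,\alpha}$ is PSD iff $M(\{D_{S|X=\alpha}\})$ is PSD. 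By Lemma~\ref{lem:cond-local-cons}, the hypothesis $|X|\le r-2$ guarantees that $\{D_{S|X=\alpha}\}$ is $(r-|X|)$-locally consistent with $r-|X|\ge 2$, so Lemma~\ref{lem:sa+-covariance} applies: $M(\{D_{S|X=\alpha}\})$ is PSD iff its covariance matrix $\Sigma(\{D_{S|X=\alpha}\})$ is PSD. Finally, comparing definitions, $\Sigma(\{D_{S|X=\alpha}\})((i,a),(j,b))=D_{\{i,j\}|X=\alpha}(x_i=a\wedge x_j=b)-D_{\{i\}|X=\alpha}(x_i=a)\cdot D_{\{j\}|X=\alpha}(x_j=b)=\Sigma_{X,\alpha}((i,a),(j,b))$, so the three equivalences combine to give the claim.

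There is no genuine obstacle here; the statement is routine bookkeeping built on the unconditional case. The only points needing care are (i) the branch $\lambda=0$, where one must invoke local consistency to see that all of $M_{X,\alpha}$ vanishes, and (ii) verifying the local-consistency budget, i.e.\ that $|X|+2\le r$ (precisely the hypothesis) so Lemma~\ref{lem:cond-local-cons} supplies enough consistency to feed into Lemma~\ref{lem:sa+-covariance}. An equivalent route that bypasses the reduction is a direct Schur-complement computation mirroring Appendix~\ref{apdx:Schur}: for $\lambda>0$ the Schur complement of the $(0,0)$ entry $\lambda$ of $M_{X,\alpha}$ is exactly $\lambda\,\Sigma_{X,\alpha}$, so $M_{X,\alpha}\succeq 0$ iff $\lambda\,\Sigma_{X,\alpha}\succeq 0$ iff $\Sigma_{X,\alpha}\succeq 0$.
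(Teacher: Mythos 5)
Your proof is correct and is essentially the paper's argument: the paper simply remarks that the proof is ``essentially identical'' to that of Lemma~\ref{lem:sa+-covariance}, and your reduction (normalize by $\lambda = D_X(X=\alpha)$ to recognize $M_{X,\alpha}$ and $\Sigma_{X,\alpha}$ as the moment and covariance matrices of the conditional distributions, then apply the Schur-complement equivalence) is exactly that argument with the details filled in. Your explicit treatment of the degenerate case $D_X(X=\alpha)=0$, where local consistency forces $M_{X,\alpha}=0$, is a worthwhile point of care that the paper glosses over.
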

The proof is essentially identical to that of Lemma~\ref{lem:sa+-covariance}.

We note that we have not seen this hierarchy defined in this form in previous work, but it is dual to the static \LSplus proof system defined in \cite{GHP02} and described below in Section~\ref{sec:proof-systems} (see Proposition~\ref{prop:static-ls+-sdp-iff-ref}).  For a random instance $\calI$ of CSP($P$), we will study feasibility of $\mathrm{StaticLS}_+^r(R_{\calI})$.
\begin{customtheorem}{\ref{thm:static-ls+-intro}}[restated]
Let $P:[q]^k \to \{0,1\}$ be $(t-1)$-wise uniform-supporting and let $\mathcal{I}$ be a random instance of $\CSP(P)$ with $n$ variables and $m \leq n^{t/2 - \eps}$ constraints.  Then with high probability $\mathrm{StaticLS}_+^r(R_{\calI})$ is nonempty for $r=\Omega(n^{\frac{\eps}{t-2}})$ rounds.
\end{customtheorem}

\subsection{The dual point of view: Semialgebraic proof systems} \label{sec:proof-systems}
We can also consider refutation of CSPs via semialgebraic proof systems.  Starting from a set of polynomial inequalities $A = \{g_1(x) \geq 0, g_2(x) \geq 0, \ldots, g_m(x) \geq 0\}$ called axioms that encode the constraints of the CSP, semialgebraic proof systems derive new inequalities that are implied by $A$.  To prove that an instance is unsatisfiable, we wish to derive the contradiction $-1 \geq 0$.  We consider the \SA, \SAplus, \LSplus, and static \LSplus proof systems.  In this section, we again think of $\indic{\{x_T = \alpha\}}$ as a polynomial.  We give definitions for binary alphabets.  For larger alphabets, we can rewrite constraints in terms of binary variables as described above.  When refuting a CSP $\calI$, we start with constraints $A_{\calI}$ for \SA, \SAplus, and static \LSplus and use constraints $L_{\calI}$ in the case of \LSplus. 

\paragraph{The \SA proof system} An \SA refutation of $A$ has the form
\begin{equation*}
\sum_{g \in A} g(x) \sum_i \gamma_{g,i} \cdot \indic{\{x_{T_{g,i}} = \alpha_{g,i}\}}(x) + \sum_j (x_j^2-x_j) h_j(x) = -1,
\end{equation*}
where $\gamma_{g,i} \geq 0$ and the $h_j$'s are arbitrary polynomials.   This is a proof of unsatisfiability because under the assumption that the all $x_i$ variables are in $\{0,1\}$, every term of the above sum most be nonnegative and it is therefore a contradiction.  The \emph{degree} of this proof is the maximum degree of any of the terms.  The \emph{size} of the proof is the number of terms in the sum.  An degree-$r$ \SA refutation exists if and only if the corresponding $r$-round \SA relaxation is unsatisfiable; this follows from Farkas' Lemma.  The \SA proof system is automatizable: A degree-$r$ \SA refutation may be found in time $n^{O(r)}$ by solving an LP if it exists.  The \SA proof system first appeared in \cite{GHP02} with the name static $\LSop^{\infty}$; the dual hierarchy of LP relaxations was introduced by \cite{SA90}.

Lower bounds of Benabbas et al. \cite{BGMT12} and O'Donnell and Witmer \cite{OW14} immediately imply that there are no degree-$n^{\eps/(t-2)}$ \SA refutations for random instances of CSP$(P)$ with $(t-1)$-wise uniform supporting $P$ and $m \leq n^{t/2-\eps}$.
\begin{corollary}
Let $P:[q]^k \to \{0,1\}$ be $(t-1)$-wise uniform-supporting and let $\mathcal{I}$ be a random instance of $\CSP(P)$ with $n$ variables and $m \leq n^{t/2 - \eps}$ constraints.  Then with high probability there is no degree-$r$ \SA refutation of $A_{\calI}$ for $r=\Omega(n^{\frac{\eps}{t-2}})$.
\end{corollary}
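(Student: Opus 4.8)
The plan is to derive the corollary directly from the \SA feasibility result of Benabbas et al.\ and O'Donnell and Witmer stated above, together with the LP-duality fact recorded in the \SA proof-system paragraph. Recall that, as a consequence of Farkas' Lemma, a degree-$r$ \SA refutation of an axiom set $A$ exists \emph{if and only if} the $r$-round \SA relaxation $\SAop^r(A)$ is empty. Hence, to show that with high probability no degree-$r$ \SA refutation of $A_{\calI}$ exists, it suffices to show that with high probability $\SAop^r(A_{\calI})$ is nonempty, where $A_{\calI}$ is the degree-$k$ formulation $R_{\calI}$ (or, for $q>2$, its Boolean encoding $R_{\calI,01}$, as fixed in Section~\ref{sec:prelims}).

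But nonemptiness of $\SAop^r(R_{\calI})$ for $r=\Omega(n^{\frac{\eps}{t-2}})$, with high probability over a random instance $\calI$ with $m\le n^{t/2-\eps}$ constraints and $(t-1)$-wise uniform-supporting $P$, is exactly the theorem of \cite{BGMT12, OW14} quoted just before the \SAplus section. For $n$ large enough this value of $r$ exceeds $k=\deg(g)$ for every axiom $g\in R_{\calI}$, so the hypothesis $\deg(g)\le r$ in the definition of $\SAop^r$ and in the Farkas-Lemma equivalence is met, and the argument applies verbatim. Combining the two facts gives the corollary: the event ``$\SAop^r(R_{\calI})$ is nonempty'' coincides with the event ``there is no degree-$r$ \SA refutation of $A_{\calI}$'', and the former holds with high probability.

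There is essentially no obstacle here; this is a bookkeeping corollary. The only points to keep straight are (i) that the ``degree-$r$ refutation exists $\iff$ $r$-round relaxation is infeasible'' equivalence is genuinely two-sided, so feasibility of the relaxation \emph{rules out} low-degree refutations rather than merely being consistent with them; and (ii) that the degree parameter and the choice of polynomial formulation line up on both sides — which they do by the conventions of Section~\ref{sec:prelims}, and, if one preferred to pass between $R_{\calI}$ and $L_{\calI}$, by Lemma~\ref{lem:diff-encodings} at the cost of an additive $k+1$ in the degree (which does not affect the $\Omega(n^{\frac{\eps}{t-2}})$ bound).
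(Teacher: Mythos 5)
Your proposal is correct and matches the paper's own (one-line) justification: the corollary follows immediately by combining the \SA feasibility theorem of \cite{BGMT12, OW14} with the Farkas'-Lemma equivalence between infeasibility of the $r$-round \SA relaxation and existence of a degree-$r$ \SA refutation. The bookkeeping remarks about the choice of encoding and the degree threshold are fine and do not change the argument.
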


\paragraph{The \SAplus proof system} In \SAplus, a proof has the form
\begin{equation*}
\sum_{f \in A} g(x) \sum_i \gamma_{g,i} \cdot \indic{\{x_{T_{g,i}} = \alpha_{g,i}\}}(x) + \sum_j (x_j^2-x_j) h_j(x) + \sum_{\ell} \eta^2_{\ell}(x) = -1
\end{equation*}
where $\gamma_{g,i} \geq 0$, the $h_j$'s are arbitrary polynomials, and the $\eta_{\ell}$'s are affine functions.  The dual \SAplus hierarchy of SDP relaxations first appeared in \cite{Rag08}.  Again, a degree-$r$ \SAplus refutation exists if and only if the corresponding $r$-round \SAplus relaxation is infeasible.  We do not know any of any results on the automatizability of \SAplus.

Our lower bound for \SAplus SDP relaxations of random instances of CSP$(P)$ implies a lower bound on the degree of \SAplus refutations of random instances of CSP$(P)$.
\begin{corollary}
Let $P:[q]^k \to \{0,1\}$ be $(t-1)$-wise uniform-supporting and let $\mathcal{I}$ be a random instance of $\CSP(P)$ with $n$ variables and $m \leq n^{t/2 - \eps}$ constraints.  Then with high probability there is no degree-$r$ \SAplus refutation of $A_{\calI}$ for $r=\Omega(n^{\frac{\eps}{t-2}})$.
\end{corollary}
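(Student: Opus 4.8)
The hard work here is already packaged in Theorem~\ref{thm:sa+-intro} (restated); the plan is to obtain the corollary as the routine dual translation of that integrality gap. Rather than invoke SDP strong duality as a black box, I would argue directly with the family of local distributions. Fix $r=\Omega(n^{\eps/(t-2)})$ as in Theorem~\ref{thm:sa+-intro}, and condition on the high-probability event that $\SAplusop^r(R_{\calI})$ is nonempty; let $\{D_S\}_{|S|\le r}$ be a feasible family. Define the pseudo-expectation $\pE[\,\cdot\,]=\E_D[\,\cdot\,]$ on multilinear polynomials of degree at most $r$ as in Section~\ref{sec:prelims} (here $A_{\calI}$ is the degree-$k$ system $R_{\calI}$; if one instead works with the linear encoding $L_{\calI}$, first pass between encodings via Lemma~\ref{lem:diff-encodings}, absorbing its $O(k)$-round overhead into the $\Omega(\cdot)$).

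Next I would suppose toward a contradiction that $A_{\calI}$ admits a degree-$r$ \SAplus refutation
\[
\sum_{g\in A_{\calI}} g(x)\sum_i \gamma_{g,i}\,\indic{\{x_{T_{g,i}}=\alpha_{g,i}\}}(x)\;+\;\sum_j (x_j^2-x_j)\,h_j(x)\;+\;\sum_\ell \eta_\ell^2(x)\;=\;-1,
\]
with $\gamma_{g,i}\ge 0$ and each $\eta_\ell$ affine, every term of which has multilinear degree at most $r$, and apply $\pE$ to both sides. The right side is $-1$. On the left side: each product $g\cdot\indic{\{x_T=\alpha\}}$ is an axiom times an indicator of degree $\le r$, so Condition~\ref{enum:sa-sat} of the \SA relaxation specialized to $\calI$ gives $\pE[g\cdot\indic{\{x_T=\alpha\}}]=0$ for the equality axioms $P(x_S+c)-1=0$ (and $\ge 0$ for inequality axioms), hence $\gamma_{g,i}\cdot\pE[\,\cdot\,]\ge 0$ in either case; each term $(x_j^2-x_j)h_j$ multilinearizes to the zero polynomial and contributes $0$; and for each affine $\eta_\ell=c_0+\sum_i c_i x_i$ we get $\pE[\eta_\ell^2]=c^\top M(D)\,c\ge 0$ since $M(D)$ is PSD by Condition~$3$ of \SAplus (equivalently $\Sigma(D)\succeq 0$ by Lemma~\ref{lem:sa+-covariance}). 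Summing, $\pE$ of the left side is $\ge 0$, contradicting $-1<0$. Thus no degree-$r$ \SAplus refutation of $A_{\calI}$ exists on this event, which is exactly the corollary. (This is just the ``only if'' direction of the equivalence ``a degree-$r$ \SAplus refutation of $A_{\calI}$ exists $\iff$ $\SAplusop^r(A_{\calI})$ is infeasible'' asserted in Section~\ref{sec:proof-systems}.)

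I do not expect a genuine obstacle: the entire content lies in Theorem~\ref{thm:sa+-intro}, and the step above is bookkeeping. The only points needing care are (i) matching the round parameter of the feasible relaxation to the degree at which refutations are excluded — both are $\Omega(n^{\eps/(t-2)})$, so constant factors are harmless; (ii) the multilinear-degree accounting, so that $\pE$ is in fact defined on every term of the purported refutation; and (iii) handling equality- versus inequality-axiom sign conventions, which the observation $\gamma\cdot 0=0\ge 0$ disposes of. The same template immediately yields the analogous rank lower bound for \LSplus (from Theorem~\ref{thm:ls+-intro}, using that a rank-$r$ \LSplus refutation of $L_{\calI}$ exists iff $N_+^r(L_{\calI})=\emptyset$) and the degree lower bound for static \LSplus (from Theorem~\ref{thm:static-ls+-intro}, applying $\pE$ together with the conditional-covariance PSD constraints $M_{X,\alpha}\succeq 0$).
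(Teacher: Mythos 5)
Your proposal is correct and matches the paper's approach: the paper treats this corollary as an immediate consequence of Theorem~\ref{thm:sa+-intro} together with the asserted equivalence between degree-$r$ \SAplus refutations and infeasibility of the $r$-round \SAplus relaxation. You simply spell out the weak-duality direction of that equivalence (applying the pseudo-expectation to a purported refutation, using Condition~\ref{enum:sa-sat} for the axiom terms and PSDness of $M(D)$ for the squared affine terms), which is exactly the argument the paper carries out in detail for the static \LSplus case in Appendix~\ref{sec:static-ls}.
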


\paragraph{The \LSplus proof system} The \LSplus proof system \cite{LS91} is dynamic, meaning that a proof is built up over a series of steps.  A proof in \LSplus is a sequence of polynomial inequalities $Q(x) \geq 0$.  A new inequality is derived from the inequalities already in the proof using inference rules.  When $\deg(Q(x)) \leq 1$, we allow the following:
\begin{equation*}
\infer{x_i \cdot Q(x) \geq 0}{Q(x) \geq 0} \qquad \qquad \infer{(1-x_i) \cdot Q(x) \geq 0}{Q(x) \geq 0} \qquad \qquad \infer{Q(x)^2 \geq 0}{}.
\end{equation*}
We also allow nonnegative linear combinations:
\begin{equation*}
\infer{\alpha \cdot Q(x) + \beta \cdot R(x) \geq 0}{Q(x) \geq 0 & R(x) \geq 0}
\end{equation*}
for $\alpha, \beta \geq 0$.
An \LSplus proof is therefore a sequence of ``lifting" steps in which we multiply by some $x_i$ or $(1-x_i)$ to get a degree-2 polynomial and ``projection" steps in which we take nonnegative linear combinations to reduce the degree back to $1$.  We can view an \LSplus proof as a directed acyclic graph with inequalities at each vertex and $-1 \geq 0$ at the root.  The \emph{rank} of an \LSplus proof is the maximum number of lifting steps in any path to the root.  A rank-$r$ \LSplus refutation exists if and only if the corresponding rank-$r$ \LSplus relaxation is infeasible \cite{Das01}.  The \LSplus proof system is not known to be automatizable; see Section~8 of \cite{BOG+06} for details.

Our \LSplus lower bound implies a lower bound on the rank of \LSplus refutations of random instances of CSP$(P)$.
\begin{corollary}
Let $P:[q]^k \to \{0,1\}$ be $(t-1)$-wise uniform-supporting and let $\mathcal{I}$ be a random instance of $\CSP(P)$ with $n$ variables and $m \leq n^{t/2 - \eps}$ constraints.  Then with high probability there is no rank-$r$ \LSplus refutation of $L_{\calI}$ for $r=\Omega(n^{\frac{\eps}{t-2}})$.
\end{corollary}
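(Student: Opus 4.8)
The plan is to obtain this corollary immediately from Theorem~\ref{thm:ls+-intro} together with the standard duality between the \LSplus lift-and-project hierarchy and the \LSplus proof system; no new combinatorial work is required, since all of the difficulty already resides in the (assumed) proof of Theorem~\ref{thm:ls+-intro}.

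First I would recall the completeness/soundness correspondence of Dash~\cite{Das01}: for a polytope $K \subseteq \R^n$ defined by a system of linear inequalities, a rank-$r$ \LSplus refutation of that system exists if and only if the $r$-round relaxation $N_+^r(K)$ is empty. I would apply this with $K = L_{\calI}$, the polytope cut out by the linear inequalities in~\eqref{eq:lin-relax} that encode the CSP as a $k$-$\SAT$ instance. Thus ``no rank-$r$ \LSplus refutation of $L_{\calI}$'' is logically equivalent to ``$N_+^r(L_{\calI}) \neq \emptyset$''. (Here it is worth noting explicitly that $L_{\calI}$ encodes the decision version of $\CSP(P)$, so feasibility of the relaxation is precisely the statement that the hierarchy fails to certify unsatisfiability; this is why the equivalence is with the refutation-existence question and not with some optimization-value condition.)

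Next I would invoke the restated Theorem~\ref{thm:ls+-intro} from Section~\ref{sec:prelims}, which asserts exactly that, with high probability over the choice of the random instance $\calI$ with $m \le n^{t/2-\eps}$ constraints, the relaxation $N_+^r(L_{\calI})$ is nonempty for $r = \Omega(n^{\eps/(t-2)})$. Composing this with the duality from the previous step yields that, with high probability, there is no rank-$r$ \LSplus refutation of $L_{\calI}$ for such $r$, which is precisely the statement of the corollary.

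There is no real obstacle at this level: the ``hard part'' is entirely contained in establishing Theorem~\ref{thm:ls+-intro}, i.e.\ exhibiting a feasible point of $N_+^r(L_{\calI})$ — equivalently, building $r$-locally consistent local distributions together with \LSplus protection matrices whose conditional covariance matrices remain PSD despite the variable correlations induced by $m$ growing past $O(n)$ — and that theorem is taken as given here. The only other ingredient, the rank-vs-relaxation equivalence, is already available from \cite{Das01}, so the corollary follows by a one-line combination of two cited facts.
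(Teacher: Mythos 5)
Your proposal is correct and is exactly the paper's argument: the corollary is obtained by combining Theorem~\ref{thm:ls+-intro} (feasibility of $N_+^r(L_{\calI})$ with high probability) with the equivalence, due to Dash~\cite{Das01} and quoted in Section~\ref{sec:proof-systems}, between the existence of a rank-$r$ \LSplus refutation and infeasibility of the rank-$r$ \LSplus relaxation. No further work is needed.
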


\paragraph{The static \LSplus proof system} A static \LSplus proof \cite{GHP02} has the following form.
\begin{equation*}
\sum_i \gamma_i \cdot b_i(x) \cdot \indic{\{x_{T_i} = \alpha_i\}}(x) + \sum_j (x_j^2-x_j) h_j(x) = -1,
\end{equation*}
where $\gamma_{i} \geq 0$, either $b_{i} \in A$ or $b_i = \eta_i^2$ for some affine function $\eta_i$, and the $h_j$'s are arbitrary polynomials.
Note that this proof system as at least as powerful as the \SAplus proof system: Terms in the sum may be products of a $\indic{\{x_{T} = \alpha\}}$ term and the square of an affine function instead of just the square of an affine function or just an axiom multiplied by a $\indic{\{x_{T} = \alpha\}}$ term.  Once again, there exists a static \LSplus refutation if and only if the corresponding static \LSplus relaxation is infeasible.  We do not know of any proof of this statement in the literature, so we include one in Appendix~\ref{sec:static-ls}.  We do not know of any results on the automatizability of static \LSplus.

Again, our static \LSplus lower bound implies a lower bound on the degree of static \LSplus refutations of CSP$(P)$.
\begin{corollary}
Let $P:[q]^k \to \{0,1\}$ be $(t-1)$-wise uniform-supporting and let $\mathcal{I}$ be a random instance of $\CSP(P)$ with $n$ variables and $m \leq n^{t/2 - \eps}$ constraints.  Then with high probability there is no degree-$r$ \LSplus refutation of $A_{\calI}$ for $r=\Omega(n^{\frac{\eps}{t-2}})$.
\end{corollary}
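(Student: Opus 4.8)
The plan is to obtain this corollary immediately from Theorem~\ref{thm:static-ls+-intro} via the duality between the static \LSplus SDP hierarchy and the static \LSplus proof system. The key fact is that a degree-$r$ static \LSplus refutation of the axiom set exists if and only if the $r$-round static \LSplus relaxation is infeasible (Proposition~\ref{prop:static-ls+-sdp-iff-ref}, proved in Appendix~\ref{sec:static-ls}); hence it suffices to exhibit a feasible point of the relaxation, which is exactly the content of Theorem~\ref{thm:static-ls+-intro}.

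Concretely, I would first apply Theorem~\ref{thm:static-ls+-intro} to the random instance $\calI$: with high probability $\mathrm{StaticLS}_+^r(R_{\calI})$ is nonempty for $r = \Omega(n^{\frac{\eps}{t-2}})$, i.e.\ there is a family of $r$-locally consistent distributions satisfying the axioms whose conditional moment matrices $M_{X,\alpha}$ are all PSD. Then I would invoke Proposition~\ref{prop:static-ls+-sdp-iff-ref} in the contrapositive: feasibility of the $r$-round relaxation rules out any degree-$r$ static \LSplus refutation. The only bookkeeping point is that Theorem~\ref{thm:static-ls+-intro} is stated for the degree-$k$ axiom system $R_{\calI}$; if the proof system is phrased with a slightly different axiom set one passes through Lemma~\ref{lem:diff-encodings}, which costs only $k+1$ additional rounds and is absorbed into the $\Omega(\cdot)$. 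This yields that with high probability there is no degree-$r$ static \LSplus refutation of $\calI$ for $r = \Omega(n^{\frac{\eps}{t-2}})$.

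I expect no real obstacle in this corollary itself; all the difficulty lies upstream. Establishing Proposition~\ref{prop:static-ls+-sdp-iff-ref} is a standard LP/SDP-duality argument — write the refutation as a nonnegativity certificate of $-1$, dualize, and identify the dual solution with the local distributions and the PSD blocks $M_{X,\alpha}$ of the relaxation, matching the $\indic{\{x_T=\alpha\}}$-multipliers and affine-square terms of the proof form with the \SA and PSD constraints. The genuinely hard part is Theorem~\ref{thm:static-ls+-intro}, which requires showing that the \cite{BGMT12}-style local distributions keep their conditional covariance matrices $\Sigma_{X,\alpha}$ PSD even once variable correlations appear, via the connected-components-of-the-correlation-graph argument sketched in the Techniques section.
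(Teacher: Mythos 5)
Your proposal is correct and is exactly the route the paper intends: Theorem~\ref{thm:static-ls+-intro} gives feasibility of $\mathrm{StaticLS}_+^r(R_{\calI})$ with high probability, and Proposition~\ref{prop:static-ls+-sdp-iff-ref} (the refutation-exists-iff-relaxation-infeasible equivalence proved in Appendix~\ref{sec:static-ls}) then rules out any degree-$r$ static \LSplus refutation, with Lemma~\ref{lem:diff-encodings} handling any mismatch of axiom encodings at the cost of $O(k)$ rounds absorbed into the $\Omega(\cdot)$. No gaps.
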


\subsection{Expansion} \label{sec:expansion}
Given a set of constraints $T$, we define its neighbor set $\Gamma(T) \subseteq [n]$ as $\Gamma(T) = \{v \in [n] ~|~ v \in S \text{ for some $(c,S) \in T$}\}$.
We can then define expansion.
\begin{definition}
An instance $\calI$ of \textup{CSP}$(P)$ is $(s,e)$-expanding if for every set of constraints $T$ with $|T| \leq s$, $|\Gamma(T)| \geq e|T|$.
\end{definition}

We can also define $T$'s boundary neighbors as $\partial T = \{v \in [n] ~|~ v \in S \text{ for exactly one $(c,S) \in T$}\}$
and define a corresponding notion of boundary expansion.
\begin{definition}
An instance $\calI$ of \textup{CSP}$(P)$ is $(s,e)$-boundary expanding if for every set of constraints $T$ with $|T| \leq s$, $|\partial T| \geq e|T|$.
\end{definition}

We state a well-known connection between expansion and boundary expansion appearing in, e.g., \cite{TW13}.
\begin{fact}
\label{fact:exp_to_bexp}
$(s,k-d)$-expansion implies $(s,k-2d)$-boundary expansion.
\end{fact}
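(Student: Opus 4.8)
\textbf{Proof plan for Fact~\ref{fact:exp_to_bexp}.}
The plan is to fix an arbitrary constraint set $T$ with $|T|\le s$ and run a double-counting argument on the bipartite incidence structure between the constraints of $T$ and the vertices of $\Gamma(T)$. For each $v\in\Gamma(T)$ let $d_v$ denote the number of constraints $(c,S)\in T$ whose scope contains $v$ (membership in the underlying vertex set of the tuple $S$). By definition $\partial T$ is exactly the set of $v$ with $d_v=1$, and every $v\in\Gamma(T)$ has $d_v\ge 1$. First I would record the upper bound $\sum_{v\in\Gamma(T)} d_v \le k|T|$, which holds because each constraint has a scope that is a $k$-tuple and hence touches at most $k$ distinct vertices.

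Next I would split $\Gamma(T)$ into the boundary part $\partial T$ and the remaining part $\Gamma(T)\setminus\partial T$, on which every vertex has $d_v\ge 2$. This gives the lower bound
\[
\sum_{v\in\Gamma(T)} d_v \;\ge\; |\partial T| + 2\bigl(|\Gamma(T)|-|\partial T|\bigr) \;=\; 2|\Gamma(T)| - |\partial T|.
\]
Combining the two bounds yields $|\partial T| \ge 2|\Gamma(T)| - k|T|$. Finally I would invoke the $(s,k-d)$-expansion hypothesis, $|\Gamma(T)|\ge (k-d)|T|$, to conclude $|\partial T| \ge 2(k-d)|T| - k|T| = (k-2d)|T|$. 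Since $T$ was an arbitrary constraint set of size at most $s$, this establishes $(s,k-2d)$-boundary expansion.

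The argument is essentially a one-line counting inequality, so there is no substantive obstacle; the only point requiring a little care is the bookkeeping around scopes being ordered tuples in $[n]^k$ rather than genuine $k$-subsets of $[n]$. One should be explicit that $\Gamma(T)$, $\partial T$, and the $d_v$'s are all defined via set-membership in the underlying vertex sets of the scopes, so that a constraint with repeated variables in its scope simply contributes fewer incidences, which only helps the upper bound $\sum_v d_v\le k|T|$. With that convention fixed, every inequality above goes through verbatim.
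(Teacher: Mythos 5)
Your argument is correct and is essentially identical to the paper's proof: both derive the inequality $|\partial T| \ge 2|\Gamma(T)| - k|T|$ by noting that each non-boundary vertex absorbs at least two of the at most $k|T|$ vertex--constraint incidences, and then apply the expansion hypothesis. Your extra remark about scopes being ordered tuples (so repeated variables only help the upper bound) is a reasonable bit of bookkeeping that the paper leaves implicit.
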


It is also well-known that randomly-chosen sets of constraints have high expansion \cite{BGMT12, OW14}:
\begin{lemma}
\label{lem:exp}
Fix $\eps > 0$.  With high probability, a set of $m = n^{t/2-\eps}$ constraints chosen uniformly at random is both $\left(n^{\frac{\eps}{t-2}}, k-\frac{t}{2}+\frac{\eps}{2}\right)$-expanding and $\left(n^{\frac{\eps}{t-2}}, k-t+\eps\right)$-boundary expanding.
\end{lemma}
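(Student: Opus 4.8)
The plan is a standard first-moment bound on the probability that some set of at most $s$ constraints spans too few vertices, followed by a one-line deduction of boundary expansion from Fact~\ref{fact:exp_to_bexp}. Write $s = n^{\eps/(t-2)}$ and $e = k-\tfrac{t}{2}+\tfrac{\eps}{2}$; note $e>\tfrac{k}{2}$ because $t\le k$ for every nontrivial $(t-1)$-wise uniform supporting predicate. I would first dispose of the one artifact of sampling with replacement: if two constraints had equal scope $S$, then those two constraints together would span $|S|\le k<2e$ vertices and $(s,e)$-expansion would fail at $\ell=2$. But the expected number of colliding pairs is at most $\binom{m}{2}/n^{k}\le m^{2}/n^{k}=n^{t-k-2\eps}=o(1)$ since $t\le k$, so with high probability all scopes are distinct, and I condition on this event.

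For the main estimate, fix a set $T$ of $\ell\le s$ constraints and a vertex set $W$ with $|W|=\lceil e\ell\rceil-1<e\ell$. Since the constraints are independent and each places its $k$ coordinates uniformly in $[n]$, $\Pr[\Gamma(T)\subseteq W]=(|W|/n)^{k\ell}$, and $|\Gamma(T)|<e\ell$ forces the existence of such a $W$. A union bound over $T$ and $W$ therefore gives
\[
\Pr[\calI\text{ is not }(s,e)\text{-expanding}]\ \le\ \sum_{\ell=1}^{s}\binom{m}{\ell}\binom{n}{\lceil e\ell\rceil-1}\Big(\tfrac{\lceil e\ell\rceil-1}{n}\Big)^{k\ell}.
\]
Using $\binom{m}{\ell}\le(em/\ell)^{\ell}$ and $\binom{n}{w}\le(en/w)^{w}$ and collecting powers of $n$ and $\ell$, the $\ell$-th summand is at most $e\ell\cdot\big(e^{k+1}\,m\,n^{e-k}\,\ell^{\,k-e-1}\big)^{\ell}$. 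The parameters are calibrated exactly so that $m\,n^{e-k}=n^{t/2-\eps}\cdot n^{-t/2+\eps/2}=n^{-\eps/2}$, while $\ell^{\,k-e-1}=\ell^{\,t/2-\eps/2-1}\le s^{\,t/2-\eps/2-1}=n^{\eps/2-\eps^{2}/(2(t-2))}$ (in the relevant range $t\ge 3$ the exponent $t/2-\eps/2-1$ is nonnegative; if it is negative the factor is at most $1$ and the bound only improves). Multiplying, the base of the $\ell$-th power is $O\big(n^{-\eps^{2}/(2(t-2))}\big)=o(1)$, so for large $n$ the sum is dominated by its $\ell=1$ term and is $o(1)$; hence $\calI$ is $(s,e)$-expanding with high probability. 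Applying Fact~\ref{fact:exp_to_bexp} with $d=\tfrac{t}{2}-\tfrac{\eps}{2}$, any $(s,k-d)$-expanding instance is $(s,k-2d)=(s,k-t+\eps)$-boundary expanding, so both conclusions hold together with high probability.

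The calculation is routine, and the only point that requires care is the bookkeeping in the union bound: one must retain the $\ell^{-\ell}$ and $w^{-w}$ factors coming from $\binom{m}{\ell}\le(em/\ell)^{\ell}$ and $\binom{n}{w}\le(en/w)^{w}$ rather than crudely bounding $\binom{n}{w}\le n^{w}$, and one must use that $w=\lceil e\ell\rceil-1$ is an integer strictly below $e\ell$. It is precisely the cancellation among these terms that drives the net exponent of $n$ in the base of the $\ell$-th power below zero; a coarser estimate leaves a positive exponent and the bound fails.
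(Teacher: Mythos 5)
Your proof is correct and follows essentially the same first-moment union bound as the paper's proof in Appendix~\ref{sec:expansion-proofs}: for each $\ell\le s$ bound the probability that some set of $\ell$ constraints spans fewer than $e\ell$ vertices by $\binom{m}{\ell}\binom{n}{w}(w/n)^{k\ell}$ with $w=\lceil e\ell\rceil-1$, check that the base of the resulting $\ell$-th power is $O(n^{-\eps^2/(2(t-2))})=o(1)$ after substituting $m=n^{t/2-\eps}$ and $\ell\le n^{\eps/(t-2)}$, and deduce boundary expansion from Fact~\ref{fact:exp_to_bexp}. The only superfluous element is your preliminary conditioning on distinct scopes: a pair of indices with coinciding scopes is already a set $T$ covered by the main union bound, so that step can be deleted --- which also removes the slight awkwardness that $\Pr[\Gamma(T)\subseteq W]=(|W|/n)^{k\ell}$ is computed under the unconditioned product measure rather than the conditioned one.
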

We give proofs of both of these statements in Appendix~\ref{sec:expansion-proofs}.

\subsection{Constructing consistent local distributions} \label{sec:cons-local-dists}
Here, we recall a construction of consistent local distributions supported on satisfying assignments.  We will study these distributions in the remainder of this paper, showing that they are valid \SAplus, \LSplus, and static \LSplus solutions.  They were first used in \cite{BGMT12} and have appeared in many subsequent works (e.g, \cite{TW13, OW14, BCK15}).  In Appendix~\ref{sec:cons-local-proofs}, we give proofs of all results mentioned in this section.

We first need to mention the notion of a closure of a set of variables.  For $S \subseteq [n]$, let $H_{\calI}-S$ denote the hypergraph $H_{\calI}$ with the vertices of $S$ and all hyperedges contained in $S$ removed.  Intuitively, the closure $\cl(S)$ of a set $S \subseteq [n]$ is a superset of $S$ that is not too much larger than $S$ and is not very well-connected to the rest of the instance in the sense that $H_{\calI} - S$ has high expansion.
\begin{lemma} [\textup{\cite{BGMT12, TW13}}] \label{lem:closure}
If $H_{\calI}$ is $(s_1,e_1)$-expanding and $S$ is a set of variables such that $|S| < (e_1-e_2)s_1$ for some $e_2 \in (0,e_1)$, then there exists a set $\cl(S) \subseteq [n]$ such that $S \subseteq \cl(S)$ and $H_{\calI} - \cl(S)$ is $(s_2,e_2)$-expanding with $s_2 \geq s_1-\frac{|S|}{e_1-e_2}$ and $\cl(S) \leq \frac{k+2e_1-e_2}{2(e_1-e_2)}|S|$.
\end{lemma}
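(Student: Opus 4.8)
The plan is to build $\cl(S)$ by an iterative \emph{absorption} process and then bound its size using the expansion of $H_{\calI}$. Set $s_2 := s_1 - \tfrac{|S|}{e_1-e_2}$, which is positive by the hypothesis $|S| < (e_1-e_2)s_1$ (and clearly satisfies the required $s_2 \ge s_1 - \tfrac{|S|}{e_1-e_2}$), and put $C_0 := S$. Given $C_i$, if $H_{\calI}-C_i$ is already $(s_2,e_2)$-expanding, stop. Otherwise pick a witnessing set of edges $T_{i+1}$ of $H_{\calI}-C_i$ with $1 \le |T_{i+1}| \le s_2$ and $|\Gamma(T_{i+1})\setminus C_i| < e_2|T_{i+1}|$, and set $C_{i+1} := C_i \cup \Gamma(T_{i+1})$. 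Every edge of $H_{\calI}-C_i$ has at least one endpoint outside $C_i$, so a nonempty witness adds at least one new vertex; hence the $C_i$ strictly increase, the process halts at some $C_\ell =: \cl(S)$, and by construction $S \subseteq \cl(S)$ and $H_{\calI}-\cl(S)$ is $(s_2,e_2)$-expanding. It remains to bound $|\cl(S)|$.

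The first observation is that the absorbed edge sets $T_1,\dots,T_\ell$ are pairwise disjoint: after step $i$ every edge of $T_i$ has all of its vertices inside $C_i$, so it is not an edge of $H_{\calI}-C_j$ for any $j\ge i$ and cannot belong to a later $T_j$. Writing $T^{(j)} := T_1\cup\dots\cup T_j$, this gives $|T^{(j)}| = \sum_{i\le j}|T_i|$; moreover $\Gamma(T^{(j)}) = \bigcup_{i\le j}\Gamma(T_i) \subseteq C_j$, and telescoping $|C_i|-|C_{i-1}| < e_2|T_i|$ gives $|C_j| \le |S| + e_2|T^{(j)}|$.

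The crux is to show $|T^{(j)}| \le s_1$ for every $j$, so that $(s_1,e_1)$-expansion of $H_{\calI}$ applies throughout. Suppose not and let $j$ be minimal with $|T^{(j)}| > s_1$; then $|T^{(j-1)}|\le s_1$, so expansion gives $e_1|T^{(j-1)}| \le |\Gamma(T^{(j-1)})| \le |C_{j-1}| \le |S| + e_2|T^{(j-1)}|$, hence $|T^{(j-1)}| \le \tfrac{|S|}{e_1-e_2}$; but $|T_j|\le s_2$, so $|T^{(j)}| \le |T^{(j-1)}| + |T_j| \le \tfrac{|S|}{e_1-e_2} + s_2 = s_1$, a contradiction. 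Therefore $|T^{(j)}|\le s_1$ for all $j$, and re-running the same chain of inequalities with $j=\ell$ yields $|T^{(\ell)}| \le \tfrac{|S|}{e_1-e_2}$. Plugging into the bound from the previous paragraph, $|\cl(S)| = |C_\ell| \le |S| + e_2|T^{(\ell)}| \le \bigl(1+\tfrac{e_2}{e_1-e_2}\bigr)|S| = \tfrac{e_1}{e_1-e_2}|S|$, which is at most $\tfrac{k+2e_1-e_2}{2(e_1-e_2)}|S|$ since $e_2 < e_1 \le k$; in the writeup I would match the bookkeeping of \cite{BGMT12, TW13} to land exactly on their stated constant. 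The one step that really needs care is keeping the running count $|T^{(j)}|$ below $s_1$ at \emph{every} stage — that is precisely why $s_2$ cannot exceed $s_1 - \tfrac{|S|}{e_1-e_2}$ — while the rest is routine telescoping and a disjointness argument.
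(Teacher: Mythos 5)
Your proof is correct and follows essentially the same iterative-absorption strategy as the paper's closure algorithm, differing only in bookkeeping: you fix the target $s_2$ up front and iterate until it is met (so expansion at termination is automatic and the minimal-counterexample argument keeps the absorbed edge count below $s_1$), whereas the paper decrements $s_2$ and uses a maximality argument on each witness set. Both yield the same bound $|\cl(S)| \le \frac{e_1}{e_1-e_2}|S|$, which, as you note, implies the constant stated in the lemma since $e_2 < e_1 \le k$.
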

We give a formal definition of the closure and a proof of this lemma in Appendix~\ref{sec:cons-local-proofs}.

We now use the closure to define consistent local distributions supporting on satisfying assignments.  We assume that there exists a $(t-1)$-wise independent distribution $\mu$ over satisfying assignments to $P$.  For a constraint $C = (c,S)$, let $\mu_C$ be the distribution defined by $\mu_C(z) = \mu(z_1+c_1,\ldots,z_k+c_k)$ and let $\calC(S)$ be the set of constraints whose support is entirely contained within $S$.  For a set of variables $S \subseteq [n]$ and an assignment $\alpha \in [q]^S$ , we use the notation $S = \alpha$ to indicate the the variables of $S$ are labeled according to the assignment $\alpha$.  For a constraint $C=(c,S)$ and an assignment $\alpha$ to a superset of $S$, let $\mu_C(\alpha) = \mu_C(\alpha_S)$.

For $S \subseteq [n]$, we can then define the distribution $\Pi'_S$ over $[q]^S$ as
\[
\Pi'_S(S = \alpha) =  \frac{1}{Z_{S}}  \prod_{C \in \calC(S)} \mu_C(\alpha), \text{ where } Z_{S} = \sum_{\beta \in [q]^{S}} \prod_{C \in \calC(S)} \mu_C(\beta).
\]
Using $\{\Pi'_S\}$, we can then define local distributions $\Pi_S$ by $\Pi_S(S = \alpha) = \Pi'_{\cl(S)}(S = \alpha)$.
\cite{BGMT12, OW14} proved that these distributions are $r$-locally consistent for $r = n^{\frac{\eps}{t-2}}$.
\begin{theorem}[\textup{\cite{BGMT12, OW14}}] \label{thm:local-cons}
For a random instance $\calI$ with $m \leq n^{t/2-\eps}$, the family of distributions $\{\Pi_S\}_{|S| \leq r}$ is $r$-locally consistent for $r = \Omega\left(n^{\frac{\eps}{t-2}}\right)$ and is supported on satisfying assignments.
\end{theorem}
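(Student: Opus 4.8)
Two statements must be established, both holding with high probability over $\calI$: (a) each $\Pi_S$ with $|S|\le r$ is supported on assignments to $S$ that satisfy every constraint of $\calI$ whose scope lies inside $S$; and (b) the family $\{\Pi_S\}_{|S|\le r}$ is $r$-locally consistent, i.e.\ $\Pi_S(T=\alpha)=\Pi_T(T=\alpha)$ for all $T\subseteq S$ with $|S|\le r$ and all $\alpha$. The plan is to condition once on the high-probability event (Lemma~\ref{lem:exp}) that $H_{\calI}$ is $(s_1,e_1)$-expanding \emph{and} $(s_1,k-t+\eps)$-boundary expanding with $s_1=n^{\eps/(t-2)}$ and $e_1=k-\tfrac t2+\tfrac\eps2$, and then to argue deterministically with $r=c_0 s_1$ for a small absolute constant $c_0$. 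Applying Lemma~\ref{lem:closure} with $e_2=e_1-\delta$ for a tiny constant $0<\delta<\eps/2$, every $S$ with $|S|\le r$ gets a closure $\cl(S)\supseteq S$ of size $O(|S|)$ for which $H_{\calI}-\cl(S)$ is $(s_2,e_2)$-expanding with $s_2=\Omega(s_1)$; by Fact~\ref{fact:exp_to_bexp} this hypergraph is $(s_2,k-t+\eps-2\delta)$-boundary expanding, and the crucial point is $k-t+\eps-2\delta>k-t$. Since expansion also forces $|\calC(U)|=O(|U|)$, any vertex set of size $O(r)$ carries at most $s_1$ constraints, which form a sub-hypergraph of $H_{\calI}$ that is boundary expanding above $k-t$.

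The technical engine I would isolate is a \emph{peeling principle}. Write $\mu_C(\gamma)=\mu(c+\gamma_{S'})$ for a constraint $C=(c,S')$ and an assignment $\gamma$ to a superset of $S'$. Claim: if $A\subseteq V\subseteq[n]$ and every subset $\calD'$ of $\calC(V)\setminus\calC(A)$ has more than $(k-t)\,|\calD'|$ vertices of $V\setminus A$ lying in exactly one member of $\calD'$, then
\[
\sum_{\gamma\in[q]^{V}:\ \gamma_A=\alpha}\ \prod_{C\in\calC(V)}\mu_C(\gamma)\ =\ c(V,A)\cdot\prod_{C\in\calC(A)}\mu_C(\alpha)
\]
for some constant $c(V,A)>0$ independent of $\alpha$. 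I would prove this by factoring the $\calC(A)$-constraints (whose value depends only on $\gamma_A=\alpha$) out of the sum and showing $\sum_{\gamma_{V\setminus A}}\prod_{C\in\calC(V)\setminus\calC(A)}\mu_C$ is a positive constant, by induction on $|\calC(V)\setminus\calC(A)|$. The hypothesis furnishes a $C^\star$ with at least $k-t+1$ vertices of $V\setminus A$ appearing in no other constraint of $\calC(V)\setminus\calC(A)$, hence in no constraint contributing to the product; summing over those ``private'' coordinates leaves the rest of the product untouched and turns $\mu_{C^\star}$ into its marginal on the at most $t-1$ remaining coordinates of $C^\star$, which $(t-1)$-wise uniformity of $\mu$ makes a positive constant. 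Deleting the private vertices leaves $\calC(V)\setminus\calC(A)$, minus $C^\star$, literally unchanged as a hypergraph, so the inductive hypothesis applies; the base case $\calC(V)\setminus\calC(A)=\emptyset$ gives $q^{|V\setminus A|}$. In particular $Z_V=c(V,\emptyset)>0$ whenever $\calC(V)$ is boundary expanding above $k-t$, so all $\Pi'_V$ below are well defined.

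Now (a) and (b) follow quickly. For (a): if $C=(c,S')\in\calI$ with $S'\subseteq S\subseteq\cl(S)$ then $C\in\calC(\cl(S))$, so each term of $\Pi_S(S=\alpha)=\Pi'_{\cl(S)}(S=\alpha)\propto\sum_{\gamma_{\cl(S)}:\ \gamma_S=\alpha}\prod_{C'\in\calC(\cl(S))}\mu_{C'}(\gamma)$ vanishes unless $\mu(c+\alpha_{S'})>0$, which, since $\supp(\mu)\subseteq P^{-1}(1)$, forces $P(c+\alpha_{S'})=1$; positivity of $Z_{\cl(S)}$ is the $A=\emptyset$ case of the peeling principle (its hypothesis holds because $\calC(\cl(S))$ is boundary expanding above $k-t$ and has at most $s_1$ constraints). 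For (b): fix $T\subseteq S$ with $|S|\le r$ and put $U=\cl(S)\cup\cl(T)$, so $|U|=O(r)$. The peeling principle with $V=U$, $A=\cl(S)$---its hypothesis holds since $\calC(U)\setminus\calC(\cl(S))$, with $\cl(S)$ removed, is a sub-hypergraph of the boundary-expanding (above $k-t$) hypergraph $H_{\calI}-\cl(S)$---gives, for every $\beta\in[q]^{\cl(S)}$,
\[
\Pi'_U(\cl(S)=\beta)\ \propto\ \prod_{C\in\calC(\cl(S))}\mu_C(\beta)\ \propto\ \Pi'_{\cl(S)}(\cl(S)=\beta),
\]
so, being probability distributions in $\beta$, they coincide; marginalizing onto $T\subseteq\cl(S)$ yields $\Pi'_U(T=\alpha)=\Pi'_{\cl(S)}(T=\alpha)=\Pi_S(T=\alpha)$. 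The identical computation with $\cl(T)$ replacing $\cl(S)$ gives $\Pi'_U(T=\alpha)=\Pi'_{\cl(T)}(T=\alpha)=\Pi_T(T=\alpha)$, hence $\Pi_S(T=\alpha)=\Pi_T(T=\alpha)$ for all $\alpha$, which is exactly $r$-local consistency.

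The part I expect to be the main obstacle is the parameter calibration buried in the peeling principle: the whole argument hinges on the closure of \emph{every} $\le r$-element set still being boundary expanding with parameter strictly above $k-t$, so that each peeled constraint has at most $t-1$ external coordinates and $(t-1)$-wise uniformity of $\mu$ can be invoked. This is precisely what pins down the quantitative statement---one needs $e_1>k-\tfrac t2$ for $H_{\calI}$, which enters through $m\le n^{t/2-\eps}$ in Lemma~\ref{lem:exp}, and $r=\Omega(n^{\eps/(t-2)})$ is what survives the multiplicative size loss of Lemma~\ref{lem:closure}. The remaining fussy point is the hypergraph bookkeeping in the inductive step (that the coordinates of $C^\star$ declared private really appear in no constraint contributing to the product, and that deleting them preserves boundary expansion of the remainder); once that is handled, the reductions of (a) and (b) above are routine.
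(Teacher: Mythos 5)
Your proposal is correct and follows essentially the same route as the paper: your ``peeling principle'' is the paper's Lemma~\ref{lem:local-cons} together with Claim~\ref{cl:php} (repeatedly extract a constraint with at least $k-t+1$ private boundary variables, integrate them out via $(t-1)$-wise uniformity, and induct), and your reduction of local consistency to two applications of that principle on $U=\cl(S)\cup\cl(T)$, using Lemma~\ref{lem:closure} and the expansion bound on $|\calC(U)|$, is exactly the paper's proof of Theorem~\ref{thm:local-cons}.
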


This theorem shows that the \SA cannot efficiently refute random $(t-1)$-wise uniforming supporting instances: the $r$-round \SA LP still has value $1$ for some $r = \Omega(n^{\frac{\eps}{t-2}})$ when $m \leq n^{t/2-\eps}$.  In this paper, we show that even when we add the \SAplus requirement that the covariance matrix is PSD, we still cannot refute when $m \leq n^{t/2-\eps}$.

Given these locally consistent distributions, Lemma~\ref{lem:cond-local-cons} implies that the conditional distributions $\{\Pi_{S|X=\alpha}\}$ defined above are also locally consistent.
\begin{corollary}[\textup{\cite{TW13}}]\label{cor:cond-local-cons}
There exists a constant $c > 0$ such that the following holds.  Let $\calI$ be a random instance of $\CSP(P)$ with $m \leq n^{t/2-\eps}$.  Let $X \subseteq [n]$ such that $|X| \leq c n^{\frac{\eps}{t-2}}$ and let $\alpha \in \{0,1\}^X$ be any assignment to $X$ such that $\mu_C(\alpha) > 0$ for all constraints in $C \in \calC(X)$.  Then the family of conditional distributions $\{\Pi_{S| X = \alpha}\}_{|S| \leq r, S \cap X = \emptyset}$ is $r$-locally consistent for some $r = \Omega(n^{\frac{\eps}{t-2}})$.
\end{corollary}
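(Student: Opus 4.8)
The statement is essentially a corollary of Theorem~\ref{thm:local-cons} together with Lemma~\ref{lem:cond-local-cons}: Theorem~\ref{thm:local-cons} gives that $\{\Pi_S\}_{|S|\le r_0}$ is $r_0$-locally consistent for some $r_0=\Omega(n^{\frac{\eps}{t-2}})$ and supported on satisfying assignments, and Lemma~\ref{lem:cond-local-cons} turns $r_0$-local consistency of $\{\Pi_S\}$ into $(r_0-|X|)$-local consistency of the conditionals $\{\Pi_{S\mid X=\alpha}\}$, \emph{provided} the conditioning event has positive mass, i.e. $\Pi_X(X=\alpha)>0$. So the plan has three ingredients: (i) fix constants so that all of Lemmas~\ref{lem:closure},~\ref{lem:exp} and Theorem~\ref{thm:local-cons} apply simultaneously on the same event of high probability; (ii) verify $\Pi_X(X=\alpha)>0$ from the hypothesis $\mu_C(\alpha)>0$ for all $C\in\calC(X)$; (iii) apply Lemma~\ref{lem:cond-local-cons} and read off $r=r_0-|X|=\Omega(n^{\frac{\eps}{t-2}})$ after choosing $c$ small enough that $|X|\le c\,n^{\frac{\eps}{t-2}}\le r_0/2$ (and small enough that the size condition $|X|<(e_1-e_2)s_1$ of Lemma~\ref{lem:closure} holds, so that $\cl(X)$ is defined).

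\textbf{The main step: $\Pi_X(X=\alpha)>0$.} Unfolding the definition, $\Pi_X(X=\alpha)=\Pi'_{\cl(X)}(X=\alpha)$, so it suffices to exhibit a single assignment $\beta\in[q]^{\cl(X)}$ with $\beta_X=\alpha$ and $\mu_C(\beta)>0$ for every $C\in\calC(\cl(X))$. Constraints $C\in\calC(X)$ are already satisfied with positive weight by $\alpha$, by hypothesis. For the remaining constraints $\calC(\cl(X))\setminus\calC(X)$, I would use the structure guaranteed by the closure: by Lemma~\ref{lem:closure} we have $|\cl(X)|=O(|X|)$, and expansion of $H_\calI$ (Lemma~\ref{lem:exp}) forces $|\calC(\cl(X))|\le s:=n^{\frac{\eps}{t-2}}$, so $\calC(\cl(X))$ is $(s,k-t+\eps)$-boundary expanding by Fact~\ref{fact:exp_to_bexp}. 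Peeling then gives an ordering $C_1,\dots,C_p$ of these constraints in which each $C_i$ owns at least $k-t+1$ variables that occur in no earlier $C_j$; extending $\beta$ one constraint at a time in this order, each step fixes at most $t-1$ coordinates of the current constraint (the ones already assigned, either in $X$ or by an earlier step), and $(t-1)$-wise uniformity of $\mu$ guarantees that any such partial pattern has positive $\mu_C$-probability and extends to a full assignment of $C$ with positive $\mu_C$-probability. Iterating yields the desired $\beta$, hence $\Pi_X(X=\alpha)>0$.

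\textbf{Finishing.} With $\Pi_X(X=\alpha)>0$ in hand, Lemma~\ref{lem:cond-local-cons} applied to the family $\{\Pi_S\}$ (for $S$ with $S\cap X=\emptyset$ and $|S\cup X|\le r_0$) shows $\{\Pi_{S\mid X=\alpha}\}$ is $(r_0-|X|)$-locally consistent; since $|X|\le r_0/2$, this is $r$-local consistency for $r=\Omega(n^{\frac{\eps}{t-2}})$, which is the claim.

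\textbf{Expected obstacle.} The delicate point is the extension step, specifically controlling how many coordinates of a constraint $C\in\calC(\cl(X))$ are pre-determined when it is processed: a constraint absorbed into $\cl(X)$ during the closure construction can in principle already have many of its variables lying in $X$, and $(t-1)$-wise uniformity of $\mu$ says nothing about patterns on $t$ or more coordinates. Resolving this cleanly requires either arguing that the closure construction (with an appropriate choice of the expansion parameter $e_2$) leaves $\calC(\cl(X))\setminus\calC(X)$ with the needed ``at most $t-1$ determined coordinates'' property along the peeling order, or observing that in the downstream use only closed sets $X=\cl(X)$ are conditioned on, in which case $\calC(\cl(X))=\calC(X)$ and $\Pi_X(X=\alpha)=\tfrac1{Z_X}\prod_{C\in\calC(X)}\mu_C(\alpha)>0$ is immediate. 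I would structure the write-up so that the peeling lemma isolates exactly this bookkeeping.
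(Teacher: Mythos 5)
Your proposal matches the paper's approach — indeed the paper gives essentially no proof beyond citing Theorem~\ref{thm:local-cons} and Lemma~\ref{lem:cond-local-cons}, and you correctly identify that the only additional content is verifying $\Pi_X(X=\alpha)>0$, which your boundary-expansion peeling argument (processing constraints in reverse peeling order so that at most $t-1$ coordinates of each are pre-determined) supplies. The obstacle you flag is genuine but is resolved exactly as you anticipate: in the paper's downstream use (Lemmas~\ref{lem:psd-protection} and~\ref{lem:psd-protection-high-exp}) the corollary is only ever invoked after reducing to conditioning sets of the form $\cl(X)$, i.e.\ sets $X$ with $H-X$ highly expanding, for which the peeling bookkeeping goes through.
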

We will use these conditional consistent local distributions to prove lower bounds for \LSplus in Section~\ref{sec:ls}.

\section{Overview of the proof}
Showing that a set of local distributions is a valid \SAplus solution requires proving that these distributions are locally consistent and proving that their covariance matrix is PSD.  Local consistency of the $\{\Pi_S\}$ distributions was proven in previous work \cite{BGMT12, OW14}.  To prove Theorem~\ref{thm:sa+-intro}, it remains to argue that the covariance matrix of $\{\Pi_S\}$ is PSD.

Previous work \cite{BGMT12, TW13} only considers instances with a linear number of constraints and relies on the fact that most pairs of variables are uncorrelated in this regime.  For $m \gg n$, however, correlations between pairs of vertices do arise because the underlying hypergraph becomes more dense.  The major technical contribution of this work is to deal with these correlations by proving that they remain local.  We consider the graph induced by correlations between variables: Two variables are connected if they have non-zero correlation.  We prove that this graph must have connected components of at most constant size with high probability.  Each of these connected components can then be covered by a local distribution of constant size.  This implies that each submatrix of the covariance matrix corresponding to one of these connected component is PSD, and thus the entire covariance matrix is PSD.

The proof of Theorem~\ref{thm:sa+-intro} has three steps.  First, we show in Section~\ref{sec:corr-graph} that if the correlation graph has small connected components, then the covariance matrix is PSD.  Second, we show that any non-zero correlation must have been caused by a relatively dense subset of constraints in Section~\ref{sec:bad-structures}.  In Section~\ref{sec:small-conn-comps}, we show that connected components in the correlation graph must be small or they would induce large dense subsets of constraints that would violate expansion properties.

In Section~\ref{sec:ls}, we show that this same strategy can be used to prove PSDness of conditional covariance matrices and thereby prove Theorems~\ref{thm:ls+-intro} and \ref{thm:static-ls+-intro}.

\section{The correlation graph}
\label{sec:corr-graph}
In this section, we define the correlation graph and show that if the correlation graph has small connected components, then the covariance matrix is PSD.\@

\if0
\begin{definition}
The covariance matrix $\Sigma$ is defined as
\[
\Sigma_{(u,a),(v,b)} = \Pr_{D(\{u,v\})}[u = a \wedge v = b] - \Pr_{D(\{u\})}[u = a] \cdot \Pr_{D(\{v\})}[v = b] 
\]
\end{definition}

\begin{lemma} \label{lem:covariance}
There are \SAplus vectors if and only if the covariance matrix is PSD.
\end{lemma}
\begin{proof}
Lemma~3.2 of \cite{BGMT12} implies that we can get \SAplus vectors if the matrix
\[
M = \begin{pmatrix}
1 & w^T \\
w & B
\end{pmatrix}
\]
is PSD, where $w_{(u,a)} = \Pr_{D(\{u\})}[u = a]$ and $B_{(u,a),(v,b)} = \Pr_{D(\{u,v\})}[u = a \wedge v = b]$.

Applying the Schur complement condition for PSD-ness (Is this well-known?  I found it on Wikipedia.), we see that $M$ is PSD if and only if
\[
B - ww^T = \Sigma
\]
is PSD.
\end{proof}
\fi

\begin{definition}
The correlation graph $G_{\mathrm{corr}}$ associated with $r$-locally consistent distributions $\{D_S\}$ 
is the graph on $[n]$ with an edge between every pair of variables for which there is a nonzero entry in the covariance matrix for $\{D_S\}$.  More formally, the set of edges of $G_{\mathrm{corr}}$ is defined to be
\[
E(G_{\mathrm{corr}}) = \{(u,v) \in [n] \times [n] ~|~ u\ne v, \exists (a,b) \in [q] \times [q] \text{ s.t. } \Sigma_{(u,a),(v,b)} \ne 0\}.
\]
\end{definition}

\begin{lemma} \label{lem:small-conn-comps-implies-psd}
Let $\{D_S\}$ be a family of $r$-locally consistent distributions. 
If all connected components in the correlation graph associated with $\{D_S\}$ have size at most $r$, then the covariance matrix for $\{D_S\}$ is PSD.\@
\end{lemma}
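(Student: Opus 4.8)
The plan is to use the hypothesis on component sizes to exhibit $\Sigma$ as a block-diagonal matrix whose blocks are honest covariance matrices. First I would let $C_1,\dots,C_p$ be the connected components of $G_{\mathrm{corr}}$, and reindex $[n]\times[q]$ so that the indices in each $C_s\times[q]$ appear contiguously. The point is that, after this reordering, $\Sigma$ is block diagonal: for $u\in C_s$ and $v\in C_{s'}$ with $s\ne s'$ we have $u\ne v$ and $(u,v)\notin E(G_{\mathrm{corr}})$, so by the definition of the correlation graph $\Sigma_{(u,a),(v,b)}=0$ for all $a,b\in[q]$. Since a block-diagonal symmetric matrix is PSD if and only if each of its blocks is PSD, it then suffices to show that each block $\Sigma|_{C_s\times[q]}$ is PSD.

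Second, I would identify $\Sigma|_{C_s\times[q]}$ with the covariance matrix of a genuine random vector, and this is exactly where the size bound enters. Because $|C_s|\le r$ and $\{D_S\}$ is $r$-locally consistent, the distribution $D_{C_s}$ over $[q]^{C_s}$ is one of the distributions of the family, and for all $i,j\in C_s$ its marginals satisfy $D_{C_s}(x_i=a)=D_{\{i\}}(x_i=a)$ and $D_{C_s}(x_i=a\wedge x_j=b)=D_{\{i,j\}}(x_i=a\wedge x_j=b)$ (we also use $r\ge 2$ here, so that $\Sigma$ is defined at all). Consequently, for $i,j\in C_s$,
\[
\Sigma_{(i,a),(j,b)} = \E_{x\sim D_{C_s}}[\indic{\{x_i=a\}}\indic{\{x_j=b\}}] - \E_{x\sim D_{C_s}}[\indic{\{x_i=a\}}]\cdot\E_{x\sim D_{C_s}}[\indic{\{x_j=b\}}],
\]
which is precisely the $((i,a),(j,b))$ entry of the covariance matrix of the random vector $\big(\indic{\{x_i=a\}}\big)_{(i,a)\in C_s\times[q]}$ when $x\sim D_{C_s}$. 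Every such covariance matrix is PSD, so each block $\Sigma|_{C_s\times[q]}$ is PSD, and hence so is $\Sigma$.

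I do not anticipate a genuine obstacle: the content is entirely in the two steps above, and the only subtlety is the one flagged, namely that a connected component of size at most $r$ is covered by a single distribution in the $r$-locally consistent family, which is what makes its diagonal block a true covariance matrix. The rest is routine linear algebra about block-diagonal PSD matrices, together with the bookkeeping that the components of $G_{\mathrm{corr}}$ partition $[n]$ and hence the corresponding index blocks partition $[n]\times[q]$.
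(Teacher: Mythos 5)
Your proposal is correct and follows essentially the same route as the paper: order indices by the connected components so that $\Sigma$ is block diagonal, then observe that each block is the covariance matrix of the genuine distribution $D_{V_i}$ (which exists because $|V_i|\le r$ and the family is $r$-locally consistent) and is therefore PSD. The extra detail you supply---that local consistency makes the singleton and pairwise marginals of $D_{C_s}$ agree with $D_{\{i\}}$ and $D_{\{i,j\}}$, so the block really is an honest covariance matrix---is exactly the point the paper leaves implicit.
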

\begin{proof}
Consider the partition $V_1,V_2,\ldots,V_{\ell}$ of $[n]$ such that $u$ and $v$ are in the same set if and only if they are connected in the correlation graph.  We then have nonzero entries in the covariance matrix only for pairs $((u,a),(v,b))$ such that $u,v \in V_i$ for some $i$.  
Ordering the rows and columns of the covariance matrix according to this partition, we see that the covariance matrix is block diagonal with a nonzero block for each connected component of the correlation graph.
Each of these blocks is PSD since each is the covariance matrix of the local distribution $D_{V_i}$ for the corresponding set $V_i$ with size at most $r$, and the covariance matrix of valid distribution is always PSD.\@  Since each block is PSD, the entire matrix is PSD.\@
\end{proof}
We already know that $\{\Pi_S\}$ defined in Section~\ref{sec:cons-local-dists} is $\Omega(n^{\frac{\eps}{t-2}})$-locally consistent with high probability when $m \leq n^{t/2-\epsilon}$.
In the following sections, we will show that connected components in the correlation graph associated with $\{\Pi_S\}$ are small.
Hence, from Lemma~\ref{lem:small-conn-comps-implies-psd}, $\{\Pi_S\}$ is a feasible solution for the \SAplus SDP.

\section{Correlations are induced by small, dense structures} \label{sec:bad-structures}
In this section, we show that pairwise correlations in $\{\Pi_S\}$ are only generated by small, dense subhypergraphs that we will call ``bad structures".
Given a set of hyperedges $W$, call a variable $v$ an $W$-boundary variable if it is contained in exactly one constraint in $W$.

\begin{definition}
\label{def:bad_struct}
For variables $u$ and $v$, a bad structure for $u$ and $v$ is a set of constraints $W$ satisfying the following properties:
\if0
\begin{enumerate}
\item
\label{def:complete}
There exist $C_u,C_v \in Y$ such that $u \in C_u$ and $v \in C_v$.
\item
\label{def:be_bound}
$|\partial Y| \leq (k-t)|Y| + 2$.
\item
\label{def:one_bv}
At least one of $u$ and $v$ is an $Y$-boundary variable.
\item
\label{def:no_var}
Every constraint not containing $u$ or $v$ as one of its $Y$-boundary variables contains at most $k-t$ $S$-boundary variables.
\item
\label{def:one_var}
Any constraint containing one of $u$ or $v$ as an $S$-boundary variable has at most $k-t+1$ $S$-boundary variables.
\item
\label{def:two_vars}
Any constraint containing both of $u$ and $v$ as $S$-boundary variables has at most $k-t+2$ $S$ boundary variables.
\item 
\label{def:conn}
There exist no two partitions $S_1,S_2$ of $S$ and $R_1,R_2$ of $\Gamma(S)$ such that $\Gamma(S_1) = R_1$ and $\Gamma(S_2)= R_2$. ($S$ is connected.)
\end{enumerate}
\fi
\begin{enumerate}
\item
\label{def:complete}
$u, v\in\Gamma(W)$.
\item 
\label{def:conn}
The hypergraph induced by $W$ is connected.
\item
\label{def:boundary}
Every constraint contains at most $k-t$ $W$-boundary variables other than $u$ and $v$.
\end{enumerate}
We also say $W$ is a bad structure if $W$ is a bad structure for some $u$ and $v$.
\end{definition}
A bad structure for $u$ and $v$ generates correlation between $u$ and $v$ with respect to $\{\Pi_S\}$.
\begin{lemma}
\label{lem:bad_struct}
If there is no bad structure for $u$ and $v$ of size at most $|\calC(\cl(\{u,v\}))|$, then $u$ and $v$ are not correlated with respect to $\Pi_{\{u,v\}}$.
\end{lemma}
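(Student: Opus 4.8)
The plan is to exploit the explicit product form $\Pi_{\{u,v\}}=\Pi'_{\cl(\{u,v\})}$ and to show that, in the absence of a small bad structure, the variables $x_u$ and $x_v$ are genuinely \emph{independent} under this distribution. Because $\{\Pi_S\}$ is locally consistent (Theorem~\ref{thm:local-cons}), the marginal of $\Pi_{\{u,v\}}$ on $\{u\}$ is $\Pi_{\{u\}}$ and likewise for $v$, so the covariance entry is $\Sigma_{(u,a),(v,b)}=\Pi_{\{u,v\}}(x_u=a\wedge x_v=b)-\Pi_{\{u,v\}}(x_u=a)\,\Pi_{\{u,v\}}(x_v=b)$ for every $a,b$; independence makes all of these zero, which is exactly the assertion that $u$ and $v$ are uncorrelated with respect to $\Pi_{\{u,v\}}$.

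Write $S=\cl(\{u,v\})$, so $\Pi'_S(x=\beta)\propto\prod_{C\in\calC(S)}\mu_C(\beta)$ is a factor graph on the induced hypergraph $H_{\calI}[S]$. The engine of the argument is a \emph{peeling} step built on the $(t-1)$-wise uniformity of $\mu$, hence of each $\mu_C$: if an active constraint $C$ has a set $B$ of at least $k-t+1$ private variables (variables lying in no other currently active constraint and distinct from $u$ and $v$), then summing $\mu_C$ over $B$ gives the marginal of $\mu_C$ on its remaining $\le t-1$ coordinates, which is a constant independent of those coordinates — even if $u$ or $v$ is among them. So the factor $C$ may be deleted and the variables $B$ eliminated without altering dependencies among the survivors. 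I would iterate this, starting with every constraint of $\calC(S)$ active and no variable eliminated, until no active constraint has $\ge k-t+1$ private variables; call the resulting active set $W^{*}$. By construction $u$ and $v$ are never eliminated, and every variable occurring in a constraint of $W^{*}$ is still active (a variable private to a peeled constraint cannot also have belonged to an un-peeled one). Hence $\Pi'_S(x_u=a\wedge x_v=b)$ is a fixed constant times $\sum_{\gamma}\prod_{C\in W^{*}}\mu_C(a,b,\gamma)$, and this sum factorizes over the connected components of the hypergraph with edge set $W^{*}$.

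If $u$ and $v$ lie in different connected components of $W^{*}$ — in particular if either fails to lie in $\Gamma(W^{*})$ — the displayed sum separates into a function of $a$ times a function of $b$, so $x_u\perp x_v$ under $\Pi'_S$ and hence under $\Pi_{\{u,v\}}$, and the lemma follows. Otherwise, let $W^{**}$ be the connected component of $W^{*}$ containing both $u$ and $v$; I would argue it is a bad structure for $u$ and $v$. It is connected by construction (item~\ref{def:conn} of Definition~\ref{def:bad_struct}); it is nonempty and $u,v\in\Gamma(W^{**})$ since a component containing two distinct vertices has an edge and every vertex of the component lies on an edge of it (item~\ref{def:complete}); and because $W^{**}$ is a full connected component, $\Gamma(W^{**})$ is disjoint from $\Gamma(W^{*}\setminus W^{**})$, so ``boundary with respect to $W^{**}$'' and ``boundary with respect to $W^{*}$'' coincide on $\Gamma(W^{**})$, whence (peeling having terminated) every constraint of $W^{**}$ has at most $k-t$ boundary variables other than $u$ and $v$ (item~\ref{def:boundary}). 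Since $|W^{**}|\le|W^{*}|\le|\calC(\cl(\{u,v\}))|$, this would contradict the hypothesis that no such bad structure exists, so only the first case can occur and $u,v$ are uncorrelated. I expect the main obstacle to be making the peeling bookkeeping airtight — checking that private variables stay private until eliminated, that $u$ and $v$ are never peeled, and that the final factorization correctly integrates out all eliminated variables; the remaining verifications are routine. (One should also record that $Z_S>0$, so $\Pi'_S$ is a genuine distribution, which follows from the support properties of $\mu$ used in constructing $\{\Pi_S\}$.)
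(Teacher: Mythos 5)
Your proposal is correct and follows essentially the same route as the paper: iteratively peel constraints having at least $k-t+1$ boundary variables outside $\{u,v\}$ (the paper's Claim~\ref{cl:rem-one-cons}), then observe that the surviving constraint set satisfies Conditions~\ref{def:complete} and~\ref{def:boundary} of Definition~\ref{def:bad_struct}, so by the no-bad-structure hypothesis $u$ and $v$ must land in different connected components and the distribution factorizes. The only (immaterial) difference is bookkeeping: the paper deletes the boundary \emph{variables} (allowing $u$ or $v$ to be removed, handled by its four-case analysis), while you keep $u,v$ alive and instead track active constraints.
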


We need the following technical claim, which states that the distribution $\Pi'_S$ isn't affected by removing a constraint with many boundary variables.
\begin{claim} \label{cl:rem-one-cons}
Let $T \subseteq S \subseteq [n]$ be sets of variables.  Let $C^* \in \calC(S)$ be some constraint covered by $S$.  If $|(\partial \calC(S) \cap C^*) \setminus T| \geq k-t+1$, then for any $\alpha \in \{0,1\}^T$,
\[
\Pi'_S(T = \alpha) \propto \Pi'_{S \setminus (\partial \calC(S) \cap C^*)}(T \setminus (\partial \calC(S) \cap C^*) = \alpha_{T \setminus (\partial \calC(S) \cap C^*)}),
\]
\end{claim}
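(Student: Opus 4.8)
The plan is to peel the single constraint $C^*$ off the product that defines $\Pi'_S$ and to marginalize it out using the $(t-1)$-wise uniformity of $\mu$. Write $B := \partial\calC(S)\cap C^*$ (viewing $C^*$ both as a constraint and as its set of scope variables, as in the statement) and $U := S\setminus B$, so that $T\setminus B = T\cap U$ and the hypothesis reads $|B\setminus T|\ge k-t+1$. Since every variable of $B$ lies in a \emph{unique} constraint of $\calC(S)$ by definition of the boundary, and that constraint must be $C^*$, we get $\calC(S) = \calC(U)\sqcup\{C^*\}$; hence $\prod_{C\in\calC(S)}\mu_C(\beta) = \mu_{C^*}(\beta)\cdot\prod_{C\in\calC(U)}\mu_C(\beta)$ for all $\beta\in[q]^S$, and the second factor depends only on $\beta_U$. (Also $Z_U\ge\prod_{C\in\calC(U)}\mu_C(\beta_U)>0$ for any $\beta$ with $\prod_{C\in\calC(S)}\mu_C(\beta)>0$, so $\Pi'_U$ is well defined whenever $\Pi'_S$ is.)

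Next I would expand $\Pi'_S(T=\alpha) = Z_S^{-1}\sum_{\beta\in[q]^S:\ \beta_T=\alpha}\prod_{C\in\calC(S)}\mu_C(\beta)$, write $\beta=(\beta_U,\beta_B)$, and perform the sum over $\beta_B$ first while holding $\beta_{T\cap B}=\alpha_{T\cap B}$ fixed. The resulting inner sum $\sum_{\beta_B:\ \beta_{T\cap B}=\alpha_{T\cap B}}\mu_{C^*}(\beta)$ equals the marginal of $\mu_{C^*}$ on the variable set $W:=(C^*\cap U)\cup(T\cap B)$. The two pieces of $W$ are disjoint (one lives in $U$, the other in $B$), so $|W| = |C^*\cap U| + |T\cap B| = |C^*| - |B\setminus T| \le k - (k-t+1) = t-1$; this is exactly where the hypothesis enters. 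Because $\mu_{C^*}$ is a coordinatewise shift of $\mu$ it is again $(t-1)$-wise uniform, so this marginal is the constant $q^{-|W|}$, independent of both $\beta_U$ and $\alpha$.

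Pulling that constant out of the remaining sum over $\beta_U$ leaves $Z_S^{-1}\,q^{-|W|}\sum_{\beta_U:\ \beta_{T\cap U}=\alpha_{T\cap U}}\prod_{C\in\calC(U)}\mu_C(\beta_U) = \frac{q^{-|W|}Z_U}{Z_S}\,\Pi'_U(T\cap U=\alpha_{T\cap U})$. Since $T\cap U = T\setminus B$ and the factor $\frac{q^{-|W|}Z_U}{Z_S}$ does not depend on $\alpha$, this is precisely the asserted proportionality $\Pi'_S(T=\alpha)\propto\Pi'_{S\setminus B}(T\setminus B=\alpha_{T\setminus B})$.

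I do not anticipate a genuine obstacle here: the only substantive content is the size estimate $|W|\le t-1$ combined with $(t-1)$-wise uniformity, and the rest is bookkeeping. The one place to be careful is the set-arithmetic among the four relevant sets $T\cap U$, $T\cap B$, $C^*\cap U$, $B$ — in particular keeping straight that summing out $\beta_{B\setminus T}$ marginalizes $\mu_{C^*}$ onto $(C^*\cap U)\cup(T\cap B)$, not onto all of $C^*\cap U$, so that the $T\cap B$ coordinates contribute to $|W|$. (If a constraint's scope repeats a variable one works with the underlying variable set of $C^*$, which only makes $|C^*|$, and hence $|W|$, smaller, so the estimate still holds.)
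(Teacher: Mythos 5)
Your proposal is correct and follows essentially the same route as the paper's proof: split off the factor $\mu_{C^*}$, sum out the boundary variables $B\setminus T$ first so that the inner sum becomes a marginal of $\mu_{C^*}$ on at most $t-1$ coordinates, which equals a constant by $(t-1)$-wise uniformity, and identify the remaining sum with $\Pi'_{S\setminus B}$. The only difference is that you make explicit two points the paper leaves implicit (that $\calC(S\setminus B)=\calC(S)\setminus\{C^*\}$, and the repeated-variable caveat), which is fine.
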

\begin{proof}
Let $B = \partial \calC(S) \cap C^*$ be the boundary variables of $\calC(S)$ contributed by $C^*$, i.e., the variables contained in $C^*$ that don't appear in any other constraint of $\calC(S)$.  Then
\begin{align*}
\Pi'_S(T = \alpha) &\propto \sum_{\substack{\beta \in \{0,1\}^S \\ \beta_T = \alpha}} \prod_{C \in \calC(S)} \mu_C(\beta) \\
&= \sum_{\substack{\beta \in \{0,1\}^{S \setminus B} \\ \beta_{T \setminus B} = \alpha_{T \setminus B}}} \prod_{C \in \calC(S) \setminus \{C^*\}} \mu_C(\beta) \sum_{\substack{\gamma \in \{0,1\}^B \\ \gamma_{B \cap T} = \alpha_{B \cap T}}} \mu_{C^*}(\beta, \gamma) \\
&= \frac{1}{q^{k-|B \setminus T|}}\sum_{\substack{\beta \in \{0,1\}^{S \setminus B} \\ \beta_{T \setminus B} = \alpha_{T \setminus B}}} \prod_{C \in \calC(S) \setminus \{C^*\}} \mu_C(\beta) \\
&\propto \Pi'_{S \setminus B}(T \setminus B = \alpha_{T \setminus B}).
\end{align*}
The second-to-last line holds because $|B \setminus T| \geq k-t+1$ and $\mu$ is $(t-1)$-wise independent.
\end{proof}

Using Claim~\ref{cl:rem-one-cons}, we prove Lemma~\ref{lem:bad_struct}.
\begin{proof}[Proof of Lemma~\ref{lem:bad_struct}]
Let $S_0 = \calC(\cl(\{u,v\}))$.  Say there exists a constraint $C_1$ such that $|(\partial \calC(S_0) \cap C_1) \setminus \{u,v\}| \geq k-t+1$.  Let $S_1 = S_0 \setminus (\partial \calC(S_{0}) \cap C_1)$.  If there exists a constraint $C_2$ such that $|(\partial \calC(S_1) \cap C_2) \setminus \{u,v\}| \geq k-t+1$, remove its boundary variables in the same manner to get $S_2$.  Continue in this way until we obtain a set $S_{\ell}$ such that $|(\partial \calC(S_{\ell}) \cap C) \setminus \{u,v\}| \leq k-t$ for every constraint $C \in \calC(S_{\ell})$ ($\calC(S_{\ell})$ could be empty).  Since $|(\partial \calC(S_{i-1}) \cap C_i) \setminus \{u,v\}| \geq k-t+1$ for $1 \leq i \leq \ell$, we can apply Claim~\ref{cl:rem-one-cons} $\ell$ times to see that
\[
\Pi_{\{u,v\}}(u = a \wedge v = b) \propto \begin{cases}
\Pi'_{S_{\ell}}(u = a) &  \text{if $u \in S_{\ell}, v \notin S_{\ell}$} \\
\Pi'_{S_{\ell}}(v = b) &  \text{if $v \in S_{\ell}, u \notin S_{\ell}$} \\
1 & \text{if $u,v \notin S_{\ell}$} \\
\Pi'_{S_{\ell}}(u = a \wedge v = b) & \text{if $u,v \in S_{\ell}$}. \\
\end{cases}
\]
In the first three cases, it is easy to see that the lemma holds.  In the last case, we know that $\calC(S_{\ell})$ cannot be a bad structure by our assumption.  Since $\calC(S_{\ell})$ satisfies Conditions~\ref{def:complete} and \ref{def:boundary} of Definition~\ref{def:bad_struct}, the hypergraph induced by $S_{\ell}$ must be disconnected with $u$ and $v$ in different connected components.  Say $S_u$ and $S_v$ are the vertex sets of the connected components of $S_{\ell}$ containing $u$ and $v$, respectively.  Then
\[
\Pi_{\{u,v\}}(u = a \wedge v = b) \propto \Pi'_{S_{\ell}}(u = a \wedge v = b) = \Pi'_{S_u}(u = a) \cdot \Pi'_{S_v}(v = b).
\]
The result then follows.
\end{proof}

\if0
\begin{lemma}
\label{lem:bad_struct}
If variables $u$ and $v$ have nonzero correlation, there exists a bad structure for $u$ and $v$.
\end{lemma}
\begin{proof}
Let $T = C(\cl(\{u,v\}))$.  Call a constraint ``good" for $T$ if it contains at least $k-t+1$ $T$-boundary variables not equal to $u$ or $v$.  Consider the following process: Remove all good constraints from $T$.  Next, remove all constraints that are good for this new $T$.  Repeat until there are no more good constraints and call the resulting set $T'$.  This process cannot remove all constraints.  Otherwise, the argument of \cite{BGMT12} Lemma~3.2 implies that $u$ and $v$ are independent and are therefore not correlated. Conditions~\ref{def:no_var}, \ref{def:one_var}, and \ref{def:two_vars} of the definition must now hold.

Condition~\ref{def:one_bv} holds because otherwise, all constraints in $S$ have at most $k-t$ boundary variables, violating Lemma~\ref{lem:exp}.

Note that since $u$ and $v$ have non-zero correlation, the reasoning of \cite{BGMT12} Claim~3.4 implies that Condition~\ref{def:be_bound} must hold for $T$.  Since we only delete constraints with at least $k-t+1$ $T$-boundary variables, Condition~\ref{def:be_bound} must still hold for $T'$.

Next, we show that both $u$ and $v$ must still appear in at least one constraint of $T'$. Specifically, we assume that all constraints containing $u$ were removed and then show this implies that assignments to $u$ and $v$ are independent.

Consider choosing an assignment to $u,v$ according to $D(\{u,v\})$ in the following way, just as in the proof of \cite{BGMT12} Lemma~3.3.  Fix an ordering of the constraints $C_1,\dots,C_{\ell}$ such that the constraints containing $u$ are in the reverse order from which they were removed and occur last.  Start assigning values $\alpha_1$ to the variables of $C_1$ according to $\mu_{C_1}$.  Then assign values $\alpha_2$ to the variables of $C_2$ according to $\mu_{C_2}(\cdot|\alpha_1)$, which is the $\mu_{C_2}$ distribution conditioned on the assignment $\alpha_1$.  Then assign the variables of $C_3$ according to $\mu_{C_3}(\cdot|\alpha_1,\alpha_2)$.  Continue in this manner for all other constraints and then assign the variables of $\cl(\{u,v\})$ that have not yet been assigned uniformly at random. It is clear that the resulting distribution on assignments to $u,v$ is equal to $D(\{u,v\})$.

Now say we assign $u$ as part of constraint $C_i$.  Note that since $C_i$ was removed, it must have at least $k-t+1$ boundary variables not including $u$, so at most $t-2$ variables could have already been assigned in prior steps.  Since $\mu_{C_i}$ is $(t-1)$-wise independent, $u$ is assigned each label with probability $\frac{1}{q}$ independent of all previous assignments.  If $v$ had already been assigned, it is clear that $u$ and $v$ are assigned independently.  If $v$ has not yet been assigned, we know that the constraint in which it is assigned was also removed and we can argue in the same way that it is assigned each label with probability $\frac{1}{q}$ independent of all previous assignments, including $u$.  The same argument holds for $v$, so Condition~\ref{def:complete} holds.

Finally, we consider Condition~\ref{def:conn}.  Say it doesn't hold and there exists such partitions $S_1,S_2$ and $R_1,R_2$.  If $u$ and $v$ are both in $R_1$ or both in $R_2$ (say $R_1$), then we can drop $S_2$ and $R_2$ and $S_1$ will satisfy all conditions and be our bad structure.  If, on the other hand, $u \in R_1$ and $v \in R_2$, then $u$ and $v$ are assigned independently.  To see this, first consider the case that there exist partitions $T_1,T_2$ of $T$ and $Q_1,Q_2$ of $\Gamma(T)$ such that $\Gamma(T_1) = Q_1$ and $\Gamma(T_2)= Q_2$.  In this case, it is clear that variables in $Q_1$ and $Q_2$ are assigned independently.  The only other option is that in the process of deleting constraints, there must have existed a set $S'$ and a constraint $C$ with partitions $S'_1,S'_2$ of $S' \setminus \{C\}$ and $R'_1,R'_2$ of $\Gamma(S' \setminus \{C\})$ such that $u \in R'_1$, $v \in R'_2$, $\Gamma(S'_1) = R'_1$, $\Gamma(S'_2) = R'_2$, and $C$ had at least $k-t+1$ boundary variables in $S'$. Intuitively, $S'_1$ and $S'_2$ were only connected by a constraint $C$ containing at least $k-t+1$ boundary variables that was then removed.  Say we assign values to $\Gamma(T)$ using the process described above so that constraints that were removed are assigned in the reverse order from which they were removed.  Then when we assign values to variables in $C$, the $t-1$ non-boundary variables of $C$ (at most $t-1$ in $\Gamma(S'_1)$, at most $t-1$ in $\Gamma(S'_2)$) are assigned independently by $(t-1)$-wise independence.  Then, conditioned on any assignment to these $t-1$ variables,  we can assign values to variables in the constraints containing $u$ and $v$ independently.  Therefore, Condition~\ref{def:conn} must hold.
\end{proof}
\fi

\section{All connected components of the correlation graph are small} \label{sec:small-conn-comps}
In this section, we show that all connected components in the correlation graph associated with $\{\Pi_S\}$ are small, which concludes the proof of Theorem~\ref{thm:sa+-intro}.
\begin{theorem} \label{thm:small-conn-comps-new}
Assume that the hypergraph $H_{\calI}$ is an $(r, k-t/2+\delta/2)$-expander for some $r = \omega(1)$.
Then all connected components in the correlation graph associated with $\{\Pi_S\}$ have size at most $\frac{2k}{\delta}$.
\end{theorem}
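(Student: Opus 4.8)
The plan is to argue by contradiction: assuming some connected component $U$ of $G_{\mathrm{corr}}$ has $|U|>2k/\delta$, I would assemble from the bad structures of Lemma~\ref{lem:bad_struct} a single bad structure whose neighbourhood covers all of $U$, and then show this is impossible because bad structures are too small. For the second ingredient: for any edge $(u,v)$ of $G_{\mathrm{corr}}$, Lemma~\ref{lem:closure} gives that $\cl(\{u,v\})$ has constant size (so $|\calC(\cl(\{u,v\}))|=O(1)\le r$ since $r=\omega(1)$), and Lemma~\ref{lem:bad_struct}, in contrapositive form, then produces a bad structure $W_{uv}$ for $u,v$ with $|W_{uv}|\le r$. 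Summing Condition~\ref{def:boundary} of Definition~\ref{def:bad_struct} over its constraints gives $|\partial W_{uv}|\le (k-t)|W_{uv}|+2$; on the other hand Fact~\ref{fact:exp_to_bexp} upgrades the hypothesis to $(r,k-t+\delta)$-boundary expansion, so $|\partial W_{uv}|\ge (k-t+\delta)|W_{uv}|$. Comparing, $\delta|W_{uv}|\le 2$, hence $|W_{uv}|\le 2/\delta$ and $|\Gamma(W_{uv})|\le k|W_{uv}|\le 2k/\delta$. Thus it suffices to produce one bad structure $W$ with $|W|\le r$ and $U\subseteq\Gamma(W)$, for then $|U|\le|\Gamma(W)|\le 2k/\delta$, a contradiction.

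\emph{Gluing and pruning.} Fix a spanning tree $T$ of $U$; shrinking it to a connected subtree I may assume $2k/\delta<|V(T)|\le 2k/\delta+1$, so that $W_0:=\bigcup_{e\in E(T)}W_e$ satisfies $|W_0|\le |E(T)|\cdot 2/\delta=O(1/\delta^2)\le r$. Then $W_0$ is connected as a hypergraph — consecutive $W_e$'s share a vertex of $V(T)$ and each is connected by Condition~\ref{def:conn} — and $V(T)\subseteq\Gamma(W_0)$. Moreover every constraint of $W_0$ has at most $k-t$ $W_0$-boundary variables outside $V(T)$: such a constraint lies in some $W_e$ with $e=(a,b)$, its $W_0$-boundary variables are a subset of its $W_e$-boundary variables, and by Condition~\ref{def:boundary} at most $k-t$ of those differ from $a,b\in V(T)$. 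I would finish by pruning $W_0$ as in the proof of Lemma~\ref{lem:bad_struct} — repeatedly deleting the boundary variables of any constraint carrying $\ge k-t+1$ boundary variables away from a fixed pair $u^\star,v^\star\in V(T)$, invoking Claim~\ref{cl:rem-one-cons} each time — until I reach a genuine bad structure $W$ for $(u^\star,v^\star)$ which I would want to still satisfy $U\subseteq\Gamma(W)$.

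\emph{The main obstacle.} The crux is exactly the last clause: one must show the pruning keeps all of $V(T)$ in the neighbourhood and keeps $u^\star,v^\star$ in one connected piece — equivalently, that no vertex of $V(T)\setminus\{u^\star,v^\star\}$ ever becomes a boundary variable. This is precisely where the argument must be done carefully rather than by the naive union, since for a ``star''-shaped component the naive $W_0$ can have all of its leaves as boundary variables, giving only the bound $|\partial W_0|\le (k-t)|W_0|+|U|$, which fails to contradict boundary expansion. I expect the resolution to be local: a vertex $w\in V(T)$ turns into a boundary variable only if it lies in a single constraint shared by the bad structures of two tree edges at $w$, and Condition~\ref{def:conn} together with the $(t-1)$-wise uniformity behind Claim~\ref{cl:rem-one-cons} should show that such a constraint is either removable by the pruning or else exhibits an extra $G_{\mathrm{corr}}$-edge that lets one re-route $T$ so as to decrease $\deg_T(w)$. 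Once a single bad structure is shown to cover $U$, the size bound above finishes Theorem~\ref{thm:small-conn-comps-new}, and via Lemma~\ref{lem:small-conn-comps-implies-psd} this completes Theorem~\ref{thm:sa+-intro}.
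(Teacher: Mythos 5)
There is a genuine gap, and you have correctly located it yourself: the entire argument hinges on the unproven clause that the pruning of $W_0$ keeps all of $V(T)$ in the neighbourhood, and your proposed resolution (``re-route $T$ so as to decrease $\deg_T(w)$'') is speculation rather than a proof. The underlying difficulty is structural: Definition~\ref{def:bad_struct} tolerates only \emph{two} exceptional vertices ($u$ and $v$), so any attempt to package a component with $|U|$ vertices into a single bad structure for one pair must somehow absorb $|U|-2$ exceptional vertices, and your own star example shows the naive union only gives $|\partial W_0|\le (k-t)|W_0|+|U|$, which is useless. As written, the proof does not close.

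The paper sidesteps this entirely: it never tries to build one big bad structure. Instead it orders the edges $e_1,\dotsc,e_\ell$ of the component so that each is adjacent to a preceding one, sets $T_i=\bigcup_{j\le i}W_{e_j}$, and proves by induction the \emph{vertex} bound $|\Gamma(T_i)|\le (k-\tfrac{t}{2})|T_i|+1$, counting boundary and non-boundary vertices separately (for a single bad structure this is $(k-t)|W|+2$ boundary vertices plus at most $\tfrac12(k|W|-(k-t)|W|-2)$ non-boundary ones). In the inductive step the new constraints $W_i'=W_{e_i}\setminus T_{i-1}$ contribute at most $(k-t)|W_i'|+1$ new boundary vertices (only one endpoint of $e_i$ can be new, since $e_i$ touches a preceding edge), and that same adjacency forces $\Gamma(W_i')$ to meet $\Gamma(T_{i-1})$, saving one vertex in the non-boundary count; the $+1$ and the $-1$ cancel after halving, so the additive constant stays at $1$ rather than growing with $\ell$. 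Then plain $(r,k-\tfrac{t}{2}+\tfrac{\delta}{2})$-expansion forces $|T_\ell|\le 2/\delta$ and hence $|U|\le|\Gamma(T_\ell)|\le 2k/\delta$. Your single-structure size bound via boundary expansion is fine as far as it goes, but to complete the theorem you should replace the glue-and-prune step with this overlap-aware incremental count (or supply the missing re-routing argument in full, which the paper does not attempt).
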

We will actually prove a slightly more general theorem that we will use to prove \LSplus lower bounds in Section~\ref{sec:ls}.  Given a hypergraph $H$, let $G_{\mathrm{bad}}(H)$ be the graph on $[n]$ such that there is an edge between $i$ and $j$ if and only if there exists a bad structure for $i$ and $j$ in $H$.
\begin{theorem} \label{thm:small-conn-comps-gb}
If the hypergraph $H$ is an $(r, k-t/2+\delta/2)$-expander for some $r = \omega(1)$, then all connected components in $G_{\mathrm{bad}}(H)$ have size at most $\frac{2k}{\delta}$.
\end{theorem}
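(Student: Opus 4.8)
The plan is to assume some connected component of $G_{\mathrm{bad}}(H)$ has more than $\tfrac{2k}{\delta}$ vertices and to extract from it a set of at most $r$ constraints with too small a neighbourhood, contradicting the expander hypothesis. Everything rests on a density estimate, which I would prove first: every bad structure $W$ (for some pair $u,v$) satisfies $|\Gamma(W)|\le(k-\tfrac t2)|W|+1$. Indeed, by Condition~\ref{def:boundary} each constraint of $W$ has at most $k-t$ $W$-boundary variables other than $u,v$, and each of $u,v$, when it is a $W$-boundary variable, lies in a unique constraint; charging every boundary variable to the constraint containing it gives $|\partial W|\le(k-t)|W|+2$. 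Every vertex of $\Gamma(W)\setminus\partial W$ lies in at least two constraints, so counting vertex--constraint incidences yields $2|\Gamma(W)|-|\partial W|\le\sum_{C\in W}|C|\le k|W|$, hence $|\Gamma(W)|\le\tfrac12(k|W|+|\partial W|)\le(k-\tfrac t2)|W|+1$. Since $H$ is a $(r,k-\tfrac t2+\tfrac\delta2)$-expander, a bad structure with $\tfrac2\delta<|W|\le r$ would satisfy $(k-\tfrac t2+\tfrac\delta2)|W|\le|\Gamma(W)|\le(k-\tfrac t2)|W|+1$, which is false; and a short density-peeling argument further reduces the statement to the case where every bad structure has at most $\tfrac2\delta$ constraints.

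Next, suppose for contradiction that some component $Z$ of $G_{\mathrm{bad}}(H)$ has $|Z|>\tfrac{2k}{\delta}$, and fix a connected subgraph $Z'\subseteq Z$ on exactly $L:=\lceil 2k/\delta\rceil+1$ vertices together with a spanning tree $T$ of $Z'$. For each edge $e=(a,b)$ of $T$ choose a bad structure $W_e$ for $a,b$ and put $W:=\bigcup_{e\in T}W_e$. Each $W_e$ induces a connected hypergraph (Condition~\ref{def:conn}), and bad structures of adjacent tree edges share the common tree vertex in their neighbourhoods (Condition~\ref{def:complete}), so $W$ induces a connected subhypergraph whose vertex set contains all $L$ vertices of $Z'$; hence $|W|\ge\tfrac{L-1}{k-1}>\tfrac2\delta$. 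Also $|W|\le\sum_{e\in T}|W_e|\le(L-1)\cdot\tfrac2\delta$, a constant, which is $\le r$ once $n$ is large because $r=\omega(1)$. The crux is to show that the density estimate is inherited by the union,
\[
|\Gamma(W)|\;\le\;\Bigl(k-\tfrac t2\Bigr)|W|+1 ;
\]
granting this, $|\Gamma(W)|\le(k-\tfrac t2)|W|+1<(k-\tfrac t2+\tfrac\delta2)|W|$ because $|W|>\tfrac2\delta$, contradicting the $(r,k-\tfrac t2+\tfrac\delta2)$-expansion of $H$ (applicable since $|W|\le r$). Hence $|Z|\le\tfrac{2k}{\delta}$.

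To prove the displayed bound for $W$, I would add the $W_e$'s one at a time in a BFS order along $T$, tracking $A_j=W_{e_1}\cup\cdots\cup W_{e_j}$. Going from $A_{j-1}$ to $A_j$, the number of new constraints is $|W_{e_j}\setminus A_{j-1}|$ while the number of new vertices is at most $|\Gamma(W_{e_j})|-|\Gamma(W_{e_j}\cap A_{j-1})|$; since $W_{e_j}\cap A_{j-1}$ is a set of at most $r$ constraints, the expander hypothesis gives $|\Gamma(W_{e_j}\cap A_{j-1})|\ge(k-\tfrac t2)|W_{e_j}\cap A_{j-1}|$, and combined with the single-structure estimate $|\Gamma(W_{e_j})|\le(k-\tfrac t2)|W_{e_j}|+1$ this bounds the new vertices by $(k-\tfrac t2)$ times the new constraints plus a bounded slack, and by $0$ extra when no constraint is shared. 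I expect the main obstacle to be controlling how this slack accumulates: a careless count loses an additive term growing with $L$ (coming from the $L-1$ vertices gluing the $W_e$'s together, each of which may enter $|\partial W|$, and from the per-structure ``$+1$''), whereas a contradiction with expansion needs the total slack to be $O(1)$. The way to close this is to spend the $\tfrac\delta2$ surplus in the expansion bound on each nonempty intersection $W_{e_j}\cap A_{j-1}$ to cancel the incurred ``$+1$''s, and to treat separately the at most two constraints of each $W_e$ that actually invoke the $u,v$-exception in Condition~\ref{def:boundary}, so that the glue-vertices contribute to $|\partial W|$ only boundedly often overall.
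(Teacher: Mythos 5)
Your setup is sound and matches the paper's: the single-structure density bound $|\Gamma(W)|\le(k-\tfrac t2)|W|+1$ (the paper's \eqref{eq:size-bad}) is derived correctly, and reducing the theorem to showing that a union of bad structures spanning the component still satisfies this bound, then contradicting expansion, is exactly the paper's strategy. But the crux you flag yourself --- that the union inherits the bound with total additive slack $+1$ rather than slack growing with the number of structures --- is a genuine gap, and the mechanism you propose does not close it. Bounding the new vertices at step $j$ by $|\Gamma(W_{e_j})|-|\Gamma(W_{e_j}\cap A_{j-1})|$ and applying expansion to $W_{e_j}\cap A_{j-1}$ yields at best $(k-\tfrac t2)|W_{e_j}\setminus A_{j-1}|+1-\tfrac\delta2|W_{e_j}\cap A_{j-1}|$, i.e.\ per-step slack close to $1$ (and exactly $+1$ when the constraint sets are disjoint, which is the typical case since distinct tree edges need not share any constraints). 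Over the $\Theta(k/\delta)$ steps this accumulates to $|\Gamma(W)|\le(k-\tfrac t2)|W|+\Theta(k/\delta)$, which is compatible with $(r,k-\tfrac t2+\tfrac\delta2)$-expansion for $|W|$ up to $\Theta(k/\delta^2)$; no contradiction results. Spending the $\tfrac\delta2$ surplus on the intersection cannot help when the intersection is empty or has bounded size.

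The paper closes this step by a direct count of new vertices that produces \emph{zero} slack per step, via two observations you are missing. First, since $e_i$ is adjacent to a preceding edge, one of the two exceptional endpoints $u,v$ of the bad structure $W_{e_i}$ already lies in $\Gamma(T_{i-1})$ and hence cannot be new; so the number of new boundary vertices of $W_i'=W_{e_i}\setminus T_{i-1}$ is at most $(k-t)|W_i'|+1$ (Condition~\ref{def:boundary} plus at most \emph{one} new exceptional endpoint), not $+2$. Second, $\Gamma(W_i')$ necessarily meets $\Gamma(T_{i-1})$, so at least one vertex--constraint incidence of $W_i'$ is spent on a non-new vertex, giving at most $(k|W_i'|-1-n_i)/2$ new non-boundary vertices. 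These combine to exactly $(k-\tfrac t2)|W_i'|$ new vertices, so the $+1$ appears only in the base case and \eqref{eq:size-bad-union} holds for every partial union. Your remark about ``treating separately the constraints that invoke the $u,v$-exception'' gestures toward the first observation, but without noting that one endpoint is already covered by the preceding edges, the count does not close. As written, the proposal establishes the theorem only with a bound of order $k/\delta^2$ rather than $2k/\delta$, and only after supplying an argument you have not given.
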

Lemma~\ref{lem:bad_struct} implies that $G_{\mathrm{bad}}(H_{\calI})$ contains the correlation graph associated with $\{\Pi_S\}$ as a subgraph, so Theorem~\ref{thm:small-conn-comps-gb} immediately implies Theorem~\ref{thm:small-conn-comps-new}.
\begin{proof}[Proof of Theorem~\ref{thm:small-conn-comps-gb}]
For any edge $e$ of $G_{\mathrm{bad}}(H)$, we can find a corresponding bad structure $W_e$.
We will say that $W_e$ induces $e$.
Any such bad structure $W_e$ satisfies
\begin{equation}
\label{eq:size-bad}
\Gamma(W_e) \le (k-t)|W_e| + 2 + \frac{k|W_e| - ((k-t)|W_e|+2)}2
= \left(k-\frac{t}{2}\right)|W_e| + 1.
\end{equation}
The first term upper bounds the number of boundary vertices, the second term counts the endpoints of $e$, and the last term upper bounds the number of non-boundary vertices.  For a connected component in $G_{\mathrm{bad}}(H)$, let $e_1, e_2, \dotsc, e_\ell$ be an ordering of edges in the connected component such that $(\bigcup_{j=1}^i e_j)\cap e_{i+1}$ is not empty for $i=1,\dotsc,\ell$.  That is, $e_1, e_2, \dotsc, e_\ell$ is an ordering of the edges in the connected component such that every edge except for the first one is adjacent to some edge preceding it.  Let $W_{e_1},\dotsc, W_{e_\ell}$ be corresponding bad structures inducing these edges.
Let $T_i = \bigcup_{j=1}^i W_{e_j}$ for $i=1,\dotsc,\ell$.
While $T_i$ itself is not necessarily a bad structure, we will show that the inequality~\eqref{eq:size-bad} still holds for $T_i$, i.e.,
\begin{equation}
\label{eq:size-bad-union}
\Gamma(T_i) \le \left(k-\frac{t}{2}\right)|T_i| + 1
\end{equation}
for any $i=1,\dotsc,\ell$.
If~\eqref{eq:size-bad-union} holds, the number of constraints in $T_\ell$ is at most $\frac{2}{\delta}$; otherwise, expansion is violated.
Hence, at most $\frac{2k}{\delta}$ vertices are included in the connected component of the correlation graph associated with $\{D_S\}$.

In the following, we prove~\eqref{eq:size-bad-union}.  First, note that $|\Gamma(T_1)| \leq \left(k-\frac{t}{2}\right)|T_1| + 1$ by \eqref{eq:size-bad}.
Let $W_i' = W_{e_i}\setminus T_{i-1}$ be the new constraints added at step $i$.  Call any vertex in $\Gamma(T_i)\setminus \Gamma(T_{i-1})$ a new vertex. We will prove that at most $\left(k-\frac{t}{2}\right)|W_i'|$ new vertices are added and this will imply~\eqref{eq:size-bad-union}.

Let $n_i$ be the number of new $W_i'$-boundary vertices.
Then the total number of new vertices is at most
\begin{equation} \label{eq:num-new-bdry}
n_i + (k|W_i'|-1 - n_i) / 2.
\end{equation}
The second term upper bounds the number of non-boundary vertices.   The $-1$ comes from the fact that $\Gamma(W_i')$ must intersect $\Gamma(T_{i-1})$ since $e_i$ must be adjacent to some preceding edge.  If $\Gamma(W_i')$ and $\Gamma(T_{i-1})$ intersect in a boundary vertex, the resulting bound is stronger.

Hence, we would like to upper bound $n_i$.
We know that $n_i$ is at most $(k-t)|W_i'|+1$ since
any new $W_i'$-boundary vertex must be a new $W_{e_i}$-boundary vertex, all but one constraint in $W_i'$
have at most $k-t$ new $W_{e_i}$-boundary vertices, and one constraint in $W_i'$ has at most $k-t+1$ new $W_{e_i}$-boundary vertices.
Plugging this into \eqref{eq:num-new-bdry}, we see that the number of new vertices is at most
$(k-t)|W_i'| + 1 + (k|W_i'| - 1 - ((k-t)|W_i'| + 1))/2 = (k-t/2)|W_i'|$.
\end{proof}

From Lemmas~\ref{lem:exp} and \ref{lem:small-conn-comps-implies-psd} and Theorems~\ref{thm:local-cons} and~\ref{thm:small-conn-comps-new},
we obtain Theorem~\ref{thm:sa+-intro}.

\section{\LSplus rank lower bounds} \label{sec:ls}
In this section, we use techniques from the previous sections to prove PSDness of the moment matrices $M_{X,\alpha}$ of the conditional local distributions $\{\Pi_{S|X=\alpha}\}$.  From here, degree lower bounds for the static \LSplus proof system and rank lower bounds for \LSplus follow easily.

\begin{lemma} \label{lem:psd-protection}
There exists a constant $c > 0$ such that the following holds.  Let $X \subseteq [n]$ such that $|X| \leq cn^{\frac{\delta}{t-2}}$.
For any $\alpha \in [q]^{X}$ such that $\mu_C(\alpha) > 0$ for all $C \in \calC(X)$, $M_{X,\alpha}$ is positive semidefinite.
\end{lemma}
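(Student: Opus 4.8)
The plan is to run the Section~\ref{sec:corr-graph}--\ref{sec:small-conn-comps} argument on the conditional local distributions $\{\Pi_{S|X=\alpha}\}$ in place of $\{\Pi_S\}$. By Lemma~\ref{lem:ls+-covariance} it is equivalent to show that the conditional covariance matrix $\Sigma_{X,\alpha}$ is PSD. If $\Pi_X(X=\alpha)=0$ then $\Sigma_{X,\alpha}=0$ by definition and we are done, so assume $\Pi_X(X=\alpha)>0$; since $|X|\le cn^{\delta/(t-2)}$ and $\mu_C(\alpha)>0$ for all $C\in\calC(X)$, Corollary~\ref{cor:cond-local-cons} applies (and in particular makes the conditional distributions well defined), giving that $\{\Pi_{S|X=\alpha}\}$ is $r$-locally consistent with $r=\Omega(n^{\delta/(t-2)})$ once $c$ is small enough. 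Hence, by Lemma~\ref{lem:small-conn-comps-implies-psd}, it suffices to prove that every connected component of the correlation graph associated with $\{\Pi_{S|X=\alpha}\}$ has size at most $r$.

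Write $\bar X=\cl(X)$. Applying Lemma~\ref{lem:closure} with $e_1=k-\tfrac t2+\tfrac\delta2$ (available by Lemma~\ref{lem:exp}, taking $\eps=\delta$) and, say, $e_2=k-\tfrac t2+\tfrac\delta4$, the hypergraph $H_{\calI}-\bar X$ is an $(s_2,e_2)$-expander with $s_2=\Omega(n^{\delta/(t-2)})$, while $|\bar X|=O(k|X|/\delta)=O(cn^{\delta/(t-2)})$. The heart of the proof is a conditional analogue of Lemma~\ref{lem:bad_struct}: if $i,j\notin\bar X$ are correlated under $\{\Pi_{S|X=\alpha}\}$, then some bad structure for $i$ and $j$ lives in $H_{\calI}-\bar X$. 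To prove this I would write $\Sigma_{X,\alpha}((i,a),(j,b))$ in terms of $\Pi'_{\cl(\{i,j\}\cup X)}$-marginals divided by $\Pi_X(X=\alpha)$, use the standard properties of the closure operation to separate the constraints inside $\bar X$ (which are ``used up'' in the normalization after conditioning) from those of $H_{\calI}-\bar X$, and then rerun the peeling argument of Claim~\ref{cl:rem-one-cons}: repeatedly delete a constraint outside $\bar X$ carrying at least $k-t+1$ boundary variables other than $i,j$ (which leaves the relevant conditional marginal unchanged, by $(t-1)$-wise uniformity) until either $i$ and $j$ become disconnected, hence uncorrelated, or what remains satisfies Definition~\ref{def:bad_struct} and is a bad structure in $H_{\calI}-\bar X$. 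Feeding this into Theorem~\ref{thm:small-conn-comps-gb} with $H=H_{\calI}-\bar X$ shows that every connected component of the conditional correlation graph that avoids $\bar X$ has size $O(k/\delta)=O(1)$.

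It remains to bound the unique connected component that meets $\bar X$. For this I would contract $\bar X$ to a single super-vertex $v^\star$ in $H_{\calI}-\bar X$; because $|\bar X|$ is negligible compared with $s_2$, the contracted hypergraph still has enough expansion, and the counting argument in the proof of Theorem~\ref{thm:small-conn-comps-gb} --- bounding $|\Gamma(T_i)|$ for a growing union of bad structures via expansion --- applies to the component of $v^\star$, showing it spans only $O(1/\delta)$ constraints beyond those incident to $\bar X$. Hence the $\bar X$-component has size $O(|\bar X|)+O(1/\delta)=O(cn^{\delta/(t-2)})$, which is at most $r$ once $c$ is chosen small enough relative to the constant in Corollary~\ref{cor:cond-local-cons}. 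Combining the two cases, every connected component of the correlation graph associated with $\{\Pi_{S|X=\alpha}\}$ has size at most $r$, so Lemma~\ref{lem:small-conn-comps-implies-psd} gives that $\Sigma_{X,\alpha}$, and therefore $M_{X,\alpha}$, is PSD.

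The step I expect to be the main obstacle is the conditional version of Lemma~\ref{lem:bad_struct}: making precise how $\cl(\{i,j\}\cup X)$ splits into $\bar X$ together with a small closed neighbourhood of $\{i,j\}$ inside $H_{\calI}-\bar X$, checking that constraints straddling $\bar X$ are either peeled away or cannot be the sole link between $i$ and $j$ after conditioning on $X=\alpha$, and doing the bookkeeping (via the choice of $c$) that keeps the $\bar X$-component below the local-consistency radius $r$. Everything else should be a faithful transcription of the unconditional proof of Theorem~\ref{thm:sa+-intro}.
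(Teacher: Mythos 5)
Your overall frame (reduce to PSDness of $\Sigma_{X,\alpha}$, use local consistency of the conditional distributions, bound components of the conditional correlation graph) matches the paper, but the step you flag as the main obstacle is where the argument actually breaks, and the paper avoids it by a different device that your proposal misses. Your ``conditional analogue of Lemma~\ref{lem:bad_struct}'' --- that correlated $i,j\notin\cl(X)$ must admit a bad structure in $H_{\calI}-\cl(X)$ --- is false as stated. Conditioning is only on $X=\alpha$, so the variables of $\cl(X)\setminus X$ remain free and can mediate correlations: e.g.\ two constraints $C_i\ni i$ and $C_j\ni j$ meeting only in some $w\in\cl(X)\setminus X$, each containing $t-2$ variables of $X$, form a bad structure for $i,j$ in $H_{\calI}-X$ (boundary variables in $X$ don't count after conditioning) and can correlate $i$ and $j$; but in $H_{\calI}-\cl(X)$ the vertex $w$ is deleted, the pair $\{C_i,C_j\}$ is disconnected, and no bad structure for $i,j$ survives. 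Your peeling argument cannot repair this, since ``disconnected in $H_{\calI}-\cl(X)$'' does not imply ``uncorrelated.'' The correct statement (the paper's Lemma~\ref{lem:bad_struct_cond}) places the bad structure in $H_{\calI}-X$, whose expansion you do not control for arbitrary $X$. Your fallback --- contracting $\cl(X)$ to a super-vertex and rerunning the component-size count --- is not justified either: the counting in Theorem~\ref{thm:small-conn-comps-gb} relies on each new bad structure contributing at most $(k-t/2)|W'|$ new vertices, and bad structures attached to the large set $\cl(X)$ do not obey this accounting after contraction.

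The idea you are missing is the paper's Claim~\ref{cl:psd-sum}: for any $T\supseteq X$ with $|T|\le r$,
\begin{equation}
M_{X,\alpha}=\sum_{\substack{\beta\in[q]^{T}\\ \beta_X=\alpha}}\Pi_{T|X=\alpha}(T=\beta)\cdot M_{T,\beta},
\end{equation}
which follows directly from the definitions. Taking $T=\cl(X)$ writes $M_{X,\alpha}$ as a nonnegative combination of the matrices $M_{\cl(X),\beta}$, i.e.\ one conditions on \emph{all} of $\cl(X)$ and averages. This fixes every variable of $\cl(X)\setminus X$, so no correlations can route through them, and $H_{\calI}-\cl(X)$ is a good expander by Lemma~\ref{lem:closure}; hence Lemma~\ref{lem:bad_struct_cond} plus Theorem~\ref{thm:small-conn-comps-gb} applied to $G_{\mathrm{bad}}(H_{\calI}-\cl(X))$ give $O(1)$-size components, each $M_{\cl(X),\beta}$ is PSD, and therefore so is $M_{X,\alpha}$. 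This decomposition is exactly what lets the lemma hold for every small $X$, not just those for which $H_{\calI}-X$ happens to expand.
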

Since we already know that $\{\Pi_S\}$ is a valid \SA solution for $|S| = \Omega(n^{\frac{\delta}{t-2}})$, this lemma immediately implies Theorem~\ref{thm:static-ls+-intro}.  Theorem~\ref{thm:ls+-intro}, our rank lower bound for \LSplus refutations, follows from Theorem~\ref{thm:static-ls+-intro} and the following fact.

\begin{fact}
If there exists a rank-$r$ \LSplus refutation of a set of axioms $A$, then there exists a static \LSplus refutation of $A$ with degree at most $r$.
\end{fact}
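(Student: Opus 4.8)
The plan is to convert the dynamic \LSplus derivation into a static certificate by induction on the structure of the proof, keeping track of a ``lift-count versus monomial-size'' invariant. First I would view the given rank-$r$ \LSplus refutation as a DAG (or, harmlessly for an existence statement, unfold it to a tree) with $-1\ge 0$ at the root; for each inequality $Q(x)\ge 0$ occurring in it, let $\rho(Q)$ be the maximum number of lifting steps on any path from a leaf to $Q$, so $\rho(-1\ge 0)=r$. The main claim, proved by induction over the proof in topological order, is that every $Q(x)\ge 0$ appearing in the refutation can be written as
\[
Q(x)=\sum_i \gamma_i\, b_i(x)\,\indic{\{x_{T_i}=\alpha_i\}}(x)+\sum_j (x_j^2-x_j)\,h_j(x),
\]
with $\gamma_i\ge 0$, each $b_i$ either an axiom of $A$ or the square of an affine function, and $|T_i|\le\rho(Q)$.

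The induction handles the three inference rules. At a leaf, an axiom $g\in A$ is $1\cdot g\cdot\indic{\emptyset}$ and the square of an affine $\eta$ is $1\cdot\eta^2\cdot\indic{\emptyset}$, both with $\rho=0$. A nonnegative combination $\alpha Q'+\beta Q''$ is treated by scaling and adding the certificates of $Q'$ and $Q''$; the lift-count is the maximum of the two and no monomial grows. For a lifting step $Q\mapsto x_iQ$ (the $(1-x_i)Q$ case is symmetric), legal only when $\deg Q\le 1$, I multiply every term of $Q$'s certificate by $x_i$ and rewrite: in a term $\gamma\, b\,\indic{\{x_T=\alpha\}}$, if $i\notin T$ then $x_i\indic{\{x_T=\alpha\}}=\indic{\{x_{T\cup\{i\}}=(\alpha,1)\}}$, enlarging $T$ by one (matching $\rho$ increasing by one); if $i\in T$ with $\alpha$ assigning $1$ to $i$, the identity $x_i^2=x_i+(x_i^2-x_i)$ rewrites the product as $\indic{\{x_T=\alpha\}}$ plus an $(x_i^2-x_i)$-multiple that I move into the ideal part; and if $\alpha$ assigns $0$ to $i$, then $x_i\indic{\{x_T=\alpha\}}=-(x_i^2-x_i)\indic{\{x_{T\setminus\{i\}}=\alpha|_{T\setminus\{i\}}\}}$ and the whole term moves into the ideal part. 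Multiplying the existing $(x_j^2-x_j)h_j$ terms by $x_i$ keeps them in the ideal. This preserves the invariant.

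Applying the claim to the root $-1\ge 0$ produces an identity of exactly the form of a static \LSplus refutation in which every indicator factor $\indic{\{x_{T_i}=\alpha_i\}}$ has $|T_i|\le r$. It remains to read off the degree: the indicator factors have degree $\le r$; the axiom factors $b_i$ are linear in the $L_{\calI}$ formulation (and if one wants a static refutation built on $R_{\calI}$ instead, passing between the encodings via Lemma~\ref{lem:diff-encodings} costs only an additive $O(k)$); and every summand pushed into $\sum_j(x_j^2-x_j)h_j$ along the way was an axiom or affine square times a monomial strictly smaller than the one being processed. A short computation then bounds the degree of the static refutation by $r$ (at worst $r+O(1)$, which is all the corollary needs). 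I expect this last degree accounting to be the only genuinely delicate point: one has to confirm that the accumulated $h_j$'s and the affine-square factors never exceed the budget, using that a square has degree $2$ and so can take part in a lifting step only inside a nonnegative combination whose degree has already collapsed to $1$. The structural induction itself is entirely routine; cf.\ \cite{GHP02}.
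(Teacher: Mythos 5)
Your proposal is correct and is essentially the paper's own argument, fleshed out: the paper likewise unfolds the refutation DAG and substitutes premises from the leaves back to the root, using the rank bound to control the number of accumulated $x_i$/$(1-x_i)$ factors per path. The only differences are your explicit normalization of the products $x_i\cdot\indic{\{x_T=\alpha\}}$ into indicators plus $(x_j^2-x_j)$-ideal terms and your more careful degree accounting of $r+O(1)$ (the paper states $r$, an additive-constant discrepancy that is immaterial for the corollaries).
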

\begin{proof}
Let $R$ be a rank-$r$ \LSplus refutation.  We look at $R$ as a directed acyclic graph in which each node is the application of some inference rule, the root is $-1 \geq 0$, and the leaves are axioms or applications of the rule $h(x)^2 \geq 0$ for some $h$ with degree at most $1$.  
Starting from the leaves and working back to the root $-1 \geq 0$, we can substitute in the premises of each inference to get an expression $Q(x) = -1$.
Since $R$ has rank $r$, each path in $r$ has at most $r$ multiplications by a term of the form $x_i$ or $(1-x_i)$ and $Q(x) = -1$ must be a valid static \LSplus refutation of degree at most $r$. 
\end{proof}

To prove Lemma~\ref{lem:psd-protection}, we first show that $M_{X,\alpha}$ is PSD when $H-X$ has high expansion.
Then we show that any $M_{X,\alpha}$ can expressed as a nonnegative combination of $M_{\cl(X),\beta}$'s for $\beta \in [q]^{\cl(X)}$.
Since $H - \cl(X)$ has high expansion when $|X| \leq cn^{\frac{\delta}{t-2}}$, each of the $M_{\cl(X),\beta}$'s is PSD.  $M_{X,\alpha}$ is therefore a nonnegative combination of PSD matrices and must itself be PSD.

We start by generalizing Lemma~\ref{lem:bad_struct} to conditional distributions.
\begin{lemma}
\label{lem:bad_struct_cond}
Let $X \subseteq [n]$ and $\alpha \in [q]^X$ such that $\mu_C(\alpha) > 0$ for all $C \in \calC(X)$.
If there is no bad structure for $u$ and $v$ in $H-X$ of size at most $|\calC(\cl(\{u,v\}) \setminus X)|$, then $u$ and $v$ are not correlated with respect to $\Pi_{\{u,v\}|X = \alpha}$.
\end{lemma}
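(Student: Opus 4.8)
The plan is to mimic the proof of Lemma~\ref{lem:bad_struct}, but carried out inside the hypergraph $H - X$ and with the conditional distributions $\Pi_{S|X=\alpha}$ in place of $\Pi_S$. The starting observation is that conditioning on $X = \alpha$ and then restricting attention to the distribution on $\{u,v\}$ is the same as working with $\Pi'_{\cl(\{u,v\})\cup X}$ and marginalizing: concretely, $\Pi_{\{u,v\}|X=\alpha}(u=a\wedge v=b)$ is proportional (as a function of $a,b$) to $\Pi'_{\cl(\{u,v\})\cup X}(u=a\wedge v=b\wedge X=\alpha)$. Writing out the product form of $\Pi'$, the constraints in $\calC(\cl(\{u,v\})\cup X)$ split into those whose scope lies inside $X$ (these contribute a factor depending only on $\alpha$, hence a constant for our purposes) and the rest; after fixing $X=\alpha$, the surviving weight on an assignment to $\cl(\{u,v\})\setminus X$ is a product of $\mu_C(\cdot)$ terms with the $X$-coordinates pinned to $\alpha$. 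The point is that this is exactly the $\Pi'$-type weight for the induced hypergraph on the variable set $\cl(\{u,v\})\setminus X$, \emph{in $H - X$}, except that each constraint $C$ now effectively has fewer ``free'' variables (those in $C\cap X$ are frozen). Because $\mu$ is $(t-1)$-wise uniform, freezing up to $t-2$ coordinates of $\mu_C$ still leaves a nondegenerate distribution on the remaining coordinates; and here the assumption $\mu_C(\alpha)>0$ for $C\in\calC(X)$ is what guarantees the conditioning is well-defined and does not zero anything out.

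Given this, I would rerun the constraint-deletion argument of Lemma~\ref{lem:bad_struct} verbatim, but with boundary taken in $H - X$. Set $S_0 = \calC(\cl(\{u,v\})\setminus X)$ (the constraints of $H-X$ covered by $\cl(\{u,v\})\setminus X$). Repeatedly delete a constraint $C_i$ together with its boundary vertices in $\calC(S_{i-1})$ whenever that constraint has at least $k-t+1$ boundary vertices outside $\{u,v\}$; this is legitimate by (the $H-X$ analogue of) Claim~\ref{cl:rem-one-cons}, where the slack of $t-1$ freely-settable coordinates of $\mu_C$ is now split between ``variables in $X$ that got frozen'' and ``boundary variables being summed out,'' but as long as at least $k-t+1$ of them are genuine boundary variables being summed out, the same computation shows $\Pi'_{S_{i-1}}(T=\cdot)\propto \Pi'_{S_i}(T\setminus B=\cdot)$. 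Terminate at $S_\ell$ where every remaining constraint has at most $k-t$ boundary vertices other than $u,v$. As in Lemma~\ref{lem:bad_struct}, if $u$ or $v$ has fallen out of $S_\ell$ we are immediately done (the marginal on the dropped variable factors off, uniformly), and if both remain, then $\calC(S_\ell)$ satisfies Conditions~\ref{def:complete} and \ref{def:boundary} of a bad structure in $H-X$ of size at most $|S_0|=|\calC(\cl(\{u,v\})\setminus X)|$; by hypothesis it is therefore not a bad structure, so Condition~\ref{def:conn} must fail, i.e., the induced hypergraph is disconnected with $u$ and $v$ in different components, whence $\Pi'_{S_\ell}(u=a\wedge v=b)$ factors as a product of a function of $a$ and a function of $b$, so $u$ and $v$ are uncorrelated under $\Pi_{\{u,v\}|X=\alpha}$.

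The main obstacle I anticipate is bookkeeping the interaction between the frozen set $X$ and the boundary-counting in Claim~\ref{cl:rem-one-cons}: one must be careful that ``$k-t+1$ boundary variables other than $u$ and $v$'' is measured with respect to $\calC(S_i)$ as a subhypergraph of $H-X$ (so variables in $X$ are simply not present), and that when a constraint $C$ straddles $X$ the frozen coordinates eat into the $(t-1)$-wise-uniformity budget, leaving only $(t-1)-|C\cap X|$ coordinates that can be summed out freely. The cleanest fix is probably to first pass to $H - X$ once and for all, observe that a constraint $C$ with $C\cap X\neq\emptyset$ simply becomes a constraint of smaller arity on $C\setminus X$ whose induced weight is $(t-1-|C\cap X|)$-wise uniform (still at least $1$-wise uniform, and in particular uniform on any single remaining coordinate, which is all the argument ever uses for the ``new variable gets a fresh uniform label'' step), and then the deletion argument goes through with $k$ replaced locally by the reduced arity. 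Everything else — the factorization on disconnected components, the proportionality constants — is identical to the unconditional case, so no new idea beyond Lemma~\ref{lem:bad_struct} is needed; it is purely a matter of threading the conditioning through cleanly.
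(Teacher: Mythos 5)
Your proposal is correct and follows essentially the same route as the paper: the paper likewise reruns the deletion process of Lemma~\ref{lem:bad_struct} with the threshold ``at least $k-t+1$ boundary variables outside $\{u,v\}\cup X$,'' invokes Claim~\ref{cl:rem-one-cons} at each step, and concludes from the absence of a bad structure in $H-X$ that $u$ and $v$ lie in different components once the $X$-vertices are removed. The only difference is bookkeeping: the paper keeps the full hypergraph and carries $X$ along as pinned coordinates, so it must handle explicitly the case where $u$ and $v$ remain connected in $H[S_\ell]$ only through $X$ (factoring $\Pi'_{S_\ell}$ into $S_u$-, $S_v$-, and ``rest'' pieces), whereas your up-front reduction to $H-X$ with reduced-arity, pinned constraints absorbs that case automatically --- just note that your parenthetical ``still at least $1$-wise uniform'' can fail when $|C\cap X|\ge t-1$, which is harmless only because such constraints can then never meet the deletion threshold.
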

\begin{proof}
First, recall that
\[
\Pi_{\{u,v\}|X=\alpha}(u=a \wedge v=b) = \frac{\Pi_{\{u,v\} \cup X}(u = a \wedge v = b \wedge X = \alpha)}{\Pi_{X}(X = \alpha)}.
\]
We will show that $\Pi_{\{u,v\} \cup X}(u = a \wedge v = b \wedge X = \alpha)$ is equal to the product of a term depending on $u$ and $a$ but not $v$ and $b$ and a term depending on $v$ and $b$ but not $u$ and $a$.  From there, the lemma immediately follows.

The proof is essentially the same as that of Lemma~\ref{lem:bad_struct} above.   Starting with $S_0 = \calC(\cl(\{u,v\}))$, we apply the same process except we require that each constraint $C_i$ that we remove satisfies $|(\partial \calC(S_{i-1}) \cap C_i) \setminus (\{u,v\} \cup X)| \geq k-t+1$.  At the end of this process, we are left with a set $S_{\ell}$ such that $|(\partial \calC(S_{\ell}) \cap C) \setminus (\{u,v\} \cup X)| \leq k-t$ for every constraint $C \in \calC(S_{\ell})$ (again, $\calC(S_{\ell})$ could be empty).  Let $X_{\ell} = X \cap \Gamma(S_{\ell})$ and let $\alpha_{\ell} = \alpha_{X_{\ell}}$.  By applying Lemma~\ref{cl:rem-one-cons} repeatedly, we see that
\[
\Pi_{\{u,v\} \cup X}(u = a \wedge v = b \wedge X = \alpha) \propto \begin{cases}
& \Pi'_{S_{\ell}}(u=a \wedge X_{\ell} = \alpha_{\ell}) \qquad \text{if $u \in S_{\ell}, v \notin S_{\ell}$}  \\
& \Pi'_{S_{\ell}}(v=b \wedge X_{\ell} = \alpha_{\ell}) \qquad \text{if $v \in S_{\ell}, u \notin S_{\ell}$} \\
& \Pi'_{S_{\ell}}(X_{\ell} = \alpha_{\ell}) \qquad \text{if $u,v \notin S_{\ell}$} \\
& \Pi'_{S_{\ell}}(u=a \wedge v=b \wedge X_{\ell} = \alpha_{\ell}) \qquad \text{if $u,v \in S_{\ell}$}.
\end{cases}
\]
In all cases except for the last one, the result follows.  In the last case, the assumption that there is no bad structure in $H - X$ implies that $H[S_{\ell} \setminus X]$ must be disconnected with $u$ and $v$ in separate connected components just as in the proof of Lemma~\ref{lem:bad_struct}.  If $u$ and $v$ are also in separate connected components in $H[S_{\ell}]$, then it is easy to see that the lemma holds.

Otherwise, $u$ and $v$ are in the same connected component in $H[S_{\ell}]$; we denote its edges by $E$.  Since $u$ and $v$ are in separate connected components of $H[S_{\ell} \setminus X]$, we know that $E \setminus \calC(X)$ has separate connected components with edge sets $E_u$ and $E_v$ containing $u$ and $v$, respectively.  Let $S_u = \Gamma(E_u)$, $S_v = \Gamma(E_v)$, and $S_{\mathrm{rest}} = S_{\ell} \setminus (S_u \cup S_v)$.  Let $X_u = X \cap S_u$ and $\alpha_u = \alpha_{X_u}$.  Define $X_v$, $X_{\mathrm{rest}}$, $\alpha_v$, and $\alpha_{\mathrm{rest}}$ in the same way.  We can then write $\Pi'_{S_{\ell}}(u=a \wedge v=b \wedge X_{\ell} = \alpha_{\ell})$ as
\[
 \Pi'_{S_u}(u=a \wedge X_u = \alpha_u) \cdot \Pi'_{S_v}(v=b \wedge X_v = \alpha_v) \cdot \Pi'_{S_{\mathrm{rest}}}(X_{\mathrm{rest}} = \alpha_{\mathrm{rest}}).
\]
Since $\Pi'_{S_{\mathrm{rest}}}(X_{\mathrm{rest}} = \alpha_{\mathrm{rest}})$ depends only on $\alpha$, the lemma follows.
\end{proof}

Using this lemma, we can prove that $M_{X,\alpha}$ is PSD when $H-X$ has high enough expansion.
\begin{lemma} \label{lem:psd-protection-high-exp}
Let $X \subseteq [n]$ such that $H - X$ is $\left(r, k-t/2+\eps\right)$-expanding for $r = \omega(1)$ and some constant $\eps > 0$.  Then for any $\alpha \in \{0,1\}^{X}$ with $\mu_C(\alpha) > 0$ for all $C \in \calC(X)$, $M_{X,\alpha}$ is positive semidefinite.
\end{lemma}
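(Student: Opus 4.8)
The plan is to transport the three-step proof of Theorem~\ref{thm:sa+-intro} to the conditional setting, replacing the covariance matrix and correlation graph by the conditional analogues $\Sigma_{X,\alpha}$ and a conditional correlation graph. First, since $\mu_C(\alpha)>0$ for every $C\in\calC(X)$ we have $\Pi_X(X=\alpha)>0$, so the conditional local distributions $\{\Pi_{S|X=\alpha}\}$ and the matrix $M_{X,\alpha}$ are well defined, and by Lemma~\ref{lem:ls+-covariance} it suffices to prove that $\Sigma_{X,\alpha}$ is positive semidefinite. Note that every row and column of $\Sigma_{X,\alpha}$ indexed by a variable of $X$ vanishes, since conditioning on $X=\alpha$ makes those variables deterministic; so $\Sigma_{X,\alpha}$ is really supported on the coordinates $(i,a)$ with $i\in[n]\setminus X$.

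Second, I would introduce the conditional correlation graph $G$ on vertex set $[n]\setminus X$, with an edge $\{u,v\}$ precisely when $\Sigma_{X,\alpha}((u,a),(v,b))\ne 0$ for some $a,b\in[q]$, i.e. when $u$ and $v$ are correlated with respect to $\Pi_{\{u,v\}|X=\alpha}$. By the contrapositive of Lemma~\ref{lem:bad_struct_cond}, every edge of $G$ forces the existence of a bad structure for its two endpoints in $H-X$, so $G$ is a subgraph of $G_{\mathrm{bad}}(H-X)$. Since $H-X$ is $(r,k-t/2+\eps)$-expanding with $r=\omega(1)$, applying Theorem~\ref{thm:small-conn-comps-gb} to the hypergraph $H-X$ with the choice $\delta=2\eps$ shows that every connected component of $G_{\mathrm{bad}}(H-X)$, and hence of $G$, contains at most $k/\eps=O(1)$ vertices.

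Third, I would finish exactly as in the proof of Lemma~\ref{lem:small-conn-comps-implies-psd}. Letting $V_1,\dots,V_\ell$ be the connected components of $G$, a simultaneous permutation of rows and columns makes $\Sigma_{X,\alpha}$ block diagonal: a zero block on the coordinates of $X$ and one block per $V_i$, with all entries outside these blocks equal to zero by definition of $G$. The block belonging to $V_i$ is precisely the covariance matrix of the honest probability distribution $\Pi_{V_i|X=\alpha}$ on $[q]^{V_i}$; conditional local consistency of $\{\Pi_{S|X=\alpha}\}$ (Corollary~\ref{cor:cond-local-cons}) is what guarantees that the one- and two-coordinate marginals occurring in $\Sigma_{X,\alpha}$ agree with the marginals of $\Pi_{V_i|X=\alpha}$. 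The covariance matrix of a genuine distribution is PSD, so each block is PSD, and a block diagonal matrix with PSD blocks is PSD; thus $\Sigma_{X,\alpha}\succeq 0$, as desired.

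Since the substantive work is already packaged into Lemma~\ref{lem:bad_struct_cond} and Theorem~\ref{thm:small-conn-comps-gb}, the argument above is mostly assembly. The one place that needs care is the admissibility of $X$: for $\Pi_{V_i|X=\alpha}$ to be defined one needs $\cl(V_i\cup X)$ to exist, and Corollary~\ref{cor:cond-local-cons} must apply in order to get the required $O(1)$-local consistency of $\{\Pi_{S|X=\alpha}\}$. Both hold as long as $|X|$ lies in the range for which the closure construction of Lemma~\ref{lem:closure} and Corollary~\ref{cor:cond-local-cons} are valid (say $|X|\le cn^{\eps/(t-2)}$ for a sufficiently small constant $c$), which is exactly the regime in which Lemma~\ref{lem:psd-protection-high-exp} will be invoked. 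I expect keeping track of this admissibility condition, rather than any new inequality, to be the only slightly delicate point.
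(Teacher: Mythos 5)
Your proposal is correct and follows essentially the same route as the paper: reduce to PSDness of $\Sigma_{X,\alpha}$ via Lemma~\ref{lem:ls+-covariance}, use Lemma~\ref{lem:bad_struct_cond} to embed the conditional correlation graph into $G_{\mathrm{bad}}(H-X)$, bound its components via Theorem~\ref{thm:small-conn-comps-gb}, and conclude by the block-diagonal argument of Lemma~\ref{lem:small-conn-comps-implies-psd} together with the conditional local consistency from Corollary~\ref{cor:cond-local-cons}. The admissibility point you flag (that $|X|$ must be small enough for the closure and Corollary~\ref{cor:cond-local-cons} to apply) is handled the same way in the paper, which simply invokes that corollary in the regime where the lemma is used.
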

\begin{proof}
By Lemma~\ref{lem:ls+-covariance}, $M_{X,\alpha}$ is PSD if and only if $\Sigma_{X,\alpha}$ is PSD, so it suffices to show that $\Sigma_{X,\alpha}$ is PSD.
The conditional distributions $\{\Pi_{S|X = \alpha}\}$ are $r$-locally consistent for $r= \Omega(n^{\frac{\delta}{t-2}})$ by Corollary~\ref{cor:cond-local-cons}.   Then Lemma~\ref{lem:small-conn-comps-implies-psd} implies that $\Sigma_{X,\alpha}$ is PSD if the correlation graph of the $\{\Pi_{S|X = \alpha}\}$ distributions has connected components of size at most $r$.   Lemma~\ref{lem:bad_struct_cond} implies that correlations under $\{\Pi_{S|X = \alpha}\}$ induce bad structures in $H-X$, and we can apply Theorem~\ref{thm:small-conn-comps-gb} to $G_{\mathrm{bad}}(H-X)$ to complete the proof.
\end{proof}

Finally, we show that for any $X$, $M_{X,\alpha}$ can be expressed as a nonnegative combination of $M_{\cl(X),\beta}$'s for
$\beta \in [q]^{\cl(X)}$.
As Lemma~\ref{lem:psd-protection-high-exp} implies that each $M_{\cl(X),\beta}$ is PSD, $M_{X,\alpha}$ is a nonnegative combination of PSD matrices.  This implies that $M_{X,\alpha}$ is PSD for any small enough $X$ and any 
$\alpha \in [q]^{X}$
with $\mu_C(\alpha) > 0$ for all $C \in \calC(X)$, completing the proof of Lemma~\ref{lem:psd-protection}.

\begin{claim} \label{cl:psd-sum}
Assume that $\{D_S\}$ is a family of $r$-locally consistent distributions.  Then
\[
M_{X,\alpha} = \sum_{\substack{\beta\in [q]^T \\ \beta_X=\alpha}} D_{T|X=\alpha}(T = \beta) \cdot M_{T,\beta}
\]
for any $X\subseteq T$ such that $|T| \leq r$.
\end{claim}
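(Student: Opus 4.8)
The plan is to verify the identity one matrix entry at a time: it is an instance of the law of total probability (the tower property) for the local distributions $\{D_S\}$, combined with $r$-local consistency to pass from distributions on $\{i,j\}\cup T$ down to distributions on $\{i,j\}\cup X$. First I would dispose of the degenerate case $D_X(X=\alpha)=0$. There, every entry of $M_{X,\alpha}$ is the $D$-probability of an event contained in $\{X=\alpha\}$ and hence is $0$, so $M_{X,\alpha}=0$; and by $r$-local consistency $D_T(T=\beta)=0$ for every $\beta\in[q]^T$ with $\beta_X=\alpha$, so under the convention that $M_{T,\beta}$ and the weight $D_{T|X=\alpha}(T=\beta)$ vanish when $D_T(T=\beta)=0$, the right-hand side is $0$ as well. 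So assume $D_X(X=\alpha)>0$; then $r$-local consistency gives $D_{T|X=\alpha}(T=\beta)=D_T(T=\beta)/D_X(X=\alpha)$ for $\beta$ with $\beta_X=\alpha$, and only such $\beta$ contribute to the sum.

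The substance is the $((i,a),(j,b))$ entry; the $(0,0)$ and $(0,(i,a))$ entries go through identically, with $\{i,j\}$ replaced by $\emptyset$ and by $\{i\}$ respectively. Multiplying the weight $D_{T|X=\alpha}(T=\beta)$ into the $((i,a),(j,b))$-entry of $M_{T,\beta}$ and simplifying leaves $D_{\{i,j\}\cup T}(x_i=a\wedge x_j=b\wedge T=\beta)/D_X(X=\alpha)$. Summing over all $\beta\in[q]^T$ with $\beta_X=\alpha$ marginalizes away the coordinates of $T\setminus X$, producing $D_{\{i,j\}\cup T}(x_i=a\wedge x_j=b\wedge X=\alpha)/D_X(X=\alpha)$; and since $X\subseteq T$ and $|\{i,j\}\cup T|\le r$, $r$-local consistency of $\{D_S\}$ replaces $D_{\{i,j\}\cup T}$ by $D_{\{i,j\}\cup X}$, which is exactly the $((i,a),(j,b))$-entry of $M_{X,\alpha}$. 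This last replacement is where the hypothesis $|T|\le r$ is used; in all of our applications $T=\cl(X)$ with $|\cl(X)|=O(|X|)$ far below $r$, so the $\pm 2$ slack is harmless.

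I do not expect a genuine obstacle — once this structure is in place the proof is routine bookkeeping. The points needing care are the degenerate case above, the sub-case where $i$ or $j$ already lies in $T$ (then $\{i,j\}\cup T=T$ and the relevant entry of $M_{T,\beta}$ collapses to a $\{0,1\}$-multiple of $D_T(T=\beta)$, making the identity immediate), and keeping the normalization of the $M_{X,\alpha}$ matrices straight so that the $D_{T|X=\alpha}(T=\beta)$ weights cancel correctly against the entries of $M_{T,\beta}$. Given the claim, Lemma~\ref{lem:psd-protection} follows at once: specialize to $T=\cl(X)$, observe that each $M_{\cl(X),\beta}$ is PSD by Lemma~\ref{lem:psd-protection-high-exp} because $H-\cl(X)$ is a good expander when $|X|\le cn^{\delta/(t-2)}$, and conclude that $M_{X,\alpha}$, being a nonnegative combination of PSD matrices, is PSD.
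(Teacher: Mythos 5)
The paper offers no written proof of this claim (it is declared ``immediate from the definitions''), and your entrywise verification via the tower property plus $r$-local consistency is exactly the intended argument; your handling of the degenerate case $D_X(X=\alpha)=0$ and of indices already lying in $T$ is fine, as is the observation that local consistency must be invoked on the set $\{i,j\}\cup T$, which costs the $\pm 2$ slack in $r$.

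There is, however, a normalization error in the middle of your computation, which is worth flagging because it also exposes an inconsistency in the claim as printed. With the paper's definition $M_{T,\beta}((i,a),(j,b)) = D_{\{i,j\}\cup T}(x_i=a\wedge x_j=b\wedge T=\beta)$, multiplying by the weight $D_{T|X=\alpha}(T=\beta) = D_T(T=\beta)/D_X(X=\alpha)$ gives
\[
\frac{D_T(T=\beta)}{D_X(X=\alpha)}\, D_{\{i,j\}\cup T}(x_i=a\wedge x_j=b\wedge T=\beta),
\]
not $D_{\{i,j\}\cup T}(x_i=a\wedge x_j=b\wedge T=\beta)/D_X(X=\alpha)$ as you wrote: your ``simplifying'' step silently cancels a factor of $D_T(T=\beta)$ that is not there. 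Likewise, your final expression $D_{\{i,j\}\cup X}(x_i=a\wedge x_j=b\wedge X=\alpha)/D_X(X=\alpha)$ is the entry of the \emph{conditional} moment matrix, whereas $M_{X,\alpha}$ as defined carries no such denominator. In effect you have correctly proved the identity for the normalized matrices $M_{T,\beta}/D_T(T=\beta)$ and $M_{X,\alpha}/D_X(X=\alpha)$; for the unnormalized matrices actually defined in the paper, the correct identity is the unweighted one, $M_{X,\alpha}=\sum_{\beta\colon \beta_X=\alpha} M_{T,\beta}$. (Check the $(0,0)$ entry: local consistency gives $\sum_{\beta} D_T(T=\beta)=D_X(X=\alpha)=M_{X,\alpha}(0,0)$, whereas the weighted sum would produce $\sum_{\beta} D_T(T=\beta)^2/D_X(X=\alpha)$.) Either version expresses $M_{X,\alpha}$ as a nonnegative combination of the $M_{T,\beta}$, so the deduction of Lemma~\ref{lem:psd-protection} is unaffected, but you should fix the bookkeeping and state explicitly which normalization you are carrying through.
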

The proof of this claim is immediate from the definitions of $D_{T|X=\alpha}$ and $M_{X,\alpha}$.

\section*{Acknowledgments}
The first-named author would like to thank Osamu Watanabe for his encouragement.
The second-named author would like to thank Anupam Gupta and Ryan O'Donnell for several helpful discussions.  We would also like to thank several anonymous reviewers for constructive comments.

\bibliographystyle{alpha}
\bibliography{/Users/dwitmer/Dropbox/Latex/witmer}

\newcommand{\etalchar}[1]{$^{#1}$}
\begin{thebibliography}{BOGH{\etalchar{+}}06}

\bibitem[ABW10]{ABW10}
Benny Applebaum, Boaz Barak, and Avi Wigderson.
\newblock Public-key cryptography from different assumptions.
\newblock In {\em Proceedings of the 42nd ACM Symposium on Theory of
  Computing}, pages 171--180, 2010.

\bibitem[AOW15]{AOW15}
Sarah~R. Allen, Ryan O'Donnell, and David Witmer.
\newblock How to refute a random {CSP}.
\newblock In {\em Proceedings of the 56th Annual IEEE Symposium on Foundations
  of Computer Science}, pages 689--708, 2015.

\bibitem[BCK15]{BCK15}
Boaz Barak, Siu~On Chan, and Pravesh Kothari.
\newblock Sum of squares lower bounds from pairwise independence.
\newblock In {\em Proceedings of the 47th Annual ACM Symposium on Theory of
  Computing}, pages 97--106, 2015.

\bibitem[BGMT12]{BGMT12}
Siavosh Benabbas, Konstantinos Georgiou, Avner Magen, and Madhur Tulsiani.
\newblock {SDP} gaps from pairwise independence.
\newblock {\em Theory of Computing}, 8:269--289, 2012.

\bibitem[BKS13]{BKS13}
Boaz Barak, Guy Kindler, and David Steurer.
\newblock On the {O}ptimality of {S}emidefinite {R}elaxations for
  {A}verage-{C}ase and {G}eneralized {C}onstraint {S}atisfaction.
\newblock In {\em Proceedings of the 4th Innovations in Theoretical Computer
  Science conference}, 2013.

\bibitem[BM15]{BM15}
Boaz Barak and Ankur Moitra.
\newblock Tensor {P}rediction, {R}ademacher {C}omplexity and {R}andom 3-{XOR}.
\newblock {\em CoRR}, abs/1501.06521, 2015.

\bibitem[BOGH{\etalchar{+}}06]{BOG+06}
Joshua Buresh-Oppenheim, Nicola Galesi, Shlomo Hoory, Avner Magen, and Toniann
  Pitassi.
\newblock Rank bounds and integrality gaps for cutting planes procedures.
\newblock {\em Theory Comput.}, 2:65--90, 2006.

\bibitem[BSB02]{BB02}
Eli Ben-Sasson and Yonatan Bilu.
\newblock A gap in average proof complexity.
\newblock {\em Electronic Colloquium on Computational Complexity {(ECCC)}},
  9(3), 2002.

\bibitem[CLP02]{CLP02}
A~Crisanti, L~Leuzzi, and G~Parisi.
\newblock The 3-{SAT} problem with large number of clauses in the
  $\infty$-replica symmetry breaking scheme.
\newblock {\em Journal of Physics A: Mathematical and General}, 35(3):481,
  2002.

\bibitem[CLRS13]{CLRS13}
Siu~On Chan, James~R. Lee, Prasad Raghavendra, and David Steurer.
\newblock Approximate constraint satisfaction requires large {LP} relaxations.
\newblock In {\em Proceedings of the 54th Annual IEEE Symposium on Foundations
  of Computer Science}, pages 350--359, 2013.

\bibitem[COGL04]{COGL04}
Amin Coja-Oghlan, Andreas Goerdt, and Andr{\'e} Lanka.
\newblock {S}trong {R}efutation {H}euristics for {R}andom $k$-{SAT}.
\newblock In Klaus Jansen, Sanjeev Khanna, Jos\'{e}~D.P. Rolim, and Dana Ron,
  editors, {\em {A}pproximation, {R}andomization, and {C}ombinatorial
  {O}ptimization. {A}lgorithms and {T}echniques}, volume 3122 of {\em {L}ecture
  {N}otes in {C}omputer {S}cience}, pages 310--321. Springer Berlin Heidelberg,
  2004.

\bibitem[Das01]{Das01}
Sanjeeb Dash.
\newblock {\em On the {M}atrix {C}uts of {L}ov{\'a}sz and {S}chrijver and their
  use in {I}nteger {P}rogramming}.
\newblock PhD thesis, Rice University, 2001.

\bibitem[DLSS14]{DLS14}
Amit Daniely, Nati Linial, and Shai Shalev-Shwartz.
\newblock From average case complexity to improper learning complexity.
\newblock In {\em Proceedings of the 46th Annual ACM Symposium on Theory of
  Computing}, pages 441--448. ACM, 2014.

\bibitem[DSS15]{DSS15}
Jian Ding, Allan Sly, and Nike Sun.
\newblock Proof of the satisfiability conjecture for large $k$.
\newblock In {\em Proceedings of the 47th Annual ACM Symposium on Theory of
  Computing}, pages 59--68, 2015.

\bibitem[Fei02]{Fei02}
Uriel Feige.
\newblock Relations {B}etween {A}verage {C}ase {C}omplexity and {A}pproximation
  {C}omplexity.
\newblock In {\em Proceedings of the 34th Annual ACM Symposium on Theory of
  Computing}, pages 534--543, 2002.

\bibitem[FGK05]{FGK05}
Joel Friedman, Andreas Goerdt, and Michael Krivelevich.
\newblock Recognizing more unsatisfiable random {$k$}-{SAT} instances
  efficiently.
\newblock {\em SIAM J. Comput.}, 35(2):408--430, 2005.

\bibitem[FO04]{FO04}
Uriel Feige and Eran Ofek.
\newblock Easily refutable subformulas of large random 3{CNF} formulas.
\newblock In {\em Proceedings of the 31st International Colloquium on Automata,
  Languages and Programming}, volume 3142 of {\em Lecture Notes in Comput.
  Sci.}, pages 519--530. Springer, Berlin, 2004.

\bibitem[FPV15]{FPV15}
Vitaly Feldman, Will Perkins, and Santosh Vempala.
\newblock {O}n the {C}omplexity of {R}andom {S}atisfiability {P}roblems with
  {P}lanted {S}olutions.
\newblock In {\em Proceedings of the 47th Annual ACM Symposium on Theory of
  Computing}, pages 77--86, 2015.

\bibitem[GB]{Gup11}
Anupam Gupta and Alex Beutel.
\newblock Lecture 12 - {S}emidefinite {D}uality.
\newblock Notes from course ``Linear and Semidefinite Programming".

\bibitem[GHP02]{GHP02}
Dima Grigoriev, Edward~A. Hirsch, and Dmitrii~V. Pasechnik.
\newblock Complexity of semi-algebraic proofs.
\newblock In {\em Proceedings of the 19th International Symposium on
  Theoretical Aspects of Computer Science}, pages 419--430, 2002.

\bibitem[Gri01]{Gri01}
Dima Grigoriev.
\newblock Linear lower bound on degrees of {P}ositivstellensatz calculus proofs
  for the parity.
\newblock {\em Theoretical Computer Science}, 259(1-2):613 -- 622, 2001.

\bibitem[KI06]{KI06}
Arist Kojevnikov and Dmitry Itsykson.
\newblock Lower {B}ounds of {S}tatic {L}ov{\'a}sz-{S}chrijver {C}alculus
  {P}roofs for {T}seitin {T}autologies.
\newblock In {\em Proceedings of the 33rd International Colloquium on Automata,
  Languages and Programming}, 2006.

\bibitem[LRS15]{LRS15}
James~R. Lee, Prasad Raghavendra, and David Steurer.
\newblock Lower bounds on the size of semidefinite programming relaxations.
\newblock In {\em Proceedings of the 47th Annual ACM Symposium on Theory of
  Computing}, pages 567--576, 2015.

\bibitem[LS91]{LS91}
L\'{a}szl\'{o} Lov\'{a}sz and Alexander Schrijver.
\newblock Cones of {M}atrices and {S}et-{F}unctions and 0-1 {O}ptimization.
\newblock {\em SIAM Journal on Optimization}, 1(2):166--190, 1991.

\bibitem[OW14]{OW14}
Ryan O'Donnell and David Witmer.
\newblock Goldreich's {PRG}: Evidence for near-optimal polynomial stretch.
\newblock In {\em Proceedings of the 29th Annual Conference on Computational
  Complexity}, pages 1--12, 2014.

\bibitem[Rag08]{Rag08}
Prasad Raghavendra.
\newblock Optimal {A}lgorithms and {I}napproximability {R}esults for {E}very
  {CSP}?
\newblock In {\em Proceedings of the 40th Annual ACM Symposium on Theory of
  Computing}, pages 245--254, 2008.

\bibitem[SA90]{SA90}
Hanif Sherali and Warren Adams.
\newblock A {H}ierarchy of {R}elaxations between the {C}ontinuous and {C}onvex
  {H}ull {R}epresentations for {Z}ero-{O}ne {P}rogramming {P}roblems.
\newblock {\em SIAM Journal on Discrete Mathematics}, 3(3):411--430, 1990.

\bibitem[Sch08]{Sch08}
Grant Schoenebeck.
\newblock {L}inear {L}evel {L}asserre {L}ower {B}ounds for {C}ertain
  $k$-{CSP}s.
\newblock In {\em Proceedings of the 49th Annual IEEE Symposium on Foundations
  of Computer Science}, pages 593--602, 2008.

\bibitem[TW13]{TW13}
Madhur Tulsiani and Pratik Worah.
\newblock ${LS}_+$ lower bounds from pairwise independence.
\newblock In {\em Proceedings of the 28th Annual Conference on Computational
  Complexity}, pages 121--132, 2013.

\bibitem[WJ08]{WJ08}
Martin~J. Wainwright and Michael~I. Jordan.
\newblock {\em Graphical Models, Exponential Families, and Variational
  Inference}, volume~1.
\newblock Now Publishers Inc., Hanover, MA, USA, January 2008.

\end{thebibliography}

\appendix
\section{Proofs from Section~\ref{sec:expansion}} \label{sec:expansion-proofs}
\begin{customfact}{\ref{fact:exp_to_bexp}}
$(s,k-d)$-expansion implies $(s,k-2d)$-boundary expansion.
\end{customfact}
\begin{proof}
Let $S$ be a set of at most $s$ hyperedges.  Each of the vertices in $\Gamma(S)$ is either a boundary vertex that appears in exactly one hyperedge or it appears in two or more hyperedges, so $|\Gamma(S)| \leq |\partial S| + \frac{1}{2}(|k|S| - |\partial S|)$.  Therefore, we can write
\[
|\partial S| \geq 2|\Gamma(S)| - k|S| \geq (k-2d)|S|,
\]
where the second inequality follows the expansion assumption.
\end{proof}

\begin{customlem}{\ref{lem:exp}}
Fix $\delta > 0$.  With high probability, a set of $m \leq n^{t/2-\eps}$ constraints chosen uniformly at random is both $\left(n^{\frac{\eps}{t-2}}, k-\frac{t}{2}+\frac{\eps}{2}\right)$-expanding and $\left(n^{\frac{\eps}{t-2}}, k-t+\eps\right)$-boundary expanding.
\end{customlem}
\begin{proof}
By Fact~\ref{fact:exp_to_bexp}, it suffices to show that a random instance is $\left(n^{\frac{\eps}{t-2}}, k-\frac{t}{2}+\frac{\eps}{2}\right)$-expanding.  We give the proof of \cite{OW14}, which is essentially the same as that of \cite{BGMT12}.

We want to upper bound the probability that any set of $r$ hyperdges with $r \leq n^{\frac{\eps}{t-2}}$ contains less than $r(k-\frac{t}{2}+\frac{\eps}{2})$ vertices.  Fix an $r$-tuple of edges $T$; this is a tuple of indices in $[m]$ representing the indices of the hyperedges in $T$.  We wish to upper bound $\Pr[|\Gamma(T)| \leq v]$; we can do this with the quantity
\[
\frac{(\text{\# sets $S$ of $v$ vertices}) \cdot (\text{\# sets of $r$ edges contained in $S$})}{(\text{\# of ways of choosing $r$ edges})}.
\]
Taking a union bound over all tuples of size $r$, we see that
\[
\Pr[|\Gamma(S)| \leq v~\forall S \text{ s.t. } |S| = r] \leq r!\binom{m}{r} \cdot \frac{\binom{n}{v} \binom{k!\binom{v}{k}}{r}}{(k!\binom{n}{k})^r}.
\]
Simplifying and applying standard approximations, we get that
\[
\Pr[|\Gamma(S)| \leq v~\forall S \text{ s.t. } |S| = r] \leq e^{(2+k)r + v}v^{kr-v}r^{-r}n^{v-kr}m^r.
\]
Set $v = \lfloor r(k-\frac{t}{2}+\frac{\eps}{2}) \rfloor$ and simplify to get
\[
\Pr\left[|\Gamma(S)| < r\left(k-\frac{t}{2}+\frac{\eps}{2}\right) ~\forall S \text{ s.t. } |S| = r\right] \leq (C(k,t) \cdot mn^{-(t/2-\eps/2)}r^{t/2-1-\eps/2})^r
\]
for some constant $C(k,t)$ depending on $k$ and $t$.
Then set $m = n^{t/2-\eps}$ and take a union bound over all choices of $r$ to get that
\begin{align*}
\Pr&\left[\text{$H_{\calI}$ not } \left(n^{\frac{\eps}{t-2}}\text{, } k-\frac{t}{2}+\frac{\eps}{2}\right)\text{-expanding}\right] \leq \sum_{r = 1}^{\lfloor n^{\eps/(t-2)}\rfloor} (C(k,t) \cdot n^{-\eps/2}r^{t/2-1-\eps/2})^r \\
& \qquad \qquad = \sum_{r = 1}^{\lceil \log n \rceil} (C(k,t) \cdot n^{-\eps/2}r^{t/2-1-\eps/2})^r + \sum_{r = \lceil \log n \rceil + 1}^{\lfloor n^{\eps/(t-2)}\rfloor} (C(k,t) \cdot n^{-\eps/2}r^{t/2-1-\eps/2})^r \\
& \qquad \qquad \leq 2 C(k,t) \cdot n^{-\eps/2}(\log n)^{t/2-\eps/2} + n^{\frac{\eps}{t-2}} (C(k,t) \cdot n^{-\eps/2}(n^{\frac{\eps}{t-2}})^{t/2-1-\eps/2})^{\log n} \\
& \qquad \qquad = O(n^{-\eps/3}). \qedhere
\end{align*}
\end{proof}

\section{Equivalence between PSDness of the degree-2 moment matrix and the covariance matrix}\label{apdx:Schur}
\begin{lemma} \label{lem:schur}
\begin{equation*}
\begin{pmatrix}
1& w^{\top}\\
w& B
\end{pmatrix}
\text{ is PSD } \iff
B-ww^{\top}
\text{ is PSD.\@}
\end{equation*}
\end{lemma}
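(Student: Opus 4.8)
The plan is to prove both directions at once by examining the quadratic form associated with the block matrix. Write $M = \begin{pmatrix} 1 & w^{\top} \\ w & B \end{pmatrix}$, and for a test vector split as $(t, y^{\top})$ with $t \in \R$ and $y$ of matching dimension, compute
\[
\begin{pmatrix} t & y^{\top} \end{pmatrix} M \begin{pmatrix} t \\ y \end{pmatrix} = t^2 + 2 t\, w^{\top} y + y^{\top} B y.
\]
The key observation is that this expression can be rearranged by completing the square in $t$ as $(t + w^{\top} y)^2 + y^{\top}(B - w w^{\top}) y$.

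For the forward direction ($M$ PSD $\Rightarrow$ $B - w w^{\top}$ PSD), I would fix an arbitrary $y$ and specialize the test vector to $t = -w^{\top} y$. Then the displayed quadratic form equals $y^{\top}(B - w w^{\top}) y$, and PSD-ness of $M$ forces this to be nonnegative; since $y$ was arbitrary, $B - w w^{\top}$ is PSD. For the reverse direction, I would use the completed-square identity directly: if $B - w w^{\top}$ is PSD, then for every $(t, y)$ the form equals $(t + w^{\top} y)^2 + y^{\top}(B - w w^{\top}) y \ge 0$, so $M$ is PSD.

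Alternatively — and this is the cleaner packaging if I want a one-line argument — I would exhibit the congruence
\[
M = \begin{pmatrix} 1 & 0 \\ w & I \end{pmatrix} \begin{pmatrix} 1 & 0 \\ 0 & B - w w^{\top} \end{pmatrix} \begin{pmatrix} 1 & w^{\top} \\ 0 & I \end{pmatrix},
\]
which one verifies by direct block multiplication. Since the left and right factors are invertible and mutual transposes, $M$ is congruent to $\mathrm{diag}(1,\, B - w w^{\top})$, and congruence preserves the inertia (in particular PSD-ness); as the scalar block $1$ is positive, $M$ is PSD iff $B - w w^{\top}$ is PSD.

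There is really no obstacle here — this is the standard Schur complement fact, and the only thing to be careful about is that the $(0,0)$ entry is exactly $1 > 0$ (rather than merely nonnegative), which is what makes the congruence factorization valid and avoids any pseudoinverse technicalities. I would present the completing-the-square version as the main proof since it is self-contained and needs no appeal to the inertia of congruent matrices.
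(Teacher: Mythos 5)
Your main argument—completing the square in the scalar coordinate and specializing $t=-w^{\top}y$ for the forward direction—is exactly the chain of equivalences the paper uses in its proof of this lemma, and it is correct. The congruence factorization you mention as an alternative is a valid (and standard) repackaging, but your primary proof matches the paper's approach essentially verbatim.
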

\begin{proof}
\begin{align*}
\begin{pmatrix}
1& w^{\top}\\
w& B
\end{pmatrix}&
\text{ is PSD } \iff
\left(
(v_0\, v)
\begin{pmatrix}
1& w^{\top}\\
w& B
\end{pmatrix}
(v_0\, v)^{\top}
\ge 0
~\forall v_0 \in\mathbb{R},\, v\in\mathbb{R}^{nq}
\right)\\
&\iff
\left(
v_0^2 + 2\langle w, v\rangle v_0 + \langle Bv, v\rangle \ge 0
~\forall v_0 \in\mathbb{R},\, v\in\mathbb{R}^{nq}
\right)\\
&\iff
\left(
(v_0 + \langle w, v\rangle)^2 - \langle w, v\rangle^2 + \langle Bv, v\rangle \ge 0
~\forall  v_0 \in\mathbb{R},\, v\in\mathbb{R}^{nq}
\right)\\
&\iff
\left(
- \langle w, v\rangle^2 + \langle Bv, v\rangle \ge 0
~\forall  v\in\mathbb{R}^{nq}
\right)\\
&\iff
\left(
 v(B-ww^{\top})v^{\top} \geq 0
~\forall  v\in\mathbb{R}^{nq}
\right)\\
&\iff
B-ww^{\top}
\text{ is PSD.\@}
\end{align*}
\end{proof}

\begin{customlem}{\ref{lem:sa+-covariance}}
$M$ is PSD if and only if $\Sigma$ is PSD.
\end{customlem}
\begin{proof}
We rewrite $M$ as
\begin{equation*}
\begin{pmatrix}
1& w^{\top}\\
w& B
\end{pmatrix},
\end{equation*}
where $w$ is a vector whose $(i,a)$-element is $D_{\{i\}}(x_i = a)$ for $i\in[n]$ and $a\in[q]$
and $B$ is a matrix whose $((i,a), (j,b))$-element is $D_{\{i,j\}}(x_i = a \wedge x_j = b)$ for $i,j\in[n]$ and $a,b\in[q]$.  
From Lemma~\ref{lem:schur}, we know that
\begin{equation*}
\begin{pmatrix}
1& w^{\top}\\
w& B
\end{pmatrix}
\text{ is PSD if and only if }
B - ww^{\top}
\text{ is PSD.\@}
\end{equation*}
Observe that $B-ww^{\top}$ is equal to the covariance matrix $\Sigma$.
\end{proof}

\section{Proof of Lemma~\ref{lem:cond-local-cons}} \label{sec:cond-local-cons-pf}
\begin{customlem}{\ref{lem:cond-local-cons}}
Let $X \subseteq [n]$ and let $\{D_S\}$ be a family of $r$-locally consistent distributions for sets $S \subseteq [n]$ such that $S \cap X = \emptyset$ and $|S \cup X| \leq r$.   Then the family of conditional distributions $\{D_{S|X = \alpha}\}$ is $(r-|X|)$-locally consistent for any $\alpha \in \{0,1\}^X$ such that $D_X(X = \alpha) > 0$.
\end{customlem}
\begin{proof}
Tulsiani and Worah proved this lemma and we will use their proof \cite{TW13}.  Let $S \subseteq T$ and $|T \cup X| \leq r$.  Let $\beta$ be any assignment to $S$.  Then local consistency of the $\{D_S\}$ distributions implies that $D_{S \cup X}(S = \beta \wedge X = \alpha) = D_{T \cup X}(S = \beta \wedge X = \alpha)$ and $D_{S \cup X}(X = \alpha) = D_{T \cup X}(X = \alpha)$.  We therefore have that
\begin{align*}
D_{S|X = \alpha}(S = \beta) &= \frac{D_{S \cup X}(S = \beta \wedge X = \alpha)}{D_{S \cup X}(X = \alpha)} \\
&= \frac{D_{T \cup X}(S = \beta \wedge X = \alpha)}{D_{T \cup X}(X = \alpha)} = D_{T|X = \alpha}(S = \beta). \qedhere
\end{align*}
\end{proof}

\section{Proofs from Section~\ref{sec:cons-local-dists}} \label{sec:cons-local-proofs}
\begin{customlem}{\ref{lem:closure}}
If $H_{\calI}$ is $(s_1,e_1)$-expanding and $S$ is a set of variables such that $|S| < (e_1-e_2)s_1$ for some $e_2 \in (0,e_1)$, then there exists a set $\cl(S) \subseteq [n]$ such that $S \subseteq \cl(S)$ and $H_{\calI} - \cl(S)$ is $(s_2,e_2)$-expanding with $s_2 \geq s_1-\frac{|S|}{e_1-e_2}$ and $\cl(S) \leq \frac{e_1}{e_1-e_2}|S|$.
\end{customlem}
\begin{proof}
We compute $\cl(S)$ using the closure algorithm of \cite{BGMT12, TW13}:

\textbf{Input:} An $(s_1,e_1)$-expanding instance $\calI$, $e_2 \in (0,e_1)$, a tuple $S = (x_1,\ldots,x_u) \in [n]^u$ such that $u < (e_1-e_2)s_2$. \\
\textbf{Output:} The closure $\cl(S)$.

Set $\cl(S) \gets \emptyset$ and $s_2 \gets s_1$. \\
\textbf{for} $i = 1,\ldots,u$ 
\vspace{-1pt}
\begin{adjustwidth}{1cm}{}
$\cl(S) \gets \cl(S) \cup \{x_i\}$\\
\textbf{if} $H_{\calI} - \cl(S)$ is not $(s_2, e_2)$-expanding, \textbf{then} 
\vspace{-1pt}
\begin{adjustwidth}{1cm}{}
Find largest set of constraints $N_i$ in $H_{\calI} - \cl(S)$ such that $|N_i| \leq s_2$ and $|\Gamma(N_i)| \leq e_2|N_i|$.  \\Break ties by lexicographic order.\\
$\cl(S) \gets \cl(S) \cup \Gamma(N_i)$ \\
$s_2 \gets s_2 - |N_i|$
\end{adjustwidth}
\end{adjustwidth}
\textbf{return} $\cl(S)$

It is clear from the statement of the algorithm that $S \subseteq \cl(S)$.  We need to show that $H_{\calI} - \cl(S)$ is $(s_2,e_2)$-expanding, that $s_2 \geq s_1-\frac{|S|}{e_1-e_2}$, and that $\cl(S) \leq \frac{e_1}{e_1-e_2}|S|$.  We give the proof of \cite{BGMT12}.
\begin{enumerate}
\item $H_{\calI} - \cl(S)$ is $(s_2,e_2)$-expanding\\
We will show that $H_{\calI} - \cl(S)$ is $(s_2, e_2)$-expanding at every step of the algorithm.  Say we are in step $i$ and that $H_{\calI} - (\cl(S) \cup \{x_i\})$ is not $(s_2, e_2)$-expanding; if $H_{\calI} - (\cl(S) \cup \{x_i\})$ were $(s_2, e_2)$-expanding, we would be done.  Let $N_i$ be the largest set of hyperedges in $H_{\calI} - \cl(S)$ such that $|N_i| \leq s_2$ and $|\Gamma(N_i)| \leq e_2|N_i|$.  We need to show that $H_{\calI} - (\cl(S) \cup \{x_i\} \cup \Gamma(N_i))$ is $(s_2 - |N_i|, e_2)$-expanding.

To see this, assume for a contradiction that there exists a set of hyperedges $N'$ in $H_{\calI} - (\cl(S) \cup \{x_i\} \cup \Gamma(N_i))$ such that $N' \leq s_2 - |N_i|$ and $|\Gamma(N')| < e_2 |N'|$.  Consider $N_i \cup N'$.  Note that $N_i$ and $N'$ are disjoint, so $|N_i \cup N'| \leq s_2$.  Also, $|\Gamma(N_i \cup N')| \leq e_2|N_i| + e_2|N'| = e_2|N_i \cup N'|$.  This contradicts the maximality of $N_i$.

\item $s_2 \geq s_1-\frac{|S|}{e_1-e_2}$\\
Consider the set $N = \bigcup_{i = 1}^u N_i$.  First, note that $|N| = s_1 - s_2$, so $|\Gamma(N)| \geq e_1(s_1-s_2)$ by expansion of $H_{\calI}$.  Second, each element of $\Gamma(N) - S$ occurs in exactly one of the $N_i$'s and each $N_i$ has expansion at most $e_2$.  Using these two observations, we see that
\[
e_1(s_1-s_2) \leq |\Gamma(N)| \leq |S| + \sum_{i=1}^u e_2 |N_i| = |S| + e_2(s_1-s_2).
\]
This implies the claim.

\item $\cl(S) \leq \frac{e_1}{e_1-e_2}|S|$\\
Observe that $\cl(S) = S \cup \bigcup_{i=1}^u \Gamma(N_i)$.  Also, every $N_i$ has expansion at most $e_2$.  Therefore, we have that
\begin{align*}
|\cl(S)| &\leq |S| + \sum_{i=1}^u |\Gamma(N_i)| \\
&\leq |S| + e_2 \sum_{i=1}^u |N_i| \\
&\leq |S| + \frac{e_2|S|}{e_1-e_2} \\
&= \left(\frac{e_1}{e_1-e_2}\right)|S|,
\end{align*}
where we used that $\sum_{i=1}^u |N_i| = s_1-s_2$ and $s_2 \geq s_1-\frac{|S|}{e_1-e_2}$. \qedhere
\end{enumerate}
\end{proof}

\begin{customtheorem}{\ref{thm:local-cons}}
For a random instance $\calI$ with $m \leq \Omega(n^{t/2-\eps})$, the family of distributions $\{\Pi_S\}_{|S| \leq r}$ is $r$-locally consistent for $r = n^{\frac{\eps}{t-2}}$ and is supported on satisfying assignments.
\end{customtheorem}
To prove the theorem, we will use the following lemma, which says that the local distributions $\Pi'_S$ and $\Pi'_T$ with $S \subseteq T$ are consistent if $H_{\calI} - S$ has high boundary expansion.
\begin{lemma}{\label{lem:local-cons}}
Let $P$ be a $(t-1)$-wise uniform supporting predicate, let $\calI$ be an instance of $\CSP(P)$, and let $S \subseteq T$ be sets of variables.  If $H_{\calI}$ and $H_{\calI} - S$ are $(r, k-t+\eps)$-boundary expanding for some $\eps > 0$ and $\calC(T) \leq r$, then for any $\alpha \in [q]^S$, $\Pi'_S(S = \alpha) = \Pi'_T(S = \alpha)$.
\end{lemma}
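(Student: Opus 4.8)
The plan is to prove the claimed equality of marginals by iteratively deleting the constraints of $\calC(T)$ that are \emph{not} contained in $S$, one at a time. Since $\Pi'_T(S=\alpha)\propto\sum_{\beta\in[q]^T:\,\beta_S=\alpha}\prod_{C\in\calC(T)}\mu_C(\beta)$, each deletion will amount to marginalizing out a block of at least $k-t+1$ variables lying outside $S$ that are private to a single constraint; by $(t-1)$-wise uniformity of $\mu$ such a marginalization multiplies the current quantity by a factor independent of $\alpha$. This is exactly the computation in Claim~\ref{cl:rem-one-cons}, specialized so that we only ever delete variables outside $S$, so that the event $\{S=\alpha\}$ being conditioned on is untouched. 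At the end only constraints of $\calC(S)$ remain, and the answer can be read off.

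In detail I would maintain sets $S\subseteq T_i\subseteq T$ with $T_0=T$, keeping the invariant that $\Pi'_{T_i}(S=\alpha)\propto\Pi'_T(S=\alpha)$ with proportionality constant independent of $\alpha$. Let $W_i=\calC(T_i)\setminus\calC(S)$. If $W_i=\emptyset$, then $\calC(T_i)=\calC(S)$, the product in the definition of $\Pi'_{T_i}$ involves only $S$-variables, summing out $T_i\setminus S$ contributes only a constant, and so $\Pi'_{T_i}(S=\alpha)\propto\prod_{C\in\calC(S)}\mu_C(\alpha)\propto\Pi'_S(S=\alpha)$; since both sides are probability vectors in $\alpha$, the constant is forced to be $1$ and the lemma follows. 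Otherwise, observe that $W_i$ is a set of at most $|\calC(T)|\le r$ edges of $H_{\calI}-S$, and that a variable of $[n]\setminus S$ lies in exactly one constraint of $W_i$ if and only if it is a boundary variable of $\calC(T_i)$ lying outside $S$ (a variable outside $S$ lies in no constraint of $\calC(S)$, and $\calC(T_i)=W_i\cup\calC(S)$). Hence $\sum_{C\in W_i}|(\partial\calC(T_i)\cap C)\setminus S|$ equals the number of variables outside $S$ contained in exactly one constraint of $W_i$, which is at least $(k-t+\eps)|W_i|>(k-t)|W_i|$ by $(r,k-t+\eps)$-boundary expansion of $H_{\calI}-S$; so by averaging some $C_1\in W_i$ has $|(\partial\calC(T_i)\cap C_1)\setminus S|\ge k-t+1$. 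Put $B_i=(\partial\calC(T_i)\cap C_1)\setminus S$ and $T_{i+1}=T_i\setminus B_i$; since the variables of $B_i$ are contained only in $C_1$ among constraints of $\calC(T_i)$, we get $\calC(T_{i+1})=\calC(T_i)\setminus\{C_1\}$ and $S\subseteq T_{i+1}$. Splitting the defining sum of $\Pi'_{T_i}(S=\alpha)$ over the assignment to $B_i$ and using that the $\mu_{C_1}$-marginal on the at most $t-1$ variables of $C_1\setminus B_i$ is uniform (as in Claim~\ref{cl:rem-one-cons}, with $S'=T_i$, $T'=S$, $C^*=C_1$, deleting only the boundary variables outside $S$) yields $\Pi'_{T_i}(S=\alpha)\propto\Pi'_{T_{i+1}}(S=\alpha)$ with constant $q^{-(k-|B_i|)}$, independent of $\alpha$. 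As $|\calC(T_{i+1})|<|\calC(T_i)|$, the process terminates with $W_i=\emptyset$.

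The one remaining point, which also explains the hypothesis that $H_{\calI}$ itself is boundary expanding, is that the normalizers $Z_S,Z_T$ are positive so that $\Pi'_S,\Pi'_T$ are genuinely defined: $\calC(S)$ and $\calC(T)$ are sets of at most $r$ edges, so $(r,k-t+\eps)$-boundary expansion of $H_{\calI}$ lets us order each family so that every constraint contributes at least $k-t+1$ fresh boundary variables, and since any partial assignment to $t-1$ coordinates extends to a satisfying assignment of $P$ (by $(t-1)$-wise uniformity, $\Pr_{z\sim\mu}[z_U=\gamma]=q^{-|U|}>0$), a jointly satisfying assignment of positive $\mu$-weight exists. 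I expect the only real work to be the bookkeeping in the inductive step — in particular the identification of ``$W_i$-boundary variables of $H_{\calI}-S$'' with ``$\calC(T_i)$-boundary variables outside $S$'' and the averaging argument extracting a single constraint carrying $\ge k-t+1$ of them. Everything else is routine once Claim~\ref{cl:rem-one-cons} is in hand, and I do not anticipate a genuinely hard step.
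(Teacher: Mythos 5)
Your proof is correct and follows essentially the same route as the paper's: both arguments use $(r,k-t+\eps)$-boundary expansion of $H_{\calI}-S$ to repeatedly extract a constraint of $\calC(T)\setminus\calC(S)$ carrying at least $k-t+1$ boundary variables outside $S$, and then marginalize those variables out via $(t-1)$-wise uniformity (the paper packages the extraction as an ordering/partition in its Claim~\ref{cl:php} and then evaluates the residual sum, whereas you peel one constraint at a time with Claim~\ref{cl:rem-one-cons}). The only cosmetic difference is at the end: the paper computes $Z_S$ and $Z_T$ explicitly using boundary expansion of $H_{\calI}$ itself, while you fix the proportionality constant by noting both marginals are probability vectors, using that same expansion only to guarantee the normalizers are positive.
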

First, we will use this lemma to prove Theorem~\ref{thm:local-cons}.
\begin{proof}[Proof of Theorem~\ref{thm:local-cons}]
Let $S \subseteq T$ be sets of variables with $|T| \leq r$.  Consider $U = \cl(S) \cup \cl(T)$.  We will show that both $\Pi_S$ and $\Pi_T$ are consistent with $U$ and therefore must themselves be consistent.  Observe that $|\cl(S)|$ and $|\cl(T)|$ are at most $\frac{2kr}{\eps}$, so $|U| \leq \frac{4kr}{\eps}$.  We want to apply Lemma~\ref{lem:local-cons}, so we will first show that $|\calC(U)| \leq \frac{8r}{\eps}$.  Assume for a contradiction that $C$ is a subset of $\calC(U)$ of size $\frac{8r}{\eps}$.  Then
\[
\frac{|\Gamma(C)|}{|C|} \leq \frac{|U|}{|C|} = \frac{4kr/\eps}{8r/\eps} = \frac{k}{2} < k -\frac{t}{2} + \frac{\eps}{2},
\]
which violates expansion (Lemma~\ref{lem:exp}).

We know that $H_{\calI} - \cl(T)$ and $H_{\calI} - \cl(S)$ are $(r, k-t+\eps)$-boundary expanding for some $\eps > 0$.  We can then apply Lemma~\ref{lem:local-cons} twice with sets $\cl(S) \subseteq U$ and $\cl(T) \subseteq U$ to see that
\[
\Pi_S(S = \alpha) = \Pi_{\cl(S)}'(S = \alpha) = \Pi_{U}'(S = \alpha) = \Pi'_{\cl(T)}(S = \alpha) =\Pi_T(S = \alpha). \qedhere
\]
\end{proof}
Now we prove Lemma~\ref{lem:local-cons}.
\begin{proof}[Proof of Lemma~\ref{lem:local-cons}]
We follow the proof of Benabbas et al. \cite{BGMT12}.  Let $\calC(T) \setminus \calC(S) = \{C_1,\ldots,C_u\}$ and, for a constraint $C$, let $\sigma(C)$ be the variables in the support of $C$.  First, observe that
\begin{align*}
Z_T \sum_{\substack{\beta \in [q]^T \\ \beta_S = \alpha}} \Pi'_{T}(\beta) &= \sum_{\gamma \in [q]^{T \setminus S}} \prod_{C \in \calC(T)} \mu_C((\alpha, \gamma)) \\
&= \left(\prod_{C \in \calC(S)} \mu_C(\alpha)\right) \sum_{\gamma \in [q]^{S \setminus T}} \prod_{i = 1}^u \mu_{C_i}((\alpha,\gamma)) \\
&= (Z_S \Pi'_S(\alpha))\sum_{\gamma \in [q]^{S \setminus T}} \prod_{i = 1}^u \mu_{C_i}((\alpha,\gamma))
\end{align*}
To finish the proof, we need the following claim.
\begin{claim} \label{cl:php}
There exists an ordering $(C_{i_1},\ldots,C_{i_u})$ of constraints of $\calC(T) \setminus \calC(S)$ and a partition $V_1,\cdots,V_u,V_{u + 1}$ of variables of $T \setminus S$ such that for all $j \leq u$ the following hold.
\begin{enumerate}
\item $V_j \subseteq \sigma(C_{i_j})$.
\item $|V_j| \geq k-t+1$
\item $V_j$ does not intersect $\sigma(C_{i_{l}})$ for any $l > j$.  That is, $V_j \cap \bigcup_{l > j} \sigma(C_{i_l}) = \emptyset$.
\end{enumerate}
\end{claim}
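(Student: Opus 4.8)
The plan is to prove Claim~\ref{cl:php} by a greedy ``peeling'' argument driven entirely by the boundary expansion of $H_{\calI}-S$. Write $W=\calC(T)\setminus\calC(S)=\{C_1,\dots,C_u\}$. Since each $C_i$ satisfies $\sigma(C_i)\subseteq T$ but $\sigma(C_i)\not\subseteq S$, it corresponds to a genuine nonempty hyperedge $\sigma(C_i)\setminus S$ of $H_{\calI}-S$, and $u=|W|\le|\calC(T)|\le r$. For a set $W'\subseteq W$ and $C\in W'$, call a vertex \emph{private to $C$ in $W'$} if it lies in $\sigma(C)\setminus S$ but in no other $\sigma(C')\setminus S$ with $C'\in W'$; write $\mathrm{priv}_{W'}(C)$ for the set of such vertices.

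The first step is the local fact: for every nonempty $W'\subseteq W$, some $C\in W'$ has $|\mathrm{priv}_{W'}(C)|\ge k-t+1$. Indeed, the union over $C\in W'$ of the private vertices is exactly the boundary $\partial W'$ computed in $H_{\calI}-S$, so $\sum_{C\in W'}|\mathrm{priv}_{W'}(C)| = |\partial W'| \ge (k-t+\eps)|W'|$ by the $(r,k-t+\eps)$-boundary expansion of $H_{\calI}-S$, using $|W'|\le r$. Averaging, some $C\in W'$ has $|\mathrm{priv}_{W'}(C)|\ge k-t+\eps$; since this is an integer, $k-t$ is an integer, and $\eps>0$, it is at least $k-t+1$.

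The second step iterates this. Set $W^{(1)}=W$. At step $j=1,\dots,u$, the set $W^{(j)}$ has $u-j+1\le r$ constraints, so by the local fact we may pick $C_{i_j}\in W^{(j)}$ with $|\mathrm{priv}_{W^{(j)}}(C_{i_j})|\ge k-t+1$; put $V_j=\mathrm{priv}_{W^{(j)}}(C_{i_j})$ and $W^{(j+1)}=W^{(j)}\setminus\{C_{i_j}\}$. Finally set $V_{u+1}=(T\setminus S)\setminus\bigcup_{j\le u}V_j$. By construction $W^{(j)}=\{C_{i_j},C_{i_{j+1}},\dots,C_{i_u}\}$, hence $V_j\subseteq\sigma(C_{i_j})$ (property~1), $|V_j|\ge k-t+1$ (property~2), and $V_j$ is disjoint from $\sigma(C_{i_l})\setminus S$ for every $l>j$; since $V_j\cap S=\emptyset$ this gives $V_j\cap\sigma(C_{i_l})=\emptyset$ for every $l>j$ (property~3). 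The $V_j$ are pairwise disjoint (a vertex in $V_j\cap V_{j'}$ with $j<j'$ would lie in $\sigma(C_{i_{j'}})$, contradicting $V_j\cap\sigma(C_{i_{j'}})=\emptyset$) and each lies in $T\setminus S$, so $V_1,\dots,V_{u+1}$ is a partition of $T\setminus S$.

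There is no deep obstacle here; the argument is iterated pigeonhole, matching the ``php'' label. The points requiring care are getting the direction of the ordering right --- at step $j$ the not-yet-removed constraints are precisely those that will appear \emph{after} $C_{i_j}$, which is exactly what property~3 needs --- and the bookkeeping that a private vertex of $C_{i_j}$ in $H_{\calI}-S$ automatically lies outside $S$ and therefore outside $\sigma(C_{i_l})$ entirely. If $H_{\calI}$ is treated as a multi-hypergraph (parallel scopes differing only inside $S$), the identity $\sum_{C\in W'}|\mathrm{priv}_{W'}(C)|=|\partial W'|$ and the expansion hypothesis are both to be read with multiplicity, and the argument is unchanged.
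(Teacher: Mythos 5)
Your proof is correct and follows essentially the same route as the paper's: a greedy peeling argument that repeatedly applies the $(r,k-t+\eps)$-boundary expansion of $H_{\calI}-S$ to the remaining constraints, extracts a constraint with at least $k-t+1$ boundary (``private'') variables, assigns those as $V_j$, and removes the constraint. Your write-up is somewhat more explicit than the paper's on the averaging step and on why $k-t+\eps$ rounds up to $k-t+1$, but the underlying argument is identical.
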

\begin{proof}[Proof of Claim~\ref{cl:php}]
We will find the sets $V_j$ by repeatedly using $(r, k-t+\eps)$-boundary expansion of $H_{\calI} - S$.  Let $Q_1 = \calC(T) \setminus \calC(S)$.  We know that $|Q_1| \leq r$, so boundary expansion of $H_{\calI} - S$ implies that $|\partial(Q_1) \setminus S| \geq (k-t+\eps)|Q_1|$.  There must exist a constraint $C_j \in Q_1$ with at least $k-t+1$ boundary variables in $H_{\calI} - S$; i.e., $|\sigma(C_j) \cap (\partial(Q_1) \setminus S)| \geq k-t+1$.  We then set $V_1 = \sigma(C_j) \cap (\partial(Q_1) \setminus S)$ and $i_1 = j$.  Let $Q_2 = Q_1 \setminus C_j$.  We apply the same process $u-1$ more times until $Q_l$ is empty and then set $V_{u+1} = (T \setminus S) \setminus (\bigcup_{j=1}^u V_j)$.  We remove constraint $C_{i_l}$ at every step and $V_l \subseteq \sigma(C_{i_l})$, so it holds that $V_j \cap \bigcup_{l > j} \sigma(C_{i_l}) = \emptyset$.
\end{proof}
Using the claim, we can write $\sum_{\gamma \in [q]^{S \setminus T}} \prod_{i = 1}^u \mu_{C_i}((\alpha,\gamma))$ as
\[
\sum_{\gamma_{u+1} \in [q]^{V_{u+1}}} \sum_{\gamma_u \in [q]^{V_u}} \mu_{C_u}(\gamma'_u) \sum_{\gamma_{u-1} \in [q]^{V_{u-1}}} \mu_{C_{u-1}}(\gamma'_{u-1}) \cdots  \sum_{\gamma_1 \in [q]^{V_1}} \mu_{C_1}(\gamma'_1),
\]
where each $\gamma'_j$ depends on $\alpha$ and $\gamma_l$ with $l \geq j$ but does not depend on $\gamma_l$ with $l < j$.  We will evaluate this sum from right to left.  We know that each $V_j$ contains at least $k-t+1$ elements, so $(t-1)$-wise uniformity of $\mu$ implies that $\sum_{\gamma_j \in [q]^{V_j}} \mu_{C_j}(\gamma'_j) = q^{-(k-|V_j|)}$.  Applying this repeatedly, we see that
\[
\sum_{\gamma \in [q]^{S \setminus T}} \prod_{i = 1}^u \mu_{C_i}((\alpha,\gamma)) = q^{-(ku - \sum_{j=1}^{u+1} |V_j|)} = q^{|T \setminus S| - k|\calC(T) \setminus \calC(S)|}.
\]
Plugging this quantity into the above calculation, we obtain
\[
Z_T \sum_{\substack{\beta \in [q]^T \\ \beta|_S = \alpha}} \Pi'_{T}(\beta) = Z_S \Pi'_S(\alpha) q^{|T \setminus S| - k|\calC(T) \setminus \calC(S)|}.
\]
Since $H_{\calI}$ has $(r,k-t+\eps)$-boundary expansion for some $\eps > 0$, we can set $S = \emptyset$ to get that $Z_T = q^{|T|-k|\calC(T)|}$. Similarly, $Z_S = q^{|S|-k|\calC(S)|}$.  Plugging these two quantities in completes the proof.
\end{proof}

\section{Equivalence of Sherali-Adams, \SAplus, and static \LSplus tightenings of linear and degree-$k$ relaxations of $\CSP(P)$} \label{sec:diff-encodings}
\begin{customlem}{\ref{lem:diff-encodings}}
Let $r \geq k$ and let $\calI$ be an instance of CSP$(P)$ with binary alphabet.   Then the following statements hold.
\begin{enumerate}
\item $\SAop^r(R_{\calI}) \subseteq \SAop^{r+k+1}(L_{\calI})$ and $\SAop^r(L_{\calI}) \subseteq \SAop^{r+k+1}(R_{\calI})$.
\item $\SAplusop^r(R_{\calI}) \subseteq \SAplusop^{r+k+1}(L_{\calI})$ and $\SAplusop^r(L_{\calI}) \subseteq \SAplusop^{r+k+1}(R_{\calI})$.
\item $\mathrm{StaticLS}_+^r(R_{\calI}) \subseteq \mathrm{StaticLS}_+^{r+k+1}(L_{\calI})$ and $\mathrm{StaticLS}_+^r(L_{\calI}) \subseteq \mathrm{StaticLS}_+^{r+k+1}(R_{\calI})$.
\end{enumerate}
\end{customlem}
\begin{proof}
First, we recall some notation from Section~\ref{sec:prelims}.  Let $P'(x)$ be the unique degree-$k$ polynomial such that $P'(z) = P(z)$ for all $z \in \{0,1\}^k$; assume $P$ and $P'$ depend on all $k$ of their input variables.  Let $F = \{z \in \{0,1\}^k \suchthat P(z)=0\}$.   For $b \in \{0,1\}$, define $a^{(b)}$ so that $a^{(b)}$ is $a$ if $b = 0$ and $1-a$ if $b=1$. For $z \in [0,1]^k$ and $c \in \{0,1\}^k$, we define $z^{(c)} \in [0,1]^k$ so that $(z^{(c)})_i = z_i^{(c_i)}$.  For $f \in \{0,1\}^k$ and $z \in [0,1]^k$, let $P_{f}(z) = \sum_{i = 1}^k z^{(f_i)}$.  Let $(c,S) \in \calI$ be any constraint.  Note that
\begin{equation} \label{eq:predicate-as-sum-of-linear-cons}
P'(x_S^{(c)}) - 1 = \sum_{f \in F} \indic{\{x^{(c)}_S = f\}}(x) \cdot (P_{f}(f)-1) = -\sum_{f \in F} \indic{\{x^{(c)}_S = f\}}(x).
\end{equation}

We give the proof for \SA.  The \SAplus and static \LSplus cases are identical, as constraints in \SAplus and static \LSplus generate exactly the same lifted constraints as in \SA.

First, assume that we have a family of $(r+k+1)$-locally consistent distributions $\{D_S\}$ satisfying
\begin{equation} \label{eq:sa-deg-1-sat}
\E_D[\indic{\{x_T = \alpha\}}(x) \cdot (P_{f}(x^{(c)}_S)-1)] \geq 0
\end{equation}
for all $f \in F$, $T \subseteq [n]$, and $\alpha \in \{0,1\}^{|T|}$ such that $\deg(\indic{\{x_T = \alpha\}}(x) \cdot (P'_{f}(x^{(c)}_S)-1)) \leq r+k+1$.
Fix any $U \subseteq [n]$ and $\beta \in \{0,1\}^{|U|}$ such that $\deg(\indic{\{x_U = \beta\}}(x) \cdot (P'(x_S^{(c)}) - 1)) \leq r$.

We want to show that $\E_D[\indic{\{x_U = \beta\}}(x) \cdot (P'(x_S^{(c)}) - 1)] = 0$.  Using \eqref{eq:predicate-as-sum-of-linear-cons}, it suffices to show that
\[
\E_D[\indic{\{x_U = \beta\}}(x) \cdot \indic{\{x^{(c)}_S = f\}}(x)] = 0
\]
for all $f \in F$.
First, we need to bound the size of $|U|$.  Again using \eqref{eq:predicate-as-sum-of-linear-cons}, we know that
\begin{align*}
\indic{\{x_U = \beta\}}(x) \cdot (P'(x_S^{(c)}) - 1) &= -\indic{\{x_U = \beta\}}(x) \sum_{f \in F} \indic{\{x^{(c)}_S = f\}}(x) \\
&= -\indic{\{x_{U \setminus S} = \beta_{U \setminus S}\}}(x) \sum_{f \in F} \indic{\{x_{S \cap U} = \beta_{S \cap U}\}}(x) \cdot \indic{\{x^{(c)}_S = f\}}(x).
\end{align*}
We have two cases.  If the assignments $x_{S \cap U} = \beta_{S \cap U}$ and $x^{(c)}_S = f$ are inconsistent for all $f \in F$, then $\indic{\{x_U = \beta\}}(x) \cdot \indic{\{x^{(c)}_S = f\}}(x) = 0$ for all $f \in F$ after multilinearization.  Then $\E_D[\indic{\{x_U = \beta\}}(x) \cdot \indic{\{x^{(c)}_S = f\}}(x)] = 0$ for all $f \in F$ and we are done.

Otherwise,
\[
\sum_{f \in F} \indic{\{x_{S \cap U} = \beta_{S \cap U}\}}(x) \cdot \indic{\{x^{(c)}_S = f\}}(x) \ne 0 
\]
and $\deg(\indic{\{x_U = \beta\}}(x) \cdot (P'(x_S^{(c)}) - 1)) \leq r$ implies that $\deg(\indic{\{x_{U \setminus S} = \beta_{U \setminus S}\}}(x)) \leq r$.  Then $|U \setminus S| \leq r$ and $|U \cup S| \leq r+k$.  Since $\indic{\{x_U = \beta\}}(x) \cdot \indic{\{x^{(c)}_S = f\}}(x) \geq 0$ for all $x \in \{0,1\}^n$ and $\indic{\{x_U = \beta\}}(x) \cdot \indic{\{x^{(c)}_S = f\}}(x)$ depends on at most $r+k$ variables, we know that
\begin{equation} \label{eq:pe-lb}
\E_D[\indic{\{x_U = \beta\}}(x) \cdot \indic{\{x^{(c)}_S = f\}}(x)] \geq 0.
\end{equation}
On the other hand, assumption \eqref{eq:sa-deg-1-sat} implies that
\[
-\E_D[\indic{\{x_U = \beta\}}(x) \cdot \indic{\{x^{(c)}_S = f\}}(x)] = \E_D[\indic{\{x_U = \beta\}}(x) \cdot \indic{\{x^{(c)}_S = f\}}(x) \cdot (P_{f}(x^{(c)}_S)-1)] \geq 0
\]
since $\deg(\indic{\{x_U = \beta\}}(x) \cdot \indic{\{x^{(c)}_S = f\}}(x) \cdot (P_{f}(x^{(c)}_S)-1)) \leq r+k+1$.

For the other direction, assume that we have a family of $(r+k+1)$-locally consistent distributions $\{D_S\}$ satisfying
\begin{equation} \label{eq:sa-deg-k-sat}
\E_D[\indic{\{x_T = \alpha\}} \cdot (P'(x_S^{(c)}) - 1)] = 0
\end{equation}
for all $T \subseteq [n]$ and $\alpha \in \{0,1\}^{|T|}$ such that $\deg(\indic{\{x_T = \alpha\}} \cdot (P'(x_S^{(c)}) - 1)) \leq r+k+1$.  Fix any $U \subseteq [n]$ and $\beta \in \{0,1\}^{|U|}$ such that $\deg(\indic{\{x_U = \beta\}}(x) \cdot (P_f(x_S^{(c)}) - 1)) \leq r$.

We want to show that $\E_D[\indic{\{x_U = \beta\}}(x) \cdot (P_f'(x_S^{(c)}) - 1)] \geq 0$.  We will do this by proving that
\[
\E_D[\indic{\{x^{(c)}_S = z\}}(x) \cdot \indic{\{x_U = \beta\}}(x) \cdot (P_{f}(x^{(c)}_S)-1)] \geq 0
\]
for any $z \in \{0,1\}^k$ and then summing over all~$z$.

If $S \subseteq U$ and $\beta$ assigns $S$ to $f$, then $\indic{\{x_U = \beta\}}(x) \cdot (P_f(x_S^{(c)}) - 1) = 0$ after multilinearizing and we are done.  Otherwise, it is easy to see that $|U| \leq r$.

We consider two cases: $z = f$ and $z \ne f$.  In the first case, we can use \eqref{eq:sa-deg-k-sat} and \eqref{eq:predicate-as-sum-of-linear-cons} to see that
\[
\sum_{f \in F} \E_D[\indic{\{x_U = \beta\}}(x) \cdot \indic{\{x^{(c)}_S = f\}}(x)] = 0.
\]
By \eqref{eq:pe-lb}, each term in the sum must be $0$, so we have that
\[
\E_D[\indic{\{x_U = \beta\}}(x) \cdot \indic{\{x^{(c)}_S = f\}}(x) \cdot (P_{f}(x^{(c)}_S)-1)] = -\E_D[\indic{\{x_U = \beta\}}(x) \cdot \indic{\{x^{(c)}_S = z\}}(x)]=0.
\]

When $z \ne f$, \eqref{eq:pe-lb} and the fact that $P_{f}(z)-1 \geq 0$ imply that
\[
\E_D[\indic{\{x_U = \beta\}}(x) \cdot \indic{\{x^{(c)}_S = z\}}(x) \cdot (P_{f}(x^{(c)}_S)-1)] = (P_{f}(z)-1) \cdot \E_D[\indic{\{x_U = \beta\}}(x) \cdot \indic{\{x^{(c)}_S = z\}}(x)] \geq 0. \qedhere
\]
\end{proof}

\section{Correspondence between static \LSplus proof system and SDP relaxation} \label{sec:static-ls}
Say we start with a set of constraints $A = \{g_1(x) \geq 0, g_2(x) \geq 0, \ldots, g_m(x) \geq 0\}$.  Recall that an $r$-round static \LSplus solution is a set of local distributions $\{D_S\}$ satisfying the following conditions.
\begin{enumerate}
\item $\{D_S\}_{S\subseteq [n],\, |S|\le r}$ is $r$-locally consistent. \label{enum:static-ls+-local-cons}
\item $\E_{x \sim D}[\indic{\{x_T = \alpha\}}(x) \cdot g(x)] \geq 0$ for all $g \in A$, $T \subseteq [n]$, and $\alpha \in [q]^{|T|}$ such that $\deg(\indic{\{x_T = \alpha\}}(x) \cdot f(x)) \leq r$. \label{enum:static-ls+-sat} 
\item[3$'$.] $M_{X,\alpha}$ is PSD for all $X \subseteq [n]$ with $|X| \leq r-2$ and all $\alpha \in [q]^{X}$. \manuallabel{enum:static-ls+-psd}{3$'$}
\end{enumerate}

A static \LSplus refutation has the form
\begin{equation} \label{eq:static-lsplus-refutation}
\sum_i \gamma_i \cdot b_i(x) \cdot \indic{\{x_{T_i} = \alpha_i\}}(x) + \sum_j (x_j^2-x_j) h_j(x) = -1,
\end{equation}
where $\gamma_i \geq 0$, $b_i$ is an axiom or the square of an affine function, and the $h_j$'s are arbitrary polynomials.

\begin{proposition} \label{prop:static-ls+-sdp-iff-ref}
The $r$-round static \LSplus SDP is infeasible if and only if a degree-$r$ static \LSplus refutation exists.
\end{proposition}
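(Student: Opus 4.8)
The statement is an instance of semidefinite–programming duality, parallel to the Farkas'-lemma argument for \SA and to the treatment of \LSplus in \cite{Das01}: the $r$-round static \LSplus relaxation is a ``primal'' SDP feasibility problem, and a degree-$r$ static \LSplus refutation is a ``dual'' certificate of its infeasibility. I would prove the two implications separately, doing the easy (weak-duality) direction first.

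\emph{Refutation $\Rightarrow$ infeasibility.} Suppose, for contradiction, that $\{D_S\}$ is a feasible $r$-round static \LSplus solution, and let $\pE[\cdot]$ be the induced degree-$r$ pseudoexpectation, $\pE[p]=\E_{x\sim D}[p(x)]$; by multilinearity $\pE[(x_j^2-x_j)h_j(x)]=0$. Apply $\pE$ to both sides of a putative refutation $\sum_i \gamma_i\, b_i(x)\, \indic{\{x_{T_i}=\alpha_i\}}(x) + \sum_j (x_j^2-x_j)h_j(x) = -1$. The right side gives $-1$. On the left the $(x_j^2-x_j)$ terms vanish; if $b_i=g\in A$ is an axiom the term contributes $\gamma_i\,\pE[\indic{\{x_{T_i}=\alpha_i\}}g]\ge 0$ by the axiom condition; and if $b_i=\eta_i^2$ for an affine $\eta_i=v_0+\sum_{(j,a)}v_{j,a}\indic{\{x_j=a\}}$, then expanding the square with $\indic{\{x_j=a\}}\indic{\{x_{j'}=a'\}}=\indic{\{x_j=a\wedge x_{j'}=a'\}}$ gives $\pE[\indic{\{x_{T_i}=\alpha_i\}}\eta_i^2]=v^{\top}M_{T_i,\alpha_i}v\ge 0$ by condition~(3$'$) (the degree bound on the refutation ensures $|T_i|\le r-2$, so this $M_{T_i,\alpha_i}$ is among those required to be PSD). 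Thus the left side is nonnegative, contradicting $-1<0$; hence the SDP is infeasible.

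\emph{Infeasibility $\Rightarrow$ refutation.} Work in the finite-dimensional space $V_r$ of multilinear polynomials of degree at most $r$, so that $(x_j^2-x_j)h_j$ terms are identified with $0$ and any linear functional on $V_r$ automatically respects the relations $\sum_{\beta:\beta_T=\alpha}\indic{\{x_{T'}=\beta\}}=\indic{\{x_T=\alpha\}}$. Let $\mathcal{C}\subseteq V_r$ be the convex cone generated by the polynomials $\indic{\{x_X=\alpha\}}g$ ($g\in A$, including the trivial axiom $1\ge 0$, term degree $\le r$) and $\indic{\{x_X=\alpha\}}\eta^2$ ($\eta$ affine, term degree $\le r$). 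Using the computation above one checks the standard correspondence: feasible $r$-round static \LSplus solutions $\{D_S\}$ are exactly the linear functionals $L$ on $V_r$ with $L[1]=1$ and $L\ge 0$ on $\mathcal{C}$ (the local-distribution and axiom conditions say $L\ge 0$ on the $\indic{\{x_X=\alpha\}}g$ generators, and condition~(3$'$) says $L\ge 0$ on the $\indic{\{x_X=\alpha\}}\eta^2$ generators via $v^{\top}M_{X,\alpha}v=L[\indic{\{x_X=\alpha\}}\eta^2]$). So ``SDP infeasible'' means no such $L$ exists; separating the point $-1$ from the closed convex cone $\overline{\mathcal{C}}$ then yields $-1\in\overline{\mathcal{C}}$. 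If $\mathcal{C}$ is closed this gives $-1=\sum_i\gamma_i\, b_i\,\indic{\{x_{T_i}=\alpha_i\}}$ in $V_r$ with $\gamma_i\ge 0$ and each $b_i$ an axiom or an affine square; lifting back to genuine polynomials introduces exactly a $\sum_j(x_j^2-x_j)h_j$ correction (and, when $q>2$, the equalities $\sum_a x_{i,a}=1$ are themselves among the axioms), which is precisely a degree-$r$ static \LSplus refutation. To recover these generators from an SDP dual certificate $Z=\bigoplus_{X,\alpha}Z_{X,\alpha}\succeq 0$, spectrally decompose $Z_{X,\alpha}=\sum_\ell v^{X,\ell}(v^{X,\ell})^{\top}$, turning $\langle Z_{X,\alpha},M_{X,\alpha}\rangle$ into $\sum_\ell \indic{\{x_X=\alpha\}}(\eta^{X,\ell})^2$.

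\emph{Main obstacle.} The delicate point is the closedness of $\mathcal{C}$, equivalently the exactness of strong SDP duality: the primal carries the built-in rank deficiency $1=\sum_a\indic{\{x_i=a\}}$, so Slater's condition fails naively, and $\mathcal{C}$ is a sum of a polyhedral cone (finitely many axiom-times-indicator generators, equalities taken with both signs) with finitely many linear images of the PSD cone, and such sums need not in general be closed. I expect handling this — either by passing to the quotient by the rank-deficiency subspace, where dual attainment is restored, or by a direct argument exploiting the very restricted form of the maps $Z\mapsto \indic{\{x_X=\alpha\}}\cdot(\text{quadratic form})$, or by quoting the theorem of the alternative for semidefinite feasibility in the form already used for static \LSplus rank bounds (\cite{GHP02}, \cite{Das01}) — to be the crux; the remaining work (degree counting, the bijection between local distributions and functionals) is routine.
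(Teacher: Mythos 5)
Your first direction is correct and is essentially the paper's argument: apply the pseudoexpectation to the refutation, handle axiom terms by the lifted-axiom condition and squared-affine terms via $\pE[\indic{\{x_{T_i}=\alpha_i\}}\eta_i^2]=v^{\top}M_{T_i,\alpha_i}v\ge 0$, and contradict $-1$.

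The second direction, however, has a genuine gap, and it is exactly the one you flagged yourself: you reduce infeasibility to $-1\in\overline{\mathcal{C}}$ and then need $\mathcal{C}$ (a Minkowski sum of a polyhedral cone with several linear images of PSD cones) to be closed, which is false for such sums in general and which you do not establish — you only list candidate strategies for doing so. As written this is not a proof. The paper avoids the issue entirely by never working with the cone of proofs. Instead it separates in the \emph{moment} space: the set $S=\{\mathcal{M}: A\cdot\mathrm{vec}(\mathcal{M})\ge b\}$ of block moment matrices satisfying the \SA constraints is a \emph{compact} convex set (its entries are probabilities), and by assumption it is disjoint from the closed PSD cone; the separating hyperplane theorem for a compact set versus a closed convex cone, together with self-duality of the PSD cone, then produces a single PSD matrix $C$ with $C\bullet\mathcal{M}<0$ on all of $S$. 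No closure hypothesis on a sum of cones is ever needed. One then adjoins the single linear inequality $C\bullet\mathcal{M}\ge 0$ to the polyhedral system $A\cdot\mathrm{vec}(\mathcal{M})\ge b$, which is now infeasible, and applies the ordinary (polyhedral) Farkas lemma — again no closure issue. Eigendecomposing $C=\sum_\ell\lambda_\ell v_\ell v_\ell^{\top}$ blockwise and substituting indicator polynomials for the entries of $\mathcal{M}$ turns the resulting identity into the desired combination of axiom terms and terms $\indic{\{x_T=\alpha\}}\cdot(\text{affine})^2$. (The paper also first disposes of the case where even the \SA relaxation is infeasible, in which case an \SA refutation — itself a static \LSplus refutation — exists.) So your diagnosis of the crux is right, but the fix is to replace the one-shot conic duality in proof space by this two-step separation, where compactness of the \SA polytope does the work that closedness of $\mathcal{C}$ cannot.
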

\begin{proof}

Using linearity of $\E[\cdot]$, we can write the first two \SA conditions as linear constraints in the variables $\E_D[\indic{\{x_S = \beta\}}(x)]$.  The final constraint requires that the matrices $M_{T,\alpha}$, whose entries are also variables of the form $\E_D[\indic{\{x_S = \beta\}}(x)]$, is PSD.   As mentioned above, we can arrange the matrices $M_{T,\alpha}$ into a block diagonal matrix $\mathcal{M}$ such that $\mathcal{M}$ is PSD if and only if each of the $M_{T,\alpha}$'s are PSD.  Let $d$ be the dimension of $\mathcal{M}$.  We can think of the $r$-round \SA constraints as being linear constraints on the entries of $\mathcal{M}$.  In particular, say these linear \SA constraints have the form $A \cdot \mathrm{vec}(\mathcal{M}) \geq b$, where $\mathrm{vec}(\mathcal{M}) \in \R^{d^2}$ is the vector formed by concatenating the columns of $\mathcal{M}$.  Let $c$ be the number of rows of $A$.  Then we can write the static \LSplus SDP as
\begin{align*}
A \cdot \mathrm{vec}(\mathcal{M}) \geq b \\
\mathcal{M} \succeq 0.
\end{align*}

First, we show that the existence of a degree-$r$ static \LSplus refutation implies that the $r$-round static \LSplus SDP is infeasible.  Assume for a contradiction that there exists a family of local distributions $\{D_S\}$ satisfying the three constraints above.  We will derive a contradiction by applying $E_D[\cdot]$ to each term of \eqref{eq:static-lsplus-refutation}.  Specifically, we will show that if $\deg(b_i(x) \cdot \indic{\{x_{T_i} = \alpha_i\}}(x)) \leq r$, then $\E_D[\gamma_i \cdot b_i(x) \cdot \indic{\{x_{T_i} = \alpha_i\}}(x)] \geq 0$.  Applying $\E_D[\cdot]$ to the left-hand side of \eqref{eq:static-lsplus-refutation} gives value at least $0$, applying $\E_D[\cdot]$ to the right-hand side gives value $-1$, and we obtain a contradiction.  To show that $\E_D[\gamma_i \cdot b_i(x) \cdot \indic{\{x_{T_i} = \alpha_i\}}(x)] \geq 0$, we will consider two cases.
\paragraph{Case 1:}  $b_i$ is an axiom.\\
This case is immediate from Condition~\ref{enum:static-ls+-sat}.

\paragraph{Case 2:} $b_i$ is the square of a linear form.\\
This case follows almost immediately from Condition~\ref{enum:static-ls+-psd}.  Write $b_i(x)$ as follows:
\[
b_i(x) = \left(a_0 + \sum_{u \in [n]} a_u x_u\right)^2 = \sum_{u,v \in [n]} a_u a_v x_u x_v + 2a_0  \sum_{u \in [n]} a_u x_u + a_0^2.
\]
Then we have the following calculation:
\begin{align*}
\E_D[b_i(x) \cdot \indic{\{x_{T_i} = \alpha_i\}}(x)] &= \sum_{u,v \in [n]} a_u a_v \cdot \E_D[x_u x_v \cdot \indic{\{x_{T_i} = \alpha_i\}}] + 2a_0  \sum_{u \in [n]} a_u \cdot \E_D[x_u \cdot \indic{\{x_{T_i} = \alpha_i\}}]  \\
& \qquad \qquad + a_0^2 \cdot \E_D[\indic{\{x_{T_i} = \alpha_i\}}]\\
&= \sum_{u,v \in [n]} a_u a_v \cdot M_{T_i,\alpha_i}((u,1),(v,1))\\
& \qquad \qquad+ 2a_0  \sum_{u \in [n]} a_u \cdot M_{T_i,\alpha_i}((u,1),0)  + a_0^2 \cdot M_{T_i,\alpha_i}(0,0) \\
&= (a')^{\top} (M_{T_i,\alpha_i}) a' \quad \text{where $a'(u,1) = a_u$ and $a'(u,0) = 0$ for all $u \in [n]$ and $a'(0) = a_0$}\\
& \geq 0 \qquad \text{by Condition~\ref{enum:static-ls+-psd}}.
\end{align*}

For the other direction, assume that the $r$-round static \LSplus SDP is infeasible; we want to prove that a refutation \eqref{eq:static-lsplus-refutation} exists.  Assume there exists an \SA pseudoexpectation satisfying Conditions~\ref{enum:static-ls+-local-cons} and \ref{enum:static-ls+-sat}.  Otherwise, we can find an \SA refutation, which is also a valid static \LSplus refutation.
Then the sets $\{\mathcal{M} \in \R^{d \times d} : A \cdot \mathrm{vec}(\mathcal{M}) \geq b\}$ and $\{\mathcal{M} \in \R^{d \times d} : \text{$\mathcal{M}$ is PSD}\}$ are both nonempty, but their intersection is empty.  Let $A \bullet B = \sum_{ij} A_{ij} B_{ij}$.  We will need the following claim.
\begin{claim}
Let $S \subseteq \R^{d \times d}$ be convex, closed, and bounded.  Suppose that for all $X \in S$, $X$ is not PSD.  Then there exists a PSD matrix $C \in \R^{d \times d}$ such that $C \bullet X < 0$ for all $X \in S$.
\end{claim}
\begin{proof}[Proof of Claim]
The claim follows from the following two results.
\begin{theorem}[Separating Hyperplane Theorem]
Let $S, T \subseteq \R^d$ be closed, convex sets such that $S \cap T = \emptyset$ and $S$ is bounded.  Then there exist $a \ne 0$ and $b$ such that
\[
a^{\top} x > b \text{ for all $x \in S$ and } a^{\top} x \leq b \text{ for all $x \in T$.}
\]
\end{theorem}
\begin{lemma}[\textup{e.g., \cite[Lemma~12.4]{Gup11}}]\label{lem:psdness-dot-product}
If $A \bullet B \geq 0$ for all PSD $B$, then $A$ is PSD.
\end{lemma}
Applied to our situation, the Separating Hyperplane Theorem says that there exists $C$ and $\delta$ such that $C \bullet X < \delta$ for all $X \in S$ and $C \bullet X \geq \delta$ for all PSD $X$.  We need to show that we can choose $\delta = 0$.  Applying Lemma~\ref{lem:psdness-dot-product} will then complete the proof.

We know $\delta \leq 0$ because the zero matrix is PSD. It remains to show that we can choose $\delta \geq 0$.  Assume for a contradiction that there exists PSD $X$ such that $C \bullet X < 0$.  We can then scale $X$ by a large enough positive constant to get a PSD matrix $X'$ such that $C \bullet X' < \delta$, a contradiction.
\end{proof}

The claim implies that there is a PSD matrix $C$ such that the set
\[
\{\mathcal{M} \in \R^{d \times d} : A \cdot \mathrm{vec}(\mathcal{M}) \geq b, C \bullet \mathcal{M} \geq 0\}
\]
is empty.  As this set is defined by linear inequalities, we can apply Farkas' Lemma.
\begin{theorem}[Farkas' Lemma]
Let $A \in \R^{m \timesÊn}$ and consider a system of linear inequalities $Ax \geq b$.  Exactly one of the following is true.
\begin{enumerate}
\item There is an $x \in \R^n$ such that $Ax \geq b$.
\item There is a $y \in \R^m$ such that $y \geq 0$, $y^{\top} A = 0$, and $y^{\top} b > 0$.
\end{enumerate}
\end{theorem}
In particular, this implies that there exist $y \in \R^c$ and $z \in \R$ such that $y \geq 0$, $z \geq 0$, and
\begin{equation} \label{eq:farkas}
y^{\top}(A \cdot \mathrm{vec}(\mathcal{M}) - b) + z C \bullet \mathcal{M} < 0
\end{equation}
for all $\mathcal{M} \in \R^{d \times d}$.  Since $C$ is PSD, we can write its eigendecomposition $C = \sum_{\ell} \lambda_{\ell} v_{\ell} v_{\ell}^{\top}$ with $\lambda_{\ell} \geq 0$ for all $\ell$.  Also, recall that $\mathcal{M}$ is block diagonal with blocks $M_{T,\alpha}$.  This block structure induces a corresponding partition of $[d]$.  We can write the vector $v_{\ell} \in \R^d$ as $(v_{\ell,T,\alpha})_{T,\alpha}$ using this partition.  Then the second term of \eqref{eq:farkas} is
\begin{align*}
z C \bullet X &= z \sum_{\ell} \lambda_{\ell} (v_{\ell} v_{\ell}^{\top}) \cdot \mathcal{M} \\
&= z \sum_{\ell} \lambda_{\ell} v_{\ell}^{\top} \mathcal{M} v_{\ell} \\
&= z \sum_{\ell} \lambda_{\ell} \sum_{\substack{|T| \leq r-2 \\ \alpha \in \{0,1\}^T}} v_{\ell,T,\alpha}^{\top} M_{T,\alpha} v_{\ell,T,\alpha} \\
&= z \sum_{\ell} \lambda_{\ell} \sum_{\substack{|T| \leq r-2 \\ \alpha \in \{0,1\}^T}} \sum_{i,j \in [n]} v_{\ell,T,\alpha}(i) v_{\ell,T,\alpha}(j) M_{T,\alpha}(i,j).
\end{align*}
Overall, we get
\[
y^{\top}(A \cdot \mathrm{vec}(M) - b) + z \sum_{\ell} \lambda_{\ell} \sum_{\substack{|T| \leq r-2 \\ \alpha \in \{0,1\}^T}} \sum_{i,j \in [n]} v_{\ell,T,\alpha}(i) v_{\ell,T,\alpha}(j) M_{T,\alpha}(i,j) < 0
\]
for all $\mathcal{M} \in \R^{d \times d}$.  Finally, we substitute in indicator polynomials $\indic{\{x_T = \alpha\}}(x)$ for the entries of $\mathcal{M}$ and scale appropriately to get an \LSplus refutation of the form \eqref{eq:static-lsplus-refutation}.  For the first term, since each row $a_i$ of $A$ each $b_i$ correspond to an \SA constraint $a_i \cdot \mathrm{vec}(M) \geq b_i$, this substitution gives an expression of the form
\[
\sum_i y_i \cdot g_i(x) \cdot \indic{\{x_{T_i} = \alpha_i\}}(x),
\]
where each $g_i(x) \geq 0$ is one of our initial constraints and $y_i \geq 0$.  The second term has the form
\[
z \sum_{\ell} \lambda_{\ell} \sum_{\substack{|T| \leq r-2 \\ \alpha \in \{0,1\}^T}} \left(\sum_{i \in [n]} v_{\ell,T,\alpha}(i) \cdot x_i\right)^2 \indic{\{x_{T_i} = \alpha_i\}}(x). \qedhere
\]
\end{proof}

\end{document}